\newcommand{\N }{\ensuremath{\mathbb N}}
\newcommand{\R }{\ensuremath{\mathbb R}}
\newcommand{\E }{\ensuremath{\mathbb E}}
\newcommand{\eR}{\ensuremath{\overline{\mathbb R}}}
\newcommand{\F } {\ensuremath{\mathbb{F}}}
\def\one{\mathds{1}}
\def\N{\mathds{N}}
\def\R{\mathds{R}}
\def\E{\mathds{E}}
\def\eR{\overline{\mathds{R}}}
\def\F{\mathds{F}}
\def\cB{\mathcal{B}}
\def\cT{\mathcal{T}}
\def\cI{\mathcal{I}}
\def\cH{\mathcal{H}}
\def\cP{\mathcal{P}}
\def\cL{\mathcal{L}}
\def\cF{\mathcal{F}}
\def\cG{\mathcal{G}}
\def\cN{\mathcal{N}}
\def\NA{\mathrm{NA}}
\def\d{\mathrm{d}}
\def\MSP{\textsc{Msp}}
\def\FTAP{\textsc{Ftap}}
\DeclareMathOperator{\clo}{CL}
\DeclareMathOperator{\epi}{epi}
\DeclareMathOperator{\dom}{dom}
\DeclareMathOperator{\supp}{supp}
\DeclareMathOperator{\cl }{cl}
\DeclareMathOperator{\inter}{int}
\DeclareMathOperator{\ri }{ri}
\DeclareMathOperator{\conv}{conv}
\DeclareMathOperator{\cone}{cone}
\DeclareMathOperator{\aff}{aff}
\DeclareMathOperator{\lin}{lin}
\newcommand{\close}[1]{\cl#1}
\newtheorem{theorem}{Theorem}[section]
\newtheorem*{theorem*}{Theorem}
\newtheorem{lemma}[theorem]{Lemma}
\newtheorem*{claim*}{Claim}
\newtheorem{corollary}[theorem]{Corollary}
\theoremstyle{definition}
\newtheorem{definition}[theorem]{Definition}
\newtheorem{assumption}[theorem]{Assumption}
\theoremstyle{remark}
\newtheorem{example}[theorem]{Example}
\newtheorem{remark}[theorem]{Remark}
\title{Robust martingale selection problem and its connections to the no-arbitrage theory}
\date{\small This version: \today}
\author{\textsc{Matteo Burzoni}\\
\vspace{-2.4mm}
{\normalsize Department of Mathematics, ETH Z\"urich\\
\texttt{matteo.burzoni@math.ethz.ch}}\\
\vspace{5mm}
\textsc{Mario \v{S}iki\'c}\\
\vspace{1.0mm}
{\normalsize Center for Finance and Insurance, Universit\"at Z\"urich\\
\texttt{mario.sikic@bf.uzh.ch}}
}
\begin{document}
\maketitle

\begin{abstract}
We analyze the martingale selection problem of \cite{Rok06} in a pointwise (robust) setting. We derive conditions for solvability of this problem and show how it is related to the classical no-arbitrage deliberations. We obtain versions of the Fundamental Theorem of Asset Pricing in examples spanning frictionless markets, models with proportional transaction costs and models for illiquid markets. In all these examples, we also incorporate trading constraints.
\end{abstract}
\textbf{Acknowledgments:} We would like to thank two anonymous referees for their insightful comments and Sven H\"{u}mbs for finding several typos in a previous version of the manuscript. Matteo Burzoni gratefully acknowledges support by the ETH Foundation.

\medskip\noindent
{\footnotesize{ \it Keywords:} Martingale Selection Problem, Arbitrage Theory, Fundamental Theorem of Asset Pricing, Markets with Frictions, Illiquidity.

\medskip\noindent
{\it  JEL subject classification: G12, C61.} 

\noindent
{\it AMS 2010 subject classification.} Primary 91B24, 60G42, 90C15  ; secondary 91G99, 90C39.
}

\section{Introduction}

In discrete-time, the martingale selection problem (\MSP) is stated as follows: Given two adapted families of random sets $V=(V_t)$ and $C=(C_t)$, find a family of pairs $(\xi,Q)$, consisting of an adapted process $\xi=(\xi_t)$ taking values in $V$ and a probability measure $Q$, such that 
$$
  \E_Q\big[\xi_{t+1}-\xi_t\,\big|\,\cF_t\big]\in C_t\qquad Q\mbox{-a.s.}
$$
The pair $(\xi,Q)$ is called a solution to the $\MSP$. When $C\equiv\{0\}$ the problem is asking for a sequence of selections $\xi$ of $V$ and a measure $Q$, such that $\xi$ is a $Q$-martingale. In the present form, the problem was first studied by \cite{Rok06}, where the measure $Q$ is also required to be equivalent to some chosen probability measure. We refer to this as the ``dominated setup''.

Martingale selection problems arise frequently in mathematical finance. The readers familiar with models of markets with frictions, described via solvency cones $K=(K_t)$, will recognize the \MSP\ with $V_t=\ri K_t^\ast$ and $C_t=\{0\}$ as precisely the dual formulation of absence of arbitrage. Indeed, in the literature, the pairs $(\xi,Q)$ are known as consistent price systems. What this observation is suggesting is that the possibility of solving a \MSP, namely ensuring the existence of sufficiently many pairs $(\xi,Q)$, can be related to a no-arbitrage condition for the associated financial market. 

In this paper we show that the connection goes much deeper and can be employed in a great variety of situations. We characterize the solvability of the \MSP\ with an approach closely related to that of~\cite{Rok}, albeit with some modifications. The idea is to identify a family of correspondences $W=(W_t)$, contained in $V$, which satisfies a certain dynamic programming principle. In particular, if we can solve the one-step \MSP\ for $(W_t,W_{t+1})$, then the general solution can be obtained by pasting together the one-step solutions. Since $W$ is contained in $V$, the resulting pairs are also solutions of the \MSP\ for $V$. Our main result is Theorem \ref{thm: main1} and is stated as follows.
\begin{theorem*}
The martingale selection problem $(V,C)$ is solvable if and only if $W_t(\omega)\neq\varnothing$ for all $t\in\cI$ and $\omega\in\Omega$.
\end{theorem*}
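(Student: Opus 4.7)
The plan is to handle the two implications separately, both leveraging the dynamic programming principle attached to $W$. For the ``only if'' direction I would unpack the construction of $W$: by design, $W_t(\omega)$ should be the set of values $x \in V_t(\omega)$ from which the $\MSP$ admits a continuation on the remaining horizon starting at $\omega$. Consequently, any global solution $(\xi,Q)$ produces, for each time $t$, a continuation witnessing $\xi_t(\omega) \in W_t(\omega)$ on its support. The pointwise nature of the framework should make ``solvable'' mean that enough solutions exist to cover every $\omega \in \Omega$, and hence $W_t(\omega) \neq \varnothing$ everywhere.

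For the ``if'' direction I would argue by forward induction, pasting one-step solutions into a global one. Assume $W_t(\omega) \neq \varnothing$ for all $t,\omega$. Pick $\xi_0 \in W_0$; then, given that the partial process $(\xi_0,\ldots,\xi_t)$ takes values in $W_0,\ldots,W_t$, the dynamic programming characterization provides at each $\omega$ a probability kernel $q_t(\omega,\cdot)$ on $\cF_{t+1}$ and a selector $\xi_{t+1}$ of $W_{t+1}$ such that the one-step drift
\[
\int \bigl(\xi_{t+1}(\omega') - \xi_t(\omega)\bigr)\, q_t(\omega, d\omega')
\]
lies in $C_t(\omega)$. Concatenating the kernels $q_0,q_1,\dots$ via a Ionescu--Tulcea construction yields a probability measure $Q$ on the path space, and the adapted process $\xi$ then solves the $\MSP$ for $V \supseteq W$ under $Q$.

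The technical heart of the argument will be the measurable selection step in the induction. One must verify that the correspondences $W_t$ inherit enough measurability from the inductive definition so that the pair $(q_t,\xi_{t+1})$ can be picked in a jointly measurable way depending on $\omega$. Depending on the hypotheses of the full framework, this will rely either on the Kuratowski--Ryll-Nardzewski theorem for Borel-valued correspondences or on a Jankov--von Neumann-type selection for analytic-graph correspondences, together with a check that the DP operator preserves the relevant measurability class at each backward step. A secondary but essential subtlety, absent from the classical dominated setup of \cite{Rok06}, is that in the pointwise setting no exceptional null sets may be discarded: both the nonemptiness of $W_t$ and the one-step solvability must hold at every single $\omega$, which is precisely what powers the forward construction to succeed on all of $\Omega$ rather than only $Q$-almost surely.
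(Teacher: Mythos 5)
There is a genuine gap in your ``only if'' direction, and it sits exactly where the difficulty of the theorem lies. You assert that any (local) solution $(\xi,Q)$ at $\omega$ witnesses $\xi_t(\omega)\in W_t(\omega)$, i.e.\ that $W_t(\omega)$ is ``by design'' the set of values from which a continuation exists. This is false for the iteration \eqref{eq: iteration}: $W_t=V_t\cap(W_{t+1}^\flat-C_t)$ is built from the operator $(\cdot)^\flat$ of \eqref{eq: Uflat}, not from $(\cdot)^\sharp$. A local solution at $\omega$ only gives $\E_Q[\xi_{t+1}|\cF_t](\omega)\in W_{t+1}^\sharp(\omega)$; membership in $W_{t+1}^\flat(\omega)$ requires (Lemma \ref{lem:key lemma}) that for \emph{every} $\bar\omega\in\Sigma_t^\omega$ there be a solution whose support contains $\bar\omega$ and which attains the same conditional expectation. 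Example \ref{ex::flat vs sharp} makes this concrete: on $\Omega=[0,1]$ with $V_1(\omega)=\{\omega\}$, $V_0=\{0\}$, $C_0=\{0\}$, the pair $(\xi\equiv 0,\delta_0)$ is a local solution at $\omega=0$, yet $W_0(0)=\varnothing$; your argument applied at $\omega=0$ would wrongly conclude $W_0(0)\neq\varnothing$. The correct argument must exploit solvability at \emph{all} $\omega$ simultaneously: the paper introduces the convex sets $\cT_t(\omega)$ of values attainable by local solutions at $\omega$ and proves only the weaker inclusion $\ri\cT_t(\omega)\subset W_t(\omega)$, by a backward induction that selects finitely many $\omega^1,\ldots,\omega^p\in\Sigma_t^\omega$ spanning the affine hull of $(\cT_{t+1})^\sharp(\omega)$ and carries out a nontrivial relative-interior computation. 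None of this is present in, or recoverable from, the dynamic-programming slogan you invoke.

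The ``if'' direction is essentially right in outline and matches the paper's Lemma \ref{lem:extension} (pasting one-step solutions; whether one phrases it as forward concatenation of kernels or backward induction on the starting time is immaterial, and Ionescu--Tulcea is overkill since all measures here have finite support). But even there, the step you defer to ``the dynamic programming characterization'' is precisely the content of Lemma \ref{lem:key lemma}: it is the $\flat$-construction that guarantees, for the \emph{designated} $\bar\omega$ at which Definition \ref{def:solvability} demands a local solution, a one-step measure charging $\bar\omega$ together with a selection of $W_{t+1}$ (not merely of $V_{t+1}$), so that the induction can continue. Finally, the measurability machinery you anticipate (Kuratowski--Ryll-Nardzewski, Jankov--von Neumann) is largely moot in this framework: the filtration is completed so that $\cF_t$-measurability amounts to constancy on the atoms $\Sigma_t^\omega$, and selections need only be specified on finitely many atoms and extended arbitrarily elsewhere.
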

\noindent
The construction of correspondences $W$ is based on a backward iteration in the spirit of \cite{Rok}. However, the one considered in that paper is applicable only in the case when $V$ is open valued. We show how to suitably modify it; see Section~\ref{Section: MSP}, equation~\eqref{eq: iteration} and equation~\eqref{eq: Uflat}.

We work in a finite discrete-time pointwise setting. In particular, we do not assume the existence of any probabilistic description of the market and all the statements on random objects are meant to hold for any $\omega\in\Omega$. As a consequence, the set of probability measures $\cP$ for which we solve the \MSP\, is naturally chosen as the set of all finite support probability measures. This turns out to have significant advantages in terms of establishing measurability of the involved correspondences. 

\paragraph{On the fundamental theorem of asset pricing.}
In the dominated setup, a standard approach for showing the fundamental theorem of asset pricing (\FTAP) is functional analytic. The key step is to prove that the no-arbitrage condition implies that the set of superhedgeable claims is closed with respect to an appropriate topology. Once this is achieved, the \FTAP\ becomes a statement about the polar of this set being non-empty. This approach has been introduced in \cite{Sch94} in the case of a frictionless market model and it has been employed in the case of markets with proportional transaction costs; see \citep{K99,KRS02,KRS03,S04}. The most general formulation is that of currency markets initiated in \cite{K99} and based on the so-called solvency cones, whose main role is to determine the self-financing condition for trading strategies. Several different notions of arbitrage have been considered and corresponding version of \FTAP\ have been provided, possibly, under some additional technical conditions such as the ``efficient friction'' hypothesis (transaction costs are always non-trivial). We refer to \citep{KRS03} for an overview.

More recently, non-linear frictions, such as illiquidity effects, have been considered in the literature. Generalization of the currency market model to convex solvency region is given by \cite{AT07} on a finite probability space, and by \cite{PP10} on a general probability space. Both papers deal with claims with physical delivery. A continuous-time cash-delivery model is considered in \cite{CR12}, where price per unit for buying a certain number of shares of a risky asset is given by a so-called supply curve. In the same spirit, a general discrete-time model has been considered by \cite{P11} where a cost process, which depends on the traded volume, describes the cost of trading.

In the non-dominated setup the functional analytic approach is not universally applicable. The \FTAP\ needs to be obtained directly, using dynamic programming and measurable selection technology. An argument along those lines was first proposed in \cite{JS98} in the framework of frictionless markets. The same argument was successfully applied in the quasi-sure setup of \cite{BN15,BN16}. This setup has also been adopted in \cite{BZ17} for the case of a frictionless market with portfolio constraints and in \cite{BZ16} where, by using dynamic programming techniques close to the one used here, a \FTAP\ is shown for a market model with proportional transaction costs and under the efficient friction hypothesis. In a discrete-time setting with no probability measures, arbitrage theory has been investigated in \citep{ABPS16,BRS17,BFHMO16,CKT16,Ri15} for frictionless markets and in \citep{Bu16,BCKT17,DS14} for proportional transaction costs. In the latter group of papers some additional assumptions are taken: the existence of a cash account for the first one, constant transaction costs for the latter two. To the best of our knowledge a general theory for markets with frictions, in this pointwise setting, has not been established yet and it is addressed in this paper.

We tackle the problem as follows. As observed above, the \MSP\ can be interpreted as the dual problem to the existence of arbitrage strategies in models of financial markets. We ask whether the dual problem admits solutions; If not, we show how to construct an arbitrage strategy by convex analysis arguments. 

\paragraph{Overview of the paper.}
In Section 2 we define the setup and motivate it with examples from financial mathematics that we will treat in more detail in the last part of the paper. Section 3 is dedicated to the definition of certain projections of a correspondence and to the construction of the relevant objects for solving the $\MSP$. In Section 4 we define formally the \MSP\ and derive the main result. Finally, in Section 5 we show how \MSP\ is related to the no-arbitrage theory in markets with no frictions, with proportional transaction costs, and with convex frictions. We study the models separately, in an increasing level of complexity.

\subsection{Notation used in the paper}\label{notation}
For a given set $A\subset\R^d$, we denote by $\close{A}$, $\inter A$, $\ri A$, $\conv A$, $\cone A$, $\aff A$ and $\lin A$, the closure, the interior, the relative interior, the convex hull, the conical hull, the affine hull and the linear hull  of $A$. Scalar product on $\R^d$ is denoted by $\langle\cdot,\cdot\rangle$. For a cone $A$, we let $A^\ast$ be the (positive) polar of $A$, defined by
$$
  A^\ast\coloneqq\{y\in\R^d\mid\langle y,x\rangle\geq 0\ \ \forall x\in A\}.
$$ 
A map $U$ defined on a state space $\Omega$ and taking values in the power set of $\R^d$ is called a correspondence. We denote that a map is a correspondence by writing $U\colon\Omega\rightrightarrows\R^d$. The domain of $U$ is denoted by $\dom U\coloneqq\{\omega\in\Omega\mid U(\omega)\neq\varnothing\}$. 

Let $\cG$ be a sigma-algebra on $\Omega$ and $U\colon\Omega\rightrightarrows\R^d$ be a correspondence. We say that $U$ is $\cG$ measurable if $\{\omega\,|\,U(\omega)\cap O\not=\varnothing\}\in\cG$ for all open sets $O\subset\R^d$. We denote by $\cL(\cG;U)$ the set of all $\cG$-measurable selections of $U$.

Extended real numbers are denoted by $\eR\coloneqq[-\infty,+\infty]$. Let $f\colon\Omega\times \R^d\rightarrow \eR$ be a function. We denote the domain and the epigraph correspondences by
\begin{align*}
 \dom f(\omega) &\coloneqq \big\{x\in\R^d\mid f(\omega,x)<\infty \big\},\\
 \epi f(\omega) &\coloneqq \big\{(x,\alpha)\in\R^d\times\R\mid f(\omega,x)\leq \alpha \big\}.
\end{align*}

By $\cP$ we denote the set of all finite support probability measures on $\Omega$ and by $\cP(\omega)$ we denote the set of all $Q\in\cP$ such that $\omega\in\supp Q$; by $\supp Q$ we denote the support of the measure $Q$, i.e. the smallest closed set with full $Q$ measure.

\section{Setup}\label{Section: setup}

Let $(\Omega,\cB(\Omega))$ be a Polish space endowed with its Borel sigma-algebra. Let $\cI\coloneqq\{0,\ldots, T\}$ with $T\in\N$. Let $(E,\cB(E))$ be a separable metric space endowed with its Borel sigma-algebra and let $(\psi_t)_{t\in\cI}$ be a set of Borel maps
$$
  \psi_t\colon\Omega\longrightarrow E.
$$
We assume that the mapping $\psi_0$ is constant, i.e. there exists a $y\in E$, such that $\psi_0(\omega)=y$ for each $\omega\in\Omega$.

We define a filtration $\widehat{\F}=(\widehat{\cF}_t)_{t\in\cI}$ on $\Omega$: for any $t\in\cI$ we denote by $\widehat\cF_t$ the sigma-algebra generated by maps $\psi_s$ as follows
$$
  \widehat\cF_t = \sigma\big(\{\psi^{-1}_s(A)\mid s\leq t,\ A\in\cB(E)\}\big).
$$
This is the `natural filtration' generated by the process $(\psi_t)$. We will work with the larger filtration $\F\coloneqq(\cF_t)_{t\in\cI}$ given by
$$
  \cF_t =\bigcap_{P\in\cP}\widehat\cF_t\vee\cN^P_t,
  \qquad 
  \cN^P_t\coloneqq\big\{A\subset A'\in\widehat\cF_t \,\big|\, P(A')=0\big\}.
$$
This is a technical condition used to ensure the measurability of the correspondences of Section \ref{sec:proj} below.
Note that the assumption on the map $\psi_0$ implies that $\cF_0=\{\varnothing,\Omega\}$.

\begin{remark}
  We can define maps $\Psi_t\colon\Omega\rightarrow E^{t+1}$ by
  \begin{align}\label{eq::Psi_t}
    \Psi_t(\omega)=\big(\psi_0(\omega),\ldots,\psi_t(\omega)\big).
  \end{align}
  Since $E$ is a separable metric space, also $\cB(E^{t+1})=\bigotimes_{s=0}^t\cB(E)$. Hence, we may write $\widehat\cF_t = \sigma(\{\Psi^{-1}_s(A)\mid A\in\cB(E^{t+1})\})$.
\end{remark}

The following Lemma identifies the `atoms' of sigma algebras $\cF_t$, defined above, and shows their measurability. Since $\cF_t$ is the $\cP$-completion of $\widehat\cF_t$, it implies that we obtain an $\cF_t$-measurable object by simply specifying its value on every atom. In particular, this will allow us to modify any $\cF_t$ measurable object only on one particular atom and preserve measurability.
\begin{lemma}\label{lem:measurability of Sigma}
The set $\Sigma_t^\omega\subset\Omega$, defined by
\begin{align}\label{eq: level_set}
\Sigma_t^\omega
\coloneqq\big\{
\bar\omega\in\Omega
\,\big|\, 
\psi_s(\bar{\omega}) = \psi_s(\omega),\  \forall s\leq t
\big\},
\end{align}
is Borel measurable for every $\omega\in\Omega$ and $t\in \cI$; in particular $\Sigma^\omega_t\in\cF_t$ for all $\omega$.
\end{lemma}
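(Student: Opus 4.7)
The plan is to express $\Sigma_t^\omega$ as the preimage of a singleton under the joint map $\Psi_t$ defined in \eqref{eq::Psi_t}, and then exploit that singletons are Borel in a metric space.

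First I would observe that by the very definition of $\Psi_t$,
$$
\Sigma_t^\omega = \Psi_t^{-1}\bigl(\{\Psi_t(\omega)\}\bigr).
$$
Since $(E,\cB(E))$ is a separable metric space, so is the product $(E^{t+1},\cB(E^{t+1}))$, and singletons in a metric space are closed, hence Borel. Each coordinate map $\psi_s$ is Borel, so $\Psi_t$ is Borel measurable, and therefore $\Sigma_t^\omega$ is the preimage of a Borel set under a Borel map and is itself an element of $\cB(\Omega)$.

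Next, for the second assertion, I would invoke the identity in the remark preceding the lemma: $\widehat{\cF}_t=\sigma(\Psi_t)$, which uses the fact that separability of $E$ yields $\cB(E^{t+1})=\bigotimes_{s=0}^{t}\cB(E)$. Since $\{\Psi_t(\omega)\}\in\cB(E^{t+1})$, we obtain $\Sigma_t^\omega\in\widehat{\cF}_t$, and a fortiori $\Sigma_t^\omega\in\cF_t$ because $\cF_t\supseteq\widehat{\cF}_t$ by construction.

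There is really no main obstacle here; the statement is essentially a bookkeeping observation. The only point one must be careful about is the appeal to separability of $E$ to identify $\cB(E^{t+1})$ with the product sigma-algebra, so that the preimage description under $\Psi_t$ stays inside the natural filtration $\widehat{\cF}_t$ rather than just in $\cB(\Omega)$. Once this is noted, both claims follow in one line.
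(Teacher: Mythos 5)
Your proof is correct and follows essentially the same route as the paper: the authors also write $\Sigma_t^\omega=\Psi_t^{-1}(\tau)$ with $\tau=\Psi_t(\omega)$ a singleton (hence closed) in $E^{t+1}$ and note that $\Sigma_t^\omega\in\widehat{\cF}_t$ already by the definition of $\widehat{\cF}_t$. Your extra remark about separability of $E$ identifying $\cB(E^{t+1})$ with the product sigma-algebra is exactly the point the paper makes in the remark preceding the lemma.
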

\begin{proof}
Simply observe that $\Sigma_t^\omega=\Psi_t^{-1}(\tau)$, where $\Psi_t$ is a Borel mapping and $\tau=\Psi_t(\omega)$ is a singleton, thus a closed subset of $E^{t+1}$. The second statement is clear since already $\Sigma^\omega_t\in\widehat\cF_t$ by the definition of $\widehat\cF_t$. 
\end{proof}

\paragraph{Some classical examples in financial mathematics.}
For the convenience of the reader, we anticipate the type of applications that we have in mind. The following are examples of maps $\psi_t$, which collects the most typical models of discrete-time financial markets studied in the literature.

\begin{example}\label{ex:1}
  Assume that we are modeling the frictionless market, given by the stock price process $(S_t)$. Then one can take $E = \R^d$, where $d$ is the number of risky assets, and $\psi_t(\omega)=S_t(\omega)$. This is why we call $\widehat{\F}$ the natural filtration.
\end{example}

\begin{example}
  Similarly to above, one can model the market with a single risky asset by a pair of processes: the bid and ask price process. Those we denote by $(\underline{S}_t,\overline{S}_t)$. Then one takes $E = \R^2$ and $\psi_t(\omega)=(\underline{S}_t(\omega),\overline{S}_t(\omega))$.
\end{example}


\begin{example}
  The basic model of proportional transaction costs, the Kabanov's model of currency markets, is given by a family of closed, convex solvency cones $(K_t)_{t\in\cI}$. The set $K_t(\omega)$ represents the set of solvent positions at time $t$ if $\omega$ occurs. If the correspondence $K_t\colon\Omega\rightrightarrows\R^d$ is Borel measurable, then it is also a Borel measurable map with values in the metric space $\clo(\R^d)$, of closed subsets of $\R^d$, with its Borel sigma algebra; see~\citep{R}, chapters~4 and~14. Now we may define $\psi_t=K_t$ and the filtration $\F$ as above.
\end{example}
\begin{remark}
Throughout the paper all the characterization results are given using finite support probability measures.
In particular, this is the case for the versions of the FTAP of Theorem \ref{thm:FTAPfrictionless}, Theorem \ref{thm: FTAP Kabanov} and Theorem \ref{thm: pennanen}.
To get an intuition why this is enough one can think in terms of a frictionless market as in Example \ref{ex:1}.
Essentially, the existence of an arbitrage opportunity reduces to the question of whether $0$ can be separated from the increments of $(S_t)$ by an hyperplane.
In a pointwise framework this is a question regarding only the geometry of the price process rather than the support of the desired martingale measures. 
   
\end{remark}

\section{Projections of measurable correspondences}\label{sec:proj}
Analyzing the martingale selection theorem will follow dynamic programming ideas. Therefore, the first step is to have an object that generalizes the `conditional support' of a correspondence to this, pointwise, setting.
Let $X$ be an $\cF_{t+1}$-measurable random variable and $t\in\cI\backslash\{T\}$.
In the classical case a reference measure $P$ is given and the conditional support of $X$, given $\cF_t$, is the smallest closed valued, $\cF_t$-measurable, correspondence $A$ such that $X$ is a selection of $A$ $P$-a.s. 
The existence of such a conditional support and that of regular versions of the conditional distribution are instrumental for the approach of~\citep{Rok06,Rok}.
In this pointwise setting one generalizes the conditional support as follows: 
\begin{align}\label{eq:essinf}
  X_t(\omega)
  \coloneqq
  \big\{\E_Q[X|\cF_t](\omega)\,\big|\, Q\in\cP(\omega)\big\}.
\end{align}
The correspondence $X_t\colon\Omega\rightrightarrows\R^d$ is well defined by definition of $\cP(\omega)$ (see Section \ref{notation}); indeed, $\{\omega\}$ is an atom for every $Q\in\cP(\omega)$. Moreover, by the choice of $\cF_t$, it is also measurable. Clearly, $X(\omega)\in X_t(\omega)$ for every $\omega\in\Omega$.

\begin{lemma}\label{lem:support correspondence}
  Let $t\in\cI$ and let $U\colon\Omega\rightrightarrows\R^d$ be an $\cF_{t+1}$-measurable correspondence. Then the correspondence
\begin{align*}
  U^{\sharp}(\omega)
  \coloneqq
  \conv\big(\big\{x\in\R^d\,\big|\,x\in U(\bar\omega) \, \text{ for some }\, \bar{\omega}\in\Sigma_{t}^\omega\big\}\big)
\end{align*}
is $\cF_t$-measurable and convex valued.
\end{lemma}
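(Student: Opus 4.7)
The convex-valuedness of $U^\sharp$ is immediate from the definition, so the substantive content is the $\cF_t$-measurability.

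The first observation is that $\Sigma_t^{\bar\omega}=\Sigma_t^\omega$ whenever $\bar\omega\in\Sigma_t^\omega$, so $U^\sharp(\omega)$ depends only on the $\widehat\cF_t$-atom of $\omega$, equivalently only on $\Psi_t(\omega)$. It therefore suffices to show, for each open $O\subset\R^d$, that the lower inverse $\{\omega\mid U^\sharp(\omega)\cap O\neq\varnothing\}$ belongs to $\cF_t$.

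Next, introduce the union correspondence $V(\omega)\coloneqq\bigcup_{\bar\omega\in\Sigma_t^\omega}U(\bar\omega)$, so that $U^\sharp=\conv V$. By Carathéodory's theorem together with the openness of $O$,
\[
\{\omega\mid U^\sharp(\omega)\cap O\neq\varnothing\} = \Bigl\{\omega\,\Big|\,\exists\,\bar\omega_1,\ldots,\bar\omega_{d+1}\in\Sigma_t^\omega,\ x_i\in U(\bar\omega_i),\ \lambda\in\Delta_d\ \text{with}\ \textstyle\sum_i\lambda_i x_i\in O\Bigr\}.
\]
Using $\Sigma_t^\omega=\Psi_t^{-1}(\Psi_t(\omega))$ (Borel) together with the $\cF_{t+1}$-measurability of $U$ and standard measurable-selection results giving appropriate graph measurability, the right-hand side is the projection onto $\Omega$ of a set built from Borel conditions on $\Psi_t$, the open condition $\sum_i\lambda_i x_i\in O$, and $\gph U$.

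The main obstacle is that such projections typically escape the Borel $\sigma$-algebra; this is precisely the reason for the construction of $\cF_t$ as the intersection, over $P\in\cP$, of the $P$-completions of $\widehat\cF_t$. The projection in question is analytic in $\Omega$, hence universally measurable in the Polish space $\Omega$, and therefore belongs to $\widehat\cF_t\vee\cN_t^P$ for every $P\in\cP$; hence it lies in $\cF_t$, yielding the claimed $\cF_t$-measurability of $U^\sharp$.
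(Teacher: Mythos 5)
Your opening observation --- that $U^\sharp(\omega)$ depends only on $\Sigma_t^\omega$, i.e.\ only on the $\widehat\cF_t$-atom of $\omega$ --- is in fact the \emph{entire} proof in the paper, and it already suffices: since $\cP$ consists of finite support measures, \emph{every} union of atoms belongs to $\cF_t$. Indeed, if $B$ is a union of atoms and $P\in\cP$ has support $\{\omega_1,\dots,\omega_n\}$, set $A=\bigcup\{\Sigma_t^{\omega_i}\mid \omega_i\in B\}\in\widehat\cF_t$; then $A\subset B$ and $B\setminus A$ is contained in the $\widehat\cF_t$-measurable $P$-null set $\Omega\setminus\bigcup_{i=1}^n\Sigma_t^{\omega_i}$, so $B\in\widehat\cF_t\vee\cN_t^P$ for every $P$, hence $B\in\cF_t$. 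Applying this to each lower inverse $\{\omega\mid U^\sharp(\omega)\cap O\neq\varnothing\}$, which is a union of atoms because $U^\sharp$ is constant on atoms, finishes the argument. No Carath\'eodory decomposition, graph measurability, or projection is needed.

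The detour you take instead contains a genuine error at its final step. From ``the projection is analytic, hence universally measurable'' you conclude that it belongs to $\widehat\cF_t\vee\cN_t^P$ for every $P\in\cP$. Universal measurability places a set in the $P$-completion of the Borel $\sigma$-algebra $\cB(\Omega)$, \emph{not} in the $P$-completion of the sub-$\sigma$-algebra $\widehat\cF_t=\sigma(\psi_s,\,s\leq t)$. These are very different: for $P=\delta_{\omega_0}$, a set $B$ lies in $\widehat\cF_t\vee\cN_t^{\delta_{\omega_0}}$ if and only if $B\cap\Sigma_t^{\omega_0}$ is either empty or all of $\Sigma_t^{\omega_0}$, so a perfectly nice Borel (a fortiori analytic, universally measurable) set that splits the atom $\Sigma_t^{\omega_0}$ is not in $\widehat\cF_t\vee\cN_t^{\delta_{\omega_0}}$. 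What actually rescues your set is not its analyticity but the fact that it is a union of atoms --- the property you established in your first paragraph and then set aside. As written, the proof's load-bearing step is a non sequitur, even though the statement it is meant to establish is true and follows from your own first observation.
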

\begin{proof}
Obvious, since $U^{\sharp}(\omega)$ depends only on $\Sigma_t^\omega$.
\end{proof}

\begin{remark}
  We are assuming neither closed nor convex values of the correspondence $U$ in the lemma above. Since we want, ultimately, that our correspondences have convex values, we define $U^\sharp$ immediately as such. Let us rewrite it as follows
\begin{align}\label{eq:sharp}
  U^\sharp(\omega)=\conv\bigcup_{\bar\omega\in\Sigma_t^\omega}U(\bar\omega).
\end{align}
This makes it clear that the convex hull operation is necessary. Note, however, that $U^\sharp$ need not to be closed. Moreover, one can write it as
\begin{align}\label{rmk: char U sharp}
   U^\sharp(\omega) = \big\{y
   \,\big|\,
   \exists\bar\omega\in\Sigma_t^\omega, \, \xi\in \mathcal{L}(\cF_{t+1};U),\ Q\in \mathcal{P}(\bar\omega)\colon  \E_Q[\xi|\cF_t](\omega)=y \big\};
\end{align}
this is a direct consequence of the Carath\'eodory's theorem on convex hulls.
\end{remark}

The representation \eqref{rmk: char U sharp} highlights the connection between $U^\sharp$ and the classical notion of conditional expectation of a correspondence which is defined through the conditional expectation of its selections; see~\citep{Mol}.
For any measure $P\in\cP$, the conditional expectation of any selection of $U$ will be a selection of $U^\sharp$. 
However, in general, $U^\sharp$ will be a larger set as it includes the $P$-conditional expectation of $U$ with respect to \emph{any} $P\in\cP$ (on the support of the measure).

The correspondence $U^\sharp$ is too big for our needs; this will become evident in Section~\ref{Section: MSP}. We, therefore, define a smaller set-valued map
\begin{align}\label{eq: Uflat}
  U^\flat(\omega) 
  \coloneqq
  \bigcap_{\bar\omega\in\Sigma^\omega_t}\bigcup_{\lambda\in(0,1)}\big[\lambda U(\bar{\omega})+(1-\lambda) U^\sharp(\omega)\big].
\end{align}
\begin{lemma}\label{lem:regular support correspondence}
  Let $t\in\cI\backslash\{T\}$ and let $U\colon\Omega\rightrightarrows\R^d$ be a convex valued $\cF_{t+1}$-measurable correspondence. Then $U^\flat(\omega)$ is convex valued and $\cF_t$-measurable.
\end{lemma}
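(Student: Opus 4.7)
I would split the proof into convexity and measurability, as they require different arguments.

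For convexity, my plan is to show that for each $\bar\omega \in \Sigma_t^\omega$ the set $A_{\bar\omega}(\omega) \coloneqq \bigcup_{\lambda\in(0,1)}[\lambda U(\bar\omega)+(1-\lambda)U^\sharp(\omega)]$ is convex, whence $U^\flat(\omega)$ is automatically convex as an intersection of convex sets. The convexity of each $A_{\bar\omega}(\omega)$ would follow from a direct barycentric calculation: given $z_i = \lambda_i u_i + (1-\lambda_i)v_i$ ($i=1,2$) with $u_i\in U(\bar\omega)$, $v_i\in U^\sharp(\omega)$, $\lambda_i\in(0,1)$, and a weight $t\in(0,1)$, I set $\bar\lambda \coloneqq t\lambda_1 + (1-t)\lambda_2 \in(0,1)$ and rewrite $tz_1 + (1-t)z_2 = \bar\lambda\, \bar u + (1-\bar\lambda)\, \bar v$, where $\bar u$ (resp.~$\bar v$) is a convex combination of $u_1, u_2$ (resp.~$v_1, v_2$) and therefore lies in $U(\bar\omega)$ (resp.~$U^\sharp(\omega)$) by convexity---the latter being supplied by Lemma~\ref{lem:support correspondence}.

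For measurability, my key observation would be that the right-hand side of~\eqref{eq: Uflat} depends on $\omega$ only through the atom $\Sigma_t^\omega$: the index set of the intersection is this atom; the family $\{U(\bar\omega) : \bar\omega\in\Sigma_t^\omega\}$ is atom-determined by construction; and $U^\sharp(\omega)$ is atom-constant by Lemma~\ref{lem:support correspondence}. Hence $U^\flat$ is constant on atoms of $\widehat\cF_t$. To promote this atom-constancy to genuine $\cF_t$-measurability, I would introduce an $\cF_{t+1}$-measurable Castaing representation $U(\bar\omega)=\cl\{f_n(\bar\omega):n\in\N\}$ and use it, together with the convexity of $U^\sharp(\omega)$, to replace the a priori uncountable intersection $\bigcap_{\bar\omega\in\Sigma_t^\omega}$ by one indexed by countably many $\cF_{t+1}$-measurable selections; the universal-completion structure of $\cF_t$ (intersection over all finite-support measures) is what absorbs the null-set adjustments this reduction requires.

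The main obstacle, I expect, is precisely this countable reduction. A union of atoms $\Sigma_t^\omega$ need not be $\widehat\cF_t$-measurable, and the inner union over $\lambda\in(0,1)$ produces a non-closed correspondence, so standard closed-valued techniques do not apply verbatim. The convexity established in the first part becomes essential here, as it allows convex combinations of the countable Castaing family to faithfully recover the relevant slices $A_{\bar\omega}(\omega)$, reducing matters to measurable-selection arguments in the spirit of~\citep{R}.
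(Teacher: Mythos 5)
Your convexity argument is correct and is essentially the paper's: each slice $\widehat U(\bar\omega)\coloneqq\bigcup_{\lambda\in(0,1)}[\lambda U(\bar{\omega})+(1-\lambda) U^\sharp(\omega)]$ is convex (your barycentric computation, which uses convexity of $U(\bar\omega)$ from the hypothesis and of $U^\sharp(\omega)$ from Lemma~\ref{lem:support correspondence}), and $U^\flat(\omega)$ is then an intersection of convex sets.

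The measurability half has a gap: the Castaing-representation reduction is exactly the step you flag as the ``main obstacle'', and as sketched it does not go through. Since $U$ is only convex valued, not closed valued, a representation $U(\bar\omega)=\cl\{f_n(\bar\omega)\}$ recovers only $\cl U$; and the whole point of the $(\cdot)^\flat$ operation is its sensitivity to relative-boundary points (cf.\ Example~\ref{ex::flat vs sharp}), so passing to closures or to countable dense subfamilies can genuinely change $U^\flat$. The detour is also unnecessary, for either of two reasons. First, your opening observation already suffices: $U^\flat(\omega)$ depends on $\omega$ only through $\Sigma_t^\omega$, so $\{\omega\mid U^\flat(\omega)\cap O\neq\varnothing\}$ is a union of atoms, and in this paper's filtration \emph{every} union of atoms lies in $\cF_t$ --- for a finite-support $P$ such a set differs from the finite Borel union of those $\Sigma_t^{\omega_i}$, $\omega_i\in\supp P$, that it contains only by a $P$-null set. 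This is precisely the ``obvious'' argument behind Lemma~\ref{lem:support correspondence} and the reason $\cF_t$ is defined as a universal completion; no countable reduction is needed. Second, the paper's own route avoids atoms altogether: from $\ri U^\sharp(\omega)\subset \widehat U(\bar\omega)\subset U^\sharp(\omega)$ (the left inclusion being the standard fact that any point of $\ri C$ is a proper convex combination of an arbitrary point of $C$ with another point of $C$) one obtains the sandwich $\ri U^\sharp(\omega)\subset U^\flat(\omega)\subset U^\sharp(\omega)$ of~\eqref{eq:inclusions}, hence $\cl U^\flat=\cl U^\sharp$; since measurability of a correspondence depends only on its closure (Lemma~18.3 in \citep{Aliprantis}) and $U^\sharp$ is $\cF_t$-measurable, so is $U^\flat$. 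Either repair is short, but as written your proposal leaves the decisive step unproved and points it toward a technique that would need substantial modification to work here.
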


\begin{proof}
  For $\bar\omega\in\Sigma^\omega_t$ we define the set 
  $$
    \widehat U(\bar\omega)\coloneqq\bigcup_{\lambda\in(0,1)}\big[\lambda U(\bar{\omega})+(1-\lambda) U^\sharp(\omega)\big].
  $$
  To show that the correspondence $U^\flat$ is convex valued, it is enough to observe that the set $\widehat U(\bar\omega)$ is convex for each $\bar\omega$. Since $\ri U^\sharp(\omega)\subset \widehat U(\bar\omega)\subset U^\sharp(\omega)$ for every $\bar\omega\in\Sigma^\omega_t$, it follows directly that $\ri U^\sharp(\omega)\subset U^\flat(\omega)\subset U^\sharp(\omega)$. Hence, $U^\flat$ is measurable by Lemma~18.3 in \citep{Aliprantis}. 
\end{proof}

\begin{remark}
We will use the $(\cdot)^\sharp$ and $(\cdot)^\flat$ operators, on elements of adapted sequences of correspondences $W$. Thus, measurability of the resulting correspondence will be clear: $W^\sharp_{t+1}$ is $\cF_t$ measurable, since $W_{t+1}$ is $\cF_{t+1}$ measurable; the same for $W^\flat_{t+1}$.
\end{remark}

We showed in the proof of the above lemma that 
\begin{align}\label{eq:inclusions}
  \ri U^\sharp(\omega)\subseteq U^\flat(\omega)\subseteq U^\sharp(\omega),
\end{align}
i.e. $U^\flat$ is obtained from $U^\sharp$ by, possibly, omission of some points from the relative boundary of $U^\sharp$. The significance of this construction is contained in the following statement. 

\begin{lemma}\label{lem:key lemma}
  Let $t\in\cI\backslash\{T\}$ and let $U\colon\Omega\rightrightarrows\R^d$ be a convex valued $\cF_{t+1}$-measurable correspondence. Then
   $$
    U^\flat(\omega) = \big\{y\in\R^d \,\big|\, \forall\bar\omega\in\Sigma_t^\omega\ \exists\,\xi\in \mathcal{L}(\cF_{t+1};U),\ Q\in \mathcal{P}(\bar\omega)\colon  \E_Q[\xi|\cF_t](\omega)=y \big\}.
  $$
\end{lemma}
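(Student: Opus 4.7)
My plan is to prove the two inclusions separately by systematically exploiting the finite-support structure of measures in $\cP$: under any such $Q$, the conditional expectation $\E_Q[\xi|\cF_t](\omega)$ is an explicit weighted average over $\supp Q\cap\Sigma_t^\omega$, so the contribution of any single point can be isolated as a convex combination.

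For the inclusion ``$\supseteq$'', fix $y$ in the right-hand side, an arbitrary $\bar\omega\in\Sigma_t^\omega$, and the witnessing pair $\xi,Q\in\cP(\bar\omega)$. Set $\lambda\coloneqq Q(\{\bar\omega\})/Q(\Sigma_t^\omega)\in(0,1]$, which is strictly positive since $\bar\omega\in\supp Q\cap\Sigma_t^\omega$. If $\lambda<1$, splitting the finite sum defining the conditional expectation yields
\begin{align*}
y=\lambda\,\xi(\bar\omega)+(1-\lambda)\,y',
\end{align*}
where $y'$ is the analogous conditional expectation of $\xi$ under $Q$ renormalized on $\Sigma_t^\omega\setminus\{\bar\omega\}$, hence in $U^\sharp(\omega)$ by~\eqref{rmk: char U sharp}. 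This exhibits $y\in\lambda U(\bar\omega)+(1-\lambda)U^\sharp(\omega)$. If instead $\lambda=1$, then $y=\xi(\bar\omega)\in U(\bar\omega)\subseteq U^\sharp(\omega)$, and writing $y=\tfrac12 y+\tfrac12 y$ places $y$ in the desired form. Since $\bar\omega$ was arbitrary, $y\in U^\flat(\omega)$.

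For the inclusion ``$\subseteq$'', fix $y\in U^\flat(\omega)$ and arbitrary $\bar\omega\in\Sigma_t^\omega$, and write $y=\lambda z+(1-\lambda)y'$ with $\lambda\in(0,1)$, $z\in U(\bar\omega)$, $y'\in U^\sharp(\omega)$. By Carath\'eodory applied to~\eqref{eq:sharp}, I can write $y'=\sum_{j=1}^n\mu_j v_j$ with $\mu_j>0$, $\sum\mu_j=1$, $v_j\in U(\omega_j)$, and $\omega_j\in\Sigma_t^\omega$. If $\bar\omega$ is not among the $\omega_j$'s, I set $Q\coloneqq\lambda\delta_{\bar\omega}+(1-\lambda)\sum_j\mu_j\delta_{\omega_j}$ and choose $\xi\in\cL(\cF_{t+1};U)$ with $\xi(\bar\omega)=z$ and $\xi(\omega_j)=v_j$. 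If instead $\bar\omega=\omega_{j_0}$ for some $j_0$, convexity of $U(\bar\omega)$ lets me replace $z$ and $v_{j_0}$ by the single point $w\coloneqq\frac{\lambda z+(1-\lambda)\mu_{j_0}v_{j_0}}{\lambda+(1-\lambda)\mu_{j_0}}\in U(\bar\omega)$ carrying the combined mass, and I proceed analogously. A direct computation then gives $\E_Q[\xi|\cF_t](\omega)=y$, and $Q\in\cP(\bar\omega)$ since $\bar\omega$ has positive $Q$-mass.

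The main obstacle I anticipate is producing the $\cF_{t+1}$-measurable selection $\xi$ with the prescribed values at the finitely many points $\bar\omega,\omega_1,\ldots,\omega_n$. I plan to handle this by starting from any fixed $\cF_{t+1}$-measurable selection of $U$ (which exists by a Kuratowski--Ryll-Nardzewski-style argument) and modifying it on the finitely many atoms $\Sigma_{t+1}^{\bar\omega},\Sigma_{t+1}^{\omega_j}$ so that it is constantly equal to $z$ (or $w$) and $v_j$ there; this modification preserves $\cF_{t+1}$-measurability by the remark following Lemma~\ref{lem:measurability of Sigma}, and the resulting $\xi$ remains a valid selection of $U$ because in the paper's framework $\cF_{t+1}$-measurable correspondences are determined by their values on $\cF_{t+1}$-atoms.
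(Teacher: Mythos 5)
Your proof of the inclusion $U^\flat(\omega)\supseteq\{\ldots\}$ is correct and takes a genuinely different, more elementary route than the paper. The paper argues by contraposition: for $y\in U^\sharp(\omega)\setminus U^\flat(\omega)$ it considers the extremal face $F$ of $U^\sharp(\omega)$ through $y$, produces an $\bar\omega$ with $U(\bar\omega)\cap F=\varnothing$, and shows that no pair $(\xi,Q)$ with $\bar\omega\in\supp Q$ can have conditional expectation in $F$. You instead prove the inclusion directly: for the witnessing pair attached to a fixed $\bar\omega$, isolating the mass $Q(\{\bar\omega\})$ inside the finite sum defining $\E_Q[\xi|\cF_t](\omega)$ exhibits $y=\lambda\,\xi(\bar\omega)+(1-\lambda)y'$ with $\xi(\bar\omega)\in U(\bar\omega)$ and $y'\in U^\sharp(\omega)$, which is precisely membership in $\bigcup_{\lambda\in(0,1)}[\lambda U(\bar\omega)+(1-\lambda)U^\sharp(\omega)]$; your handling of the degenerate case $\lambda=1$ via $y\in U(\bar\omega)\subseteq U^\sharp(\omega)$ is also fine. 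This bypasses the convex-geometry of faces entirely and is arguably cleaner; what the paper's argument buys in exchange is an explicit description of \emph{which} $\bar\omega$ obstructs a boundary point, but that is not needed for the lemma itself.

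The other inclusion follows the paper's construction (Carath\'eodory decomposition, Dirac combination for $Q$, selection prescribed on atoms), but as written it has a small gap: you branch on whether $\bar\omega$ coincides with some $\omega_{j_0}$ \emph{as a point}, whereas the real obstruction to prescribing $\xi$ is that two of the points $\bar\omega,\omega_1,\ldots,\omega_n$ may lie in the \emph{same} $\cF_{t+1}$-atom $\Sigma_{t+1}^{\,\cdot}$ while carrying different prescribed values ($z\neq v_{j_0}$, or $v_j\neq v_k$ for $j\neq k$). In that case no $\cF_{t+1}$-measurable selection can separate them, and the atomwise computation of $\E_Q[\xi|\cF_t](\omega)$ merges their masses, so it need not return $y$. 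The fix is exactly the merging you already perform in the point-collision case, carried out at the level of $\Sigma_{t+1}$-atoms: on a shared atom the correspondence takes a common convex value, so the colliding terms can be replaced by a single weighted average with the combined mass. The paper achieves the same effect by taking the Carath\'eodory representation with $n$ minimal, which forces the atoms $\Sigma^{\bar\omega}_{t+1},\Sigma^{\omega_i}_{t+1}$ to be pairwise disjoint. With that adjustment your argument is complete.
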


In words, $y\in U^\flat(\omega)$ if and only if for \emph{every} $\bar\omega\in\Sigma_t^\omega$ there exists a selection $\xi$ of $U$, and a measure $Q\in\cP(\bar\omega)$, such that $\E_Q[\xi|\cF_t](\omega)=y$. The crucial difference with $U^\sharp$ can be seen by comparing a similar characterization, given in equation~\eqref{rmk: char U sharp}. 
There, the union of the supports of the measures $Q$ does not necessarily contain every $\bar \omega\in\Sigma_t^\omega$, but only at least one.

\begin{proof}
Note that if $\Sigma_t^\omega=\{\omega\}$ there is nothing to prove. Otherwise, let us start by showing the inclusion $\subseteq$. Fix $\bar{\omega}\in\Sigma_t^\omega$. Since $y\in U^\flat(\omega)$, there exists a $\lambda\in(0,1)$ such that $y\in[\lambda U(\bar{\omega})+(1-\lambda) U^\sharp(\omega)]$. The rest of the claim follows from Carath\'eodory's theorem. Indeed, we may write $y = \lambda x + (1-\lambda)z$ for $x\in U(\bar\omega)$ and $z\in U^\sharp(\omega)$. Since $U^\sharp(\omega)$ is a convex hull of a union, we may write $z = \lambda_1z_1+\cdots+\lambda_nz_n$ for some $z_i\in U(\omega_i)$, $\omega_i\in\Sigma^\omega_t$, and $\lambda_i\in(0,1)$, such that $\lambda_1+\cdots+\lambda_n=1$. We may assume that $x$ and $z_i$ in this decomposition are chosen such that $n$ is minimal. This implies, in particular, that the sets $\Sigma^{\bar\omega}_{t+1}$ and $\Sigma^{\omega_i}_{t+1}$ are pairwise disjoint. Then 
  $$
    Q=\lambda\delta_{\bar \omega} +(1-\lambda)\sum_{i=1}^n\lambda_i\delta_{\omega_i}
    \qquad\text{and}\qquad 
    \xi(\omega) = x\one_{\Sigma^{\bar\omega}_{t+1}} + \sum_{i=1}^nz_i\one_{\Sigma^{\omega_i}_{t+1}},
  $$
  where we denoted by $\delta_\omega$ the Dirac measure with mass concentrated in $\omega$. Note that $\xi$ can be extended in an arbitrary way to a selection of $U$ on the complement of $\Sigma_{t+1}^{\bar\omega}\bigcup_i\Sigma_{t+1}^{\omega_i}$; see Lemma~\ref{lem:selection ri}, Lemma~\ref{lem:measurability of Sigma} and the comment above it.

  As for the converse, we need to show that for any element $y\notin U^\flat(\omega)$ there exists an $\bar\omega\in\Sigma^\omega_t$, such that as soon as the pair $(\xi,Q)$ satisfies $Q[\Sigma^{\bar\omega}_{t+1}]>0$, we have $\E_Q[\xi|\cF_t](\omega)\neq y$. If $y\notin U^\sharp(\omega)$, it is clear from~\eqref{rmk: char U sharp}, that any $\bar\omega\in\Sigma^\omega_t$ will satisfy the desired properties. Assume, therefore, that $y\in U^\sharp(\omega)\backslash U^\flat(\omega)$. Since the set $U^\flat(\omega)$ is convex and $\close{U^\flat}=\close{U^\sharp}$, $y$ is a point in the relative boundary of $U^\sharp(\omega)$. Denote by $F$ the extremal face of $U^\sharp(\omega)$ containing $y$, i.e.
  $$
    F = \big\{z\in U^\sharp(\omega)\,\big|\,\exists x\in U^\sharp(\omega),\lambda\in(0,1):\ y = \lambda z + (1-\lambda)x\big\}.
  $$
  If $U(\bar{\omega})\cap F\not=\varnothing$ for each $\bar\omega\in\Sigma^\omega_t$, then, by the definition of $U^\flat(\omega)$, we have $y\in U^\flat(\omega)$, which contradicts our assumption. Therefore, there exists an $\bar{\omega}\in\Sigma^\omega_t$ such that $U(\bar{\omega})\cap F=\varnothing$. It is easy to see that for any selection $\xi$ of $U$ and finite support measure $Q$, such that $\bar\omega\in\supp Q$, we have that $\E_Q[\xi|\cF_t](\omega)\not\in F$.
\end{proof}

The representation of Lemma \ref{lem:key lemma} shows a fundamental robustness property of the correspondence $U^{\flat}$.
Indeed, an element $y\in U^{\flat}$ is not only the conditional expectation with respect to a certain $P\in\cP$ but, using the convexity property of Lemma \ref{lem::convexity} below, it is also possible to enlarge the support of $P$ with an arbitrary $\bar{\omega}\in\Sigma_t^{\omega}$ and still find a selection of $U$ with the same conditional expectation $y$.
In relation to the martingale selection problem, this will be instrumental for preventing that for every $(\xi,Q)$ solution to the $\MSP$, the support of $Q$ must be limited to a certain subset of $\Omega$.

Although, in general, $\ri(U^\sharp)\not=U^\flat$, there are easy conditions ensuring equality; cf. equation~\eqref{eq:inclusions}. This is the content of the following two lemmas.

\begin{lemma}\label{cor: sharp vs flat}
 Let $U\colon\Omega\rightrightarrows\R^d$ be an $\cF_{t+1}$ measurable correspondence with open values. Then the correspondence $U^\sharp$ has open values. In particular $U^\flat=U^\sharp$.
\end{lemma}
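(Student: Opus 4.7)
The plan is to unfold the definitions and use two elementary facts from convex analysis: an arbitrary union of open sets is open, and the convex hull of an open subset of $\R^d$ is open.

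First I would recall from equation~\eqref{eq:sharp} that
$$
  U^\sharp(\omega) = \conv\bigcup_{\bar\omega\in\Sigma^\omega_t} U(\bar\omega).
$$
Since each $U(\bar\omega)$ is open by hypothesis, the set $A(\omega)\coloneqq\bigcup_{\bar\omega\in\Sigma^\omega_t}U(\bar\omega)$ is open as a union of open sets. Then I would invoke the standard fact that $\conv B$ is open whenever $B\subset\R^d$ is open: for any $x=\sum_{i=1}^{k}\lambda_i b_i\in\conv B$ with $b_i\in B$ and $\lambda_i>0$, pick an open neighborhood $N\subset B$ of $b_1$; then $x+\lambda_1(N-b_1)\subset\conv B$ is an open neighborhood of $x$. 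This yields that $U^\sharp(\omega)$ is open for every $\omega\in\Omega$.

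For the second assertion, I would use the inclusion~\eqref{eq:inclusions}, namely
$$
  \ri U^\sharp(\omega)\subseteq U^\flat(\omega)\subseteq U^\sharp(\omega).
$$
If $U^\sharp(\omega)$ is nonempty, then being open and convex it coincides with its relative interior, so $\ri U^\sharp(\omega)=U^\sharp(\omega)$ and the sandwich forces $U^\flat(\omega)=U^\sharp(\omega)$. If instead $U^\sharp(\omega)=\varnothing$, then $U(\bar\omega)=\varnothing$ for every $\bar\omega\in\Sigma^\omega_t$, and inspection of the definition~\eqref{eq: Uflat} shows that $U^\flat(\omega)=\varnothing$ as well, so equality holds trivially.

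There is no real obstacle here; the only point requiring mild care is the openness of the convex hull, which is a one-line argument via Minkowski sums, and the degenerate case where $\Sigma^\omega_t$ indexes only empty sets, which is dispatched by unwinding the intersection in~\eqref{eq: Uflat}.
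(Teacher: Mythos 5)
Your proposal is correct and follows essentially the same route as the paper: openness of $U^\sharp$ via equation~\eqref{eq:sharp} and the fact that the convex hull of an open set is open (the paper deduces this from Carath\'eodory's theorem, you give a direct Minkowski-sum argument), and then the equality $U^\flat=U^\sharp$ from the sandwich~\eqref{eq:inclusions}. Your explicit treatment of the empty case is a harmless extra detail.
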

\begin{proof} 
An easy consequence of the Carath\'eodory's theorem is that the convex hull of an open set is open. Thus, from equation~\eqref{eq:sharp}, $U^\sharp$ is open. The final statement follows from equation~\eqref{eq:inclusions}.
\end{proof}

\begin{lemma}\label{lem: ri flat}
 Let $U\colon\Omega\rightrightarrows\R^d$ be an $\cF_{t+1}$-measurable correspondence with convex and relatively open values. Then $U^\flat=\ri(U^\sharp)$; in particular, $U^\flat$ has relatively open values.
\end{lemma}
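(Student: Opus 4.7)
The plan is to bootstrap the inclusions $\ri U^\sharp(\omega) \subseteq U^\flat(\omega) \subseteq U^\sharp(\omega)$ already recorded in equation~\eqref{eq:inclusions}. These reduce the lemma to proving the single inclusion $U^\flat(\omega) \subseteq \ri U^\sharp(\omega)$; once this is established, relative openness of $U^\flat$ is automatic since the relative interior of a convex set is relatively open in its affine hull.

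I would argue by contradiction: suppose $y \in U^\flat(\omega)$ but $y \notin \ri U^\sharp(\omega)$. Since $y$ sits in the relative boundary of the convex set $U^\sharp(\omega)$, there exists a supporting hyperplane at $y$ within $\aff U^\sharp(\omega)$. Concretely, one obtains a linear functional $f$ on $\R^d$ and a constant $c\in\R$ such that $f(u)\geq c$ for all $u\in U^\sharp(\omega)$, $f(y)=c$, and $f$ is \emph{not} identically $c$ on $U^\sharp(\omega)$ (this last property is exactly the relative-boundary assumption translated into a non-trivial support, and it is the delicate point of the whole proof).

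Next, I would exploit the definition of $U^\flat$ pointwise. Fix an arbitrary $\bar\omega\in\Sigma_t^\omega$. By definition we may write $y=\lambda x+(1-\lambda)z$ with $\lambda\in(0,1)$, $x\in U(\bar\omega)$ and $z\in U^\sharp(\omega)$. Applying $f$ yields
\[
  c=f(y)=\lambda f(x)+(1-\lambda)f(z),
\]
and because $f(x),f(z)\geq c$ with strictly positive coefficients, one is forced to have $f(x)=c$. Hence $f$, restricted to the convex set $U(\bar\omega)\subseteq U^\sharp(\omega)\subseteq\{f\geq c\}$, attains its minimum at the point $x$. Since $U(\bar\omega)$ is relatively open, $x$ is a relative interior point of $U(\bar\omega)$, and the standard convexity fact that a linear functional attaining its minimum over a convex set at a relative interior point must be constant on that set gives $U(\bar\omega)\subseteq\{f=c\}$.

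Finally, since $\bar\omega\in\Sigma_t^\omega$ was arbitrary, we conclude $\bigcup_{\bar\omega\in\Sigma_t^\omega}U(\bar\omega)\subseteq\{f=c\}$, and by convexity of this hyperplane $U^\sharp(\omega)=\conv\bigcup_{\bar\omega}U(\bar\omega)\subseteq\{f=c\}$, contradicting the non-triviality of $f$ on $U^\sharp(\omega)$. This proves $U^\flat(\omega)\subseteq\ri U^\sharp(\omega)$ and therefore $U^\flat(\omega)=\ri U^\sharp(\omega)$; the relative openness then comes for free. The main obstacle, as indicated, is the proper formulation of the supporting hyperplane step, i.e.\ ensuring it is non-trivial on $U^\sharp(\omega)$, which is precisely what the relatively-open (rather than only convex-valued) assumption on $U$ is designed to exploit.
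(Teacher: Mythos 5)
Your argument is correct and is essentially the paper's own proof: both proceed by contradiction via a proper supporting hyperplane at a relative boundary point of $U^\sharp(\omega)$, use the definition of $U^\flat$ to force each $U(\bar\omega)$ to meet the hyperplane, and then invoke relative openness (Theorem~18.1 in \citealp{R70}, which is exactly your ``minimum at a relative interior point'' fact) to conclude $U(\bar\omega)$ lies in the hyperplane, contradicting properness. The one step you flag as delicate---non-triviality of the supporting hyperplane---is indeed guaranteed by the standard supporting hyperplane theorem for points outside the relative interior, so nothing is missing.
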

\begin{proof}
Fix $\omega\in\Omega$. Argue by contradiction and choose $x\in U^\flat(\omega)\backslash \ri U^\sharp(\omega)$. There exists a halfspace $\cH\coloneqq\{z\in\R^d\,|\, \langle h,z\rangle\geq\alpha\}$ for some $h\in\R^d$ and $\alpha\in\R$, such that $U^\sharp(\omega)\subset\cH$, $\langle h,x\rangle=\alpha$ and such that $U^\sharp(\omega)\cap\inter\cH\not=\varnothing$. By the choice of $x$,
$$
  x\in \bigcup_{\lambda\in(0,1)}[\lambda U(\bar\omega) + (1-\lambda)U^\sharp(\omega)]\qquad \forall \bar\omega\in\Sigma^\omega_t.
$$
This implies that $U(\bar\omega)\cap \{z\,|\,\langle h,z\rangle=\alpha\}\not=\varnothing$ for every $\bar\omega\in\Sigma^\omega_t$. Since $U$ is relatively open, it needs to be $U(\bar\omega)\subset \{z\,|\,\langle h,z\rangle=\alpha\}$; see Theorem~18.1 in~\cite{R70}. Since $\bar\omega\in\Sigma^\omega_t$ is arbitrary, the same inclusion holds for $U^\sharp(\omega)$. This contradicts the assumption that $U^\sharp(\omega)\cap\inter\cH\not=\varnothing$. 
\end{proof}

\section{The martingale selection theorem and the main result}\label{Section: MSP}

In this section we will define the martingale selection problem (\MSP). The problem was initially studied by Rokhlin in a series of papers, see e.g.~\citep{Rok06,Rok}. Start with the Polish  space $\Omega$ with the Borel $\sigma$-algebra $\cB(\Omega)$ and the filtration $\F=(\cF_t)_{t\in\cI}$ defined as in the previous section.

\begin{definition}
Let $V$ and $C$ be two $\F$-adapted sequences of correspondences $\Omega\rightrightarrows\R^d$ such that 
\begin{align*}
  V&=(V_t)_{t\in\cI}\ \mbox{ has relatively open, convex values;}\\
  C&=(C_t)_{t\in\cI}\ \mbox{ has closed, convex values.}
\end{align*}
Such a pair $(V,C)$ we call a martingale selection problem (\MSP).
\end{definition}

\begin{definition}\label{def:solvability}
We say that the the martingale selection problem $(V,C)$ is \emph{solvable} if for every $\bar\omega\in\Omega$ there exists an $\F$-adapted process $\xi=(\xi_t)_{t\in\cI}$ and a probability measure $Q\in\cP(\bar\omega)$ such that $\xi_t\in\cL(\cF_t;V_t)$ and
\begin{align}\label{eq: martingale}
\E_Q\big[\xi_{t+1}-\xi_{t}\big|\cF_t\big]\in C_t\quad Q\text{-a.s.} \ \mbox{for all }t\in\cI\backslash\{T\}.
\end{align}
We call the pair $(\xi,Q)$ with $Q\in\cP(\bar\omega)$ a \emph{local solution of $(V,C)$ at $\bar\omega$.}
\end{definition}

\begin{remark}
  Let $(\xi,Q)$ be any solution to the \MSP: $(V,C=(\{0\}))$. Condition~\eqref{eq: martingale} states that $\xi$ is a $Q$ martingale; recall that $Q$ is a finite support measure, hence $\xi$ is integrable. This is where the name `martingale selection problem' comes from. 
  The terminology `local' in the above definition aims at emphasizing the fact that a given $\bar{\omega}$ is in the support of $Q$.
  When we do not need that a specific $\bar{\omega}$ belongs to the support, we will simply call $(\xi,Q)$ a solution to the $\MSP$.
\end{remark}

Before proceeding further with the analysis of the \MSP, let us first state a basic observation about the set of (local) solutions. It will prove instrumental in the proof of the main theorem. The statement is, essentially, that the set of solutions enjoys a form of convexity property.
\begin{lemma}\label{lem::convexity}
  Let $(\xi^k,Q^k)$ be solutions to the \MSP\ $(V,C)$ for $k=1,\ldots,n$. Then, for any convex combination $Q$ of measures $(Q^k)$ there exists a process $\xi$, such that $(\xi,Q)$ is a solution to the \MSP.
\end{lemma}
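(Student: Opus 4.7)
My plan is to take $\xi$ to be a Bayesian mixture of the $(\xi^k)$ whose weights come from the Radon--Nikodym densities of the $Q^k$ with respect to $Q$ on each atom. Without loss of generality, write $Q=\sum_{k=1}^n \alpha_k Q^k$ with $\alpha_k>0$ and $\sum_k\alpha_k=1$ (discarding vanishing indices). For each $t\in\cI$ and $\omega\in\Omega$ I would set
$$
w^t_k(\omega) := \begin{cases} \alpha_k\, Q^k(\Sigma^\omega_t)/Q(\Sigma^\omega_t) & \text{if } Q(\Sigma^\omega_t)>0,\\ \alpha_k & \text{otherwise,}\end{cases}
$$
and define $\xi_t(\omega):=\sum_k w^t_k(\omega)\,\xi^k_t(\omega)$. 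Since the $w^t_k$ are constant on the atoms of $\widehat\cF_t$, they are $\cF_t$-measurable and nonnegative with $\sum_k w^t_k\equiv 1$; combined with convexity of $V_t(\omega)$ and the fact that each $\xi^k_t\in\cL(\cF_t;V_t)$, this immediately yields $\xi_t\in\cL(\cF_t;V_t)$.

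The substantive step is verifying the conditional expectation condition. The key identity is $Q(\Sigma^{\omega'}_{t+1})\,w^{t+1}_k(\omega')=\alpha_k\, Q^k(\Sigma^{\omega'}_{t+1})$ on each $\cF_{t+1}$-atom $\Sigma^{\omega'}_{t+1}\subset\Sigma^\omega_t$ of positive $Q$-measure. Partitioning $\supp Q\cap\Sigma^\omega_t$ into such atoms and using that $\xi^k_{t+1}$ and $w^{t+1}_k$ are constant on each, a direct collapse yields
$$
Q(\Sigma^\omega_t)\,\E_Q[\xi_{t+1}\mid\cF_t](\omega)=\sum_k \alpha_k Q^k(\Sigma^\omega_t)\,\E_{Q^k}[\xi^k_{t+1}\mid\cF_t](\omega),
$$
while the analogous identity for $\xi_t$ is immediate from the definition. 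Subtracting and dividing by $Q(\Sigma^\omega_t)$ gives
$$
\E_Q[\xi_{t+1}-\xi_t\mid\cF_t](\omega)=\sum_k w^t_k(\omega)\,\E_{Q^k}[\xi^k_{t+1}-\xi^k_t\mid\cF_t](\omega).
$$

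To conclude, I would argue that the right-hand side lies in $C_t(\omega)$: for $k$ with $Q^k(\Sigma^\omega_t)=0$ the weight $w^t_k(\omega)$ vanishes, whereas for $k$ with $Q^k(\Sigma^\omega_t)>0$ any $\omega'\in\supp Q^k\cap\Sigma^\omega_t$ witnesses that $\E_{Q^k}[\xi^k_{t+1}-\xi^k_t\mid\cF_t](\omega')\in C_t(\omega')$ by hypothesis, and $\cF_t$-measurability transports this inclusion to $\omega$. Convexity of $C_t(\omega)$ then closes the argument. The only point requiring care is the atom bookkeeping in the collapse identity: several distinct $\omega'\in\supp Q^k\cap\Sigma^\omega_t$ may share the same $\cF_{t+1}$-atom, so one must sum over distinct atoms rather than individual support points; nothing deeper arises.
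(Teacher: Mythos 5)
Your proof is correct and follows essentially the same route as the paper: the same density-weighted mixture $\xi_t=\sum_k \frac{\alpha_k Q^k[\Sigma^\omega_t]}{Q[\Sigma^\omega_t]}\xi^k_t$, the same collapse over the $\cF_{t+1}$-atoms yielding $\E_Q[\xi_{t+1}-\xi_t|\cF_t](\omega)=\sum_k w^t_k(\omega)\,\E_{Q^k}[\xi^k_{t+1}-\xi^k_t|\cF_t](\omega)$, and the same conclusion by convexity of $C_t$. Your handling of the weights off the support of $Q$ is in fact a touch cleaner than the paper's $\frac{0}{0}=1$ convention.
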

\begin{proof}
  Write $Q=\sum_{k=1}^n\lambda_k Q^k$, where $\lambda_k\in(0,1)$ for all $k$ and $\lambda_1+\cdots+\lambda_n=1$. 
  Let us show that the process
  \begin{align}\label{eq::convexity}
  \xi_t(\omega)=\sum_{k=1}^n \frac{\lambda_kQ^k[\Sigma^\omega_t]}{Q[\Sigma^\omega_t]}\xi^k_t(\omega)
  \end{align}
  together with the measure $Q$ solves the \MSP. Note that we can extend the process $\xi_t$ outside of the support of the measure $Q$ by setting $\frac00=1$ in~\eqref{eq::convexity}. First, it is clear that the process $\xi$ is adapted and that it is a selection of $V$; Indeed, a cursory inspection of the definition of $\xi$ will reveal the sum to be a convex combination. Moreover, measurability follows from $\Sigma^\omega_t\in\cF_t$. Hence, the only thing to prove is that it satisfies~\eqref{eq: martingale}. Calculate
  \begin{align*}
    \E_Q[\xi_{t+1}-\xi_{t}|\cF_t](\omega)
    &= 
    \sum_{\Sigma^{\bar\omega}_{t+1}\subset\Sigma^{\omega}_t}\frac{Q[\Sigma^{\bar\omega}_{t+1}]}{Q[\Sigma^{\omega}_t]}\xi_{t+1}(\bar\omega)
    -\xi_t(\omega)
    \\[1ex]&=
    \sum_{\Sigma^{\bar\omega}_{t+1}\subset\Sigma^{\omega}_t}\ \sum_{k=1}^n \frac{\lambda_kQ^k[\Sigma^{\bar\omega}_{t+1}]}{Q[\Sigma^\omega_t]}\xi_{t+1}^k(\bar\omega)
    -\xi_t^k(\omega)
    \\[1ex]&= 
    \sum_{k=1}^n \frac{\lambda_kQ^k[\Sigma^\omega_t]}{Q[\Sigma^\omega_t]}\left[\sum_{\Sigma^{\bar\omega}_{t+1}\subset\Sigma^{\omega}_t}\frac{Q^k[\Sigma^{\bar\omega}_{t+1}]}{Q^k[\Sigma^{\omega}_t]}\xi_{t+1}^k(\bar\omega)
    -\xi_t^k(\omega)\right]
    \\[1ex]&=
    \sum_{k=1}^n \frac{\lambda_kQ^k[\Sigma^\omega_t]}{Q[\Sigma^\omega_t]}\E_{Q^k}\big[\xi^k_{t+1}-\xi^k_t\big|\cF_t\big](\omega).
  \end{align*}
  Notice that the last sum is a convex combination of elements of $C_t(\omega)$. Hence, we conclude by convexity of $C$.
\end{proof}

\begin{remark}
  Observe the definition of $\xi$ in the proof of the Lemma above. One sees that this is given by
  $$
    \xi_t(\omega)=\sum_{k=1}^n \lambda_k\frac{\d Q^k}{\d Q}\bigg|_{\cF_t}\xi^k_t(\omega),
  $$
  where by $\frac{\d Q^k}{\d Q}$ we denote the Radon-Nikodym derivative.
\end{remark}

\begin{remark} Observe that, in general, there is no unique local solution to the $\MSP$ at a given $\bar \omega$.
Indeed, unless the space $\Omega$ is a finite number of events, there is no measure which assign positive mass to every $\omega\in\Omega$. Therefore, given $(\xi^1,Q^1)$ a local solution at $\bar{\omega}$ and $(\xi^2,Q^2)$ a local solution at a certain $\hat{\omega}\notin \supp(Q^1)$, Lemma \ref{lem:key lemma} yields a new local solution $(\xi,Q)$ at $\bar{\omega}$.
\end{remark}

\paragraph{The main theorem.}

Consider a martingale selection problem $(V,C)$. Define the following (adapted) sequence $W=(W_t)_{t\in\cI}$ of measurable correspondences: Set $W_T\coloneqq V_T$ and
\begin{align}\label{eq: iteration}
  W_t\coloneqq V_t\cap \big(W^\flat_{t+1}-C_t\big)
  \qquad \text{ for }t=T-1,\ldots, 0.
\end{align}
The following is the main result of the paper.

\begin{theorem}\label{thm: main1}
	The martingale selection problem $(V,C)$ is solvable if and only if $W_t(\omega)\neq\varnothing$ for all $t\in\cI$ and $\omega\in\Omega$.
\end{theorem}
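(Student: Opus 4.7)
Both directions are proved by induction in time, using two key results from earlier in the paper: Lemma~\ref{lem:key lemma} (the characterization of $W_{t+1}^\flat$ via one-step selections and measures) and Lemma~\ref{lem::convexity} (closure of the solution set under convex combinations of measures). The former will extend a solution forward one step; the latter will be used to verify the $W^\flat$-witness condition by patching solutions based at different points.

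\paragraph{Sufficiency.}
Assume $W_t\neq\varnothing$ everywhere and fix a target $\bar\omega\in\Omega$. Build the local solution forward in time. Since $\cF_0$ is trivial, $W_0$ is a constant non-empty set; pick $\xi_0\in W_0$, so by definition of $W$ one has $\xi_0=y_0-c_0$ with $y_0\in W_1^\flat$ and $c_0\in C_0$. Apply Lemma~\ref{lem:key lemma} to $U=W_1$, targeting $\bar\omega\in\Sigma^{\bar\omega}_0=\Omega$, to obtain $\xi_1\in\cL(\cF_1;W_1)$ and $Q_1\in\cP(\bar\omega)$ with $\E_{Q_1}[\xi_1|\cF_0]=y_0$. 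Iterate: on each $\cF_t$-atom $\Sigma^{\omega'}_t$ with $\omega'$ in the support built so far, decompose $\xi_t(\omega')=y_t(\omega')-c_t(\omega')$ using $W_t\subset V_t\cap(W_{t+1}^\flat-C_t)$, and apply Lemma~\ref{lem:key lemma} atom-wise to extract a selection $\eta^{(\omega')}$ of $W_{t+1}$ and a measure $\pi^{(\omega')}$ realizing $y_t(\omega')$; on the branch containing $\bar\omega$ I target $\bar\omega$ itself, so $\bar\omega$ stays in the updated support. Gluing these atom-wise pieces---legitimate because each $\Sigma^{\omega'}_t$ lies in $\cF_t$ by Lemma~\ref{lem:measurability of Sigma}---yields $(\xi_{t+1},Q_{t+1})$ that extends the local solution by one step while preserving adaptedness and $\bar\omega\in\supp Q_{t+1}$.

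\paragraph{Necessity and main obstacle.}
Assume the MSP is solvable. By backward induction on $t$, I prove the stronger statement that for every local solution $(\xi,Q)$ at $\omega$ one has $\xi_t(\omega)\in W_t(\omega)$. The base case $t=T$ is immediate from $W_T=V_T$. For $t<T$, since $\supp Q$ is finite and $\omega\in\supp Q$, the pair $(\xi,Q)$ is also local at every $\omega''\in\supp Q$, so the inductive hypothesis gives $\xi_{t+1}(\omega'')\in W_{t+1}(\omega'')$ for those $\omega''$. Setting $y:=\E_Q[\xi_{t+1}|\cF_t](\omega)$, the martingale condition gives $y-\xi_t(\omega)\in C_t(\omega)$ and $\xi_t(\omega)\in V_t(\omega)$, so it remains to verify $y\in W_{t+1}^\flat(\omega)$. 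By Lemma~\ref{lem:key lemma} this asks, for each $\bar\omega\in\Sigma^\omega_t$, for a $W_{t+1}$-selection and a measure in $\cP(\bar\omega)$ with conditional expectation $y$ at $\omega$. For $\bar\omega\in\supp Q$ the original pair works, after extending $\xi_{t+1}$ off $\supp Q$ to a measurable selection of $W_{t+1}$ (non-empty there by the induction hypothesis combined with solvability at every point). For $\bar\omega\notin\supp Q$, take a local solution $(\tilde\xi,\tilde Q)$ at $\bar\omega$, combine with $(\xi,Q)$ via Lemma~\ref{lem::convexity}, then enlarge the combined measure's support by $\bar\omega$ with small weight $\epsilon$ and by finitely many compensating points chosen by Carath\'eodory inside $W_{t+1}^\sharp(\omega)$, calibrated so that the perturbation cancels and the resulting conditional expectation at $\omega$ is still $y$. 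The main obstacle is precisely this last calibration: combining alone shifts the conditional expectation away from $y$, so the compensating masses must be chosen with values inside $W_{t+1}$ on appropriate atoms of $\cF_t$ inside $\Sigma^\omega_t$; this is what the precise form of the $\flat$-operator is designed to accommodate, and measurability of the glued selections is maintained throughout via the atomic description supplied by Lemma~\ref{lem:measurability of Sigma}.
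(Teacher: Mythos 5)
Your sufficiency argument is sound and is essentially the paper's proof of Lemma~\ref{lem:extension}: decompose $\xi_t=w-c$ with $w\in W_{t+1}^\flat$ and $c\in C_t$, realize $w$ by a one-step pair via Lemma~\ref{lem:key lemma}, and paste atom-wise; whether the induction runs forward or backward in $t$ is immaterial.

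The necessity direction contains a genuine gap: the inductive statement you set out to prove, namely that \emph{every} local solution $(\xi,Q)$ at $\omega$ satisfies $\xi_t(\omega)\in W_t(\omega)$, is false. The paper establishes it only when the $V_t$ are open-valued (Lemma~\ref{lem:inclusion}); in general it fails even under global solvability. Take $T=1$, $\Omega=[0,1]$, $d=1$, $V_0=\R$, $V_1(\omega)=\{\omega\}$, $C_0=\{0\}$ (a variant of Example~\ref{ex::flat vs sharp}). The MSP is solvable (the pair $\xi_0=\bar\omega$, $\xi_1(\omega)=\omega$, $Q=\delta_{\bar\omega}$ is a local solution at every $\bar\omega$), and one computes $W_1^\flat=(0,1)$, $W_0=(0,1)$; yet the local solution at $\bar\omega=0$ has $\xi_0=0\notin W_0$. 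The point of failure is exactly the ``calibration'' you flag as the main obstacle: after mixing in a solution based at $\bar\omega\notin\supp Q$ with weight $\epsilon$, restoring the conditional expectation $y$ at $\omega$ requires replacing the old branch's contribution $y$ by $y+\tfrac{\epsilon}{1-\epsilon}(y-\tilde y)$, i.e.\ moving $y$ \emph{away} from the new contribution while staying inside the achievable set; this is impossible when $y$ lies on the relative boundary (in the example $y=0$ is an extreme point of $W_1^\sharp=[0,1]$ and every compensating value is $\geq 0$). The paper's remedy is to introduce the convex set $\cT_t(\omega)$ of all achievable values $\xi_t(\omega)$ over local solutions at $\omega$ and to prove only the inclusion $\ri\cT_t(\omega)\subset W_t(\omega)$; for $y\in\ri\cT_t(\omega)$ the required outward perturbation exists by the definition of relative interior (Steps 1--3 of the paper's proof of Theorem~\ref{thm: main1}), and non-emptiness of $\ri\cT_t(\omega)$ still yields $W_t(\omega)\neq\varnothing$. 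Your argument must be restricted to relative-interior points in the same way; as stated it asserts a false claim, so no choice of compensating masses can close the step.
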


We prove the result in several steps. Sufficiency is a consequence of the following lemma. 
 
\begin{lemma}\label{lem:extension}
  Let $\bar\omega\in\Omega$ and $t\in\cI$. Assume that $W_s(\omega)\not=\varnothing$ for every $\omega\in\Omega$ and every $s=t,\ldots,T$. Then, for every $\xi\in W_t(\bar\omega)$ there exists a $Q\in\cP(\bar\omega)$ and a process $(\xi_s)_{s=t,\ldots,T}$ with $\xi_s\in\mathcal{L}(\cF_s;V_s)$ for every $s=t,\ldots,T$, such that $\xi_t(\bar\omega)=\xi$ and
  $$
    \E_Q\big[\xi_{s+1}-\xi_s\big|\cF_s\big]\in C_s\quad Q\text{\textnormal{-a.s.}}\qquad s=t,\ldots,T-1.
  $$
\end{lemma}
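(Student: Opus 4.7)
The plan is to induct backwards on $t$, strengthening the inductive hypothesis slightly to also record that the measure $Q$ may be taken with $\supp Q \subseteq \Sigma_t^{\bar\omega}$; this extra invariant makes the pasting step clean.

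The base case $t = T$ is immediate: the martingale condition is vacuous, so I take $Q = \delta_{\bar\omega}$ together with any $\cF_T$-measurable selection $\xi_T$ of $V_T = W_T$ that takes the value $\xi$ on the atom $\Sigma_T^{\bar\omega}$; existence of such a selection follows from the comment preceding Lemma~\ref{lem:measurability of Sigma} on modifying an $\cF_t$-measurable object on a single atom.

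For the inductive step, assume the claim at time $t+1$. Given $\xi \in W_t(\bar\omega) = V_t(\bar\omega) \cap (W^\flat_{t+1}(\bar\omega) - C_t(\bar\omega))$, I decompose $\xi = y - c$ with $y \in W^\flat_{t+1}(\bar\omega)$ and $c \in C_t(\bar\omega)$. Applying Lemma~\ref{lem:key lemma} to $U = W_{t+1}$ at $\omega = \bar\omega$, with the quantifier point in $\Sigma_t^{\bar\omega}$ chosen to be $\bar\omega$ itself, the explicit construction in its proof yields a selection $\eta \in \mathcal{L}(\cF_{t+1}; W_{t+1})$ and a measure
$$Q^{(0)} = p_0 \delta_{\bar\omega} + \sum_{i=1}^n p_i \delta_{\omega_i} \in \cP(\bar\omega),$$
with $p_i > 0$, $\sum_{i=0}^n p_i = 1$, the $\omega_i \in \Sigma_t^{\bar\omega}$, the atoms $\Sigma_{t+1}^{\bar\omega}$ and $\Sigma_{t+1}^{\omega_i}$ pairwise disjoint, and $\sum_{i=0}^n p_i \eta(\omega_i) = y$ (writing $\omega_0 := \bar\omega$). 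At each support point $\omega_i$, I invoke the inductive hypothesis with starting value $\eta(\omega_i) \in W_{t+1}(\omega_i)$ to obtain a local solution $(\xi^i, Q^i)$ from time $t+1$ with $\supp Q^i \subseteq \Sigma_{t+1}^{\omega_i}$. Then I paste: set $Q := \sum_{i=0}^n p_i Q^i \in \cP(\bar\omega)$; for $s \geq t+1$, define $\xi_s$ to agree with $\xi^i_s$ on $\Sigma_{t+1}^{\omega_i}$ (unambiguous by the pairwise disjointness) and with any $\cF_s$-measurable selection of $V_s$ elsewhere; finally, set $\xi_t \equiv \xi$ on $\Sigma_t^{\bar\omega}$ and extend measurably.

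Verification splits in two. For $s \geq t+1$, the restriction of $Q$ to $\Sigma_{t+1}^{\omega_i}$ is proportional to $Q^i$, so the conditional expectation $\E_Q[\,\cdot\,|\cF_s]$ on $\supp Q^i$ coincides with $\E_{Q^i}[\,\cdot\,|\cF_s]$ and the required identity is inherited from the inductive hypothesis. For $s = t$, the $\cF_{t+1}$-measurability of $\xi^i_{t+1}$ forces $\xi_{t+1} \equiv \eta(\omega_i)$ on $\Sigma_{t+1}^{\omega_i}$, hence $\E_Q[\xi_{t+1} | \cF_t] = \sum_{i=0}^n p_i \eta(\omega_i) = y$ on $\Sigma_t^{\bar\omega}$, whence $\E_Q[\xi_{t+1} - \xi_t | \cF_t] = y - \xi = c \in C_t$ on $\supp Q$. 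The main obstacle is precisely the pasting: to splice local solutions at $t+1$ into a single coherent adapted process and measure, I need the atoms $\Sigma_{t+1}^{\omega_i}$ at the different support points of $Q^{(0)}$ to be pairwise disjoint, and that disjointness is exactly the content of the explicit construction behind Lemma~\ref{lem:key lemma}; once it is in hand, the measurability and the martingale identities reduce to bookkeeping.
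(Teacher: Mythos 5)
Your proof is correct and follows essentially the same route as the paper's: decompose $\xi=y-c$ via the definition of $W_t$, use Lemma~\ref{lem:key lemma} (with the quantified point taken to be $\bar\omega$ itself so that $\bar\omega\in\supp Q$) to produce the one-step selection and measure, apply the inductive hypothesis at each atom of that measure, and paste using the pairwise disjointness of the sets $\Sigma_{t+1}^{\omega_i}$. The only cosmetic difference is that you make the invariant $\supp Q\subseteq\Sigma_t^{\bar\omega}$ and the final verification of the conditional-expectation identities explicit, which the paper leaves implicit.
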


\begin{proof}
  If $t=T$ there is nothing to prove, one simply chooses $(\xi,\delta_{\bar\omega})$.
  
  Let us assume that the result is true for $t+1$ and let us show it for $t$. We can write each $\xi\in W_t(\bar\omega)$ as
  $$
  \xi = w - c \qquad \mbox{with } w\in W_{t+1}^\flat(\bar\omega),\ \ c\in C_t(\bar\omega).
  $$
  There exists a measure $\bar Q\in\cP(\bar\omega)$ and a random vector $\psi\in\cL(\cF_{t+1};W_{t+1})$ such that $w=\E_{\bar Q}[\psi|\cF_t](\bar\omega)$; see Lemma~\ref{lem:key lemma}. We may assume that $\supp Q\subset\Sigma_t^{\bar\omega}$. Denote by $\omega^i$, $i=1,\ldots,p$, the atoms of~$\bar Q$. By convexity of the correspondences, we can assume that $\Sigma^{\omega^i}_{t+1}$, $i=1,\ldots,p$, are pairwise disjoint. By the induction hypothesis, there exists a solution $(\xi^i,Q^i)$ to the \MSP\ such that $Q^i\in\cP(\omega^i)$ and $\xi^i_{t+1}(\omega^i) =\psi(\omega^i)$. We may, furthermore, assume that $\supp Q^i\subset\Sigma_{t+1}^{\omega^i}$. Define the process~$(\xi_s)$ by
  $$
  \xi_{t}(\bar\omega) = \xi,
  \qquad
  \xi_s = \sum_{i=1}^p\xi^i_s\one_{\Sigma^{\omega^i}_{t+1}}\quad\text{for }s=t+1,\ldots,T,
  $$
  where we extend it to any selection of $W_{t+1}$ on the complement of the considered sets. The measure $Q$ is defined by
  $$
    Q[A]=\sum_{i=1}^p\bar Q\big[\Sigma^{\omega^i}_{t+1}\big]Q^i[A]\qquad A\in\cF_T.
  $$
  Note that $\bar{\omega}$ is an atom of $\bar{Q}$ by assumption. This finishes the proof.
\end{proof}

\paragraph{The case when $V_t$ are open.}
To show necessity in the main theorem, we first study a simpler case. The following observation is the main reason why considering open valued correspondences makes the problem significantly easier.
\begin{lemma}\label{lem:openness}
  Let $V_t(\omega)$ be open for every $\omega$, $t$. Then $W_t(\omega)$ are open.
\end{lemma}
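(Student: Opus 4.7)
My plan is to prove this by backward induction on $t$, using the recursive definition of $W$ in equation~\eqref{eq: iteration} together with Lemma~\ref{cor: sharp vs flat}.

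For the base case $t=T$, we have $W_T = V_T$ by definition, which is open by hypothesis. For the inductive step, assume $W_{t+1}(\omega)$ is open for every $\omega$. First, since $W_{t+1}$ has open values, Lemma~\ref{cor: sharp vs flat} applies and yields that $W_{t+1}^\sharp$ has open values and $W_{t+1}^\flat = W_{t+1}^\sharp$. Hence $W_{t+1}^\flat(\omega)$ is open for each $\omega$. Next, I would observe that the Minkowski sum $W_{t+1}^\flat(\omega) - C_t(\omega) = \bigcup_{c \in C_t(\omega)}\bigl(W_{t+1}^\flat(\omega) - c\bigr)$ is a union of translates of an open set, hence open. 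Finally, $W_t(\omega) = V_t(\omega) \cap \bigl(W_{t+1}^\flat(\omega) - C_t(\omega)\bigr)$ is the intersection of two open sets, therefore open.

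No step is particularly hard here; the whole argument rests on the fact that openness propagates through the three operations appearing in~\eqref{eq: iteration}: the passage $W_{t+1}\mapsto W_{t+1}^\flat$ (which preserves openness thanks to Lemma~\ref{cor: sharp vs flat}, and this is precisely why the open case is easier than the general one), Minkowski subtraction by an arbitrary set, and intersection with the open correspondence $V_t$. The only mild subtlety is checking that $W_{t+1}^\flat$ really has open, not merely relatively open, values, but this is immediate from Lemma~\ref{cor: sharp vs flat} because openness of $W_{t+1}$ forces $W_{t+1}^\sharp$ to be open in the ambient space $\R^d$, so that relative and absolute interior coincide.
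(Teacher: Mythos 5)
Your argument is correct and is essentially the paper's own proof: backward induction, with the key step being that openness of $W_{t+1}$ gives $W_{t+1}^\flat=W_{t+1}^\sharp$ open via Lemma~\ref{cor: sharp vs flat}, after which $W_t$ is an intersection of open sets. You are slightly more explicit than the paper in spelling out that the Minkowski difference $W_{t+1}^\flat(\omega)-C_t(\omega)$ is open as a union of translates of an open set, which is a welcome clarification but not a different route.
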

\begin{proof}
 If $W_t(\omega)=\varnothing$, then it is open by definition. Let us, thus, assume the contrary. Observe that, by definition, $W_T(\omega)$ is open for every $\omega$. The statement follows by induction: assume that the same is true for $s=t+1$ and let us show it for $s=t$. Since $W_{t+1}(\omega)$ is open for every $\omega$, so is $W_{t+1}^\sharp(\omega)=W_{t+1}^\flat(\omega)$ by Lemma~\ref{cor: sharp vs flat}. The set $W_t(\omega)$ is therefore an intersection of open sets, hence open.
\end{proof}

  We now want to show that, if the \MSP\ is solvable, $W_t(\omega)\not=\varnothing$ for all $t\in\cI$ and $\omega\in\Omega$. The idea is simple: we will show that any local solution $(\xi,Q)$ to the \MSP\ with $Q\in\mathcal{P}(\omega)$ satisfies $\xi_t(\omega)\in W_t(\omega)$.

\begin{lemma}\label{lem:inclusion}
  Let $(\xi,Q)$ be a local solution to the \MSP\ $(V,C)$ at $\omega\in\Omega$. Then $\xi_s(\omega)\in W_s(\omega)$ for all $s\in\cI$. Consequently, $W_s(\omega)\not=\varnothing$ for all $s\in\cI$.
\end{lemma}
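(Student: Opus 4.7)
My approach is backward induction on $s$, starting from $s=T$. For the base case, since $W_T=V_T$ by definition and $\xi_T(\omega)\in V_T(\omega)$ by the selection property, the claim is immediate.

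For the inductive step, I assume $\xi_{s+1}(\omega')\in W_{s+1}(\omega')$ for every $\omega'\in\supp Q$. The key observation that makes this hypothesis available is that $(\xi,Q)$ is simultaneously a local solution at every $\omega'\in\supp Q$: by definition $\omega'\in\supp Q$ means $Q\in\cP(\omega')$, and the martingale condition~\eqref{eq: martingale} is independent of the base point, so $(\xi,Q)$ qualifies as a local solution at each such $\omega'$ and the inductive hypothesis applies pointwise on $\supp Q$.

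Next, because $Q$ is finitely supported, $\E_Q[\xi_{s+1}|\cF_s](\omega)$ is an explicit convex combination of the values $\xi_{s+1}(\omega')$ with $\omega'\in\supp Q\cap\Sigma_s^\omega$. Each such value lies in $W_{s+1}(\omega')$, so the combination lies in $\conv\bigcup_{\omega'\in\Sigma_s^\omega}W_{s+1}(\omega')=W_{s+1}^\sharp(\omega)$ by~\eqref{eq:sharp}. At this point I invoke openness: by Lemma~\ref{lem:openness} the correspondence $W_{s+1}$ is open-valued, and by Lemma~\ref{cor: sharp vs flat} this forces $W_{s+1}^\sharp(\omega)=W_{s+1}^\flat(\omega)$. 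Hence $\E_Q[\xi_{s+1}|\cF_s](\omega)\in W_{s+1}^\flat(\omega)$. Rewriting the martingale condition as $\xi_s(\omega)=\E_Q[\xi_{s+1}|\cF_s](\omega)-c$ for some $c\in C_s(\omega)$, and combining with $\xi_s(\omega)\in V_s(\omega)$ and the recursion~\eqref{eq: iteration}, I conclude $\xi_s(\omega)\in V_s(\omega)\cap(W_{s+1}^\flat(\omega)-C_s(\omega))=W_s(\omega)$, closing the induction.

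The main subtlety is exactly the replacement of $W_{s+1}^\sharp$ by $W_{s+1}^\flat$. A single solution $(\xi,Q)$ produces a conditional expectation that naturally sits only in $W^\sharp$, while the recursion for $W$ is built from the smaller set $W^\flat$; this gap is precisely what collapses under openness via Lemma~\ref{cor: sharp vs flat}, and is the whole reason the open case is treated first. Outside of the open case one would have to meet the stronger "for every $\bar\omega\in\Sigma_s^\omega$" characterization of $W^\flat$ in Lemma~\ref{lem:key lemma}; the natural strategy there would be to invoke solvability at every $\bar\omega\in\Sigma_s^\omega$ and then paste the resulting solutions using the convexity of solutions in Lemma~\ref{lem::convexity}, but this step is not needed here.
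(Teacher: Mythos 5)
Your proof is correct and follows essentially the same route as the paper's: backward induction over $\supp Q$, writing the conditional expectation as a convex combination landing in $W_{s+1}^\sharp$, and using Lemma~\ref{lem:openness} together with Lemma~\ref{cor: sharp vs flat} to identify $W_{s+1}^\sharp$ with $W_{s+1}^\flat$ (the paper packages this by rewriting the recursion~\eqref{eq: iteration} with $(\cdot)^\sharp$ up front, while you apply the identification at the end — the same step). Your explicit remark that $(\xi,Q)$ is simultaneously a local solution at every point of $\supp Q$, which is what makes the inductive hypothesis available across the support, is left implicit in the paper but is exactly the right justification.
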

\begin{proof}
  First note that, in light of Lemma~\ref{lem:openness}, the sequence $W_t$ could have been defined as follows: $W_T\coloneqq V_T$ and
  $$
    W_t\coloneqq V_t\cap \big(W^\sharp_{t+1}-C_t\big) \qquad \text{ for }t=T-1,\ldots, 0.
  $$
  Let $(\xi,Q)$ be the local solution to the \MSP\ at $\omega$. By definition, $\xi_s(\omega)\in V_s(\omega)$ and also 
  $$
    \E_Q\big[\xi_{s+1}-\xi_s\big|\cF_s\big](\omega)\in C_s(\omega) \qquad \forall s=0,\ldots,T-1.
  $$
  But this last expression can be read out as $\xi_s(\omega)\in (\E_Q[\xi_{s+1}|\cF_s](\omega) - C_s(\omega))$.
  
  Now we come to the induction argument: clearly, $\xi_T(\bar\omega)\in W_T(\bar\omega)$ for each $\bar{\omega}\in\Omega$ and in particular for every $\bar{\omega}\in\supp Q$. Assume that $\xi_{t+1}(\bar\omega)\in W_{t+1}(\bar\omega)$ for each $\bar{\omega}\in\supp Q$. Then, noticing that $\E_Q[\xi_{s+1}|\cF_s](\bar\omega)\in W_{t+1}^\sharp(\bar{\omega})$, we have
  $$
   \xi_t(\bar\omega)
   \in
   \big(\E_Q[\xi_{t+1}|\cF_t](\bar\omega) - C_t(\bar\omega)\big)
   \subset
   W^\sharp_{t+1}(\bar\omega) - C_t(\bar\omega).
  $$
  This proves the claim.
\end{proof}

We come to the proof of the main theorem.
\begin{proof}[Proof of the main theorem for $V$ open.]
  One implication follows directly from Lemma~\ref{lem:extension}. The other follows from Lemma~\ref{lem:inclusion}.
\end{proof}

\paragraph{The general case.} 
When the correspondence $V_t$ is open valued, we showed that $W^\sharp=W^\flat$ holds and $\E_Q[\xi_{s+1}|\cF_s]\in W^\sharp_s$ $Q$-a.s. for every solution $(\xi,Q)$ to the \MSP. In general, $W^\sharp$ does not necessarily coincide with $W^\flat$, thus we may fail to have $\E_Q[\xi_{s+1}|\cF_s]\in W^\flat_s$ $Q$-a.s. for some solution $(\xi,Q)$ to the \MSP.

\begin{example}\label{ex::flat vs sharp}
  Consider $T=1$ and $\Omega=[0,1]$ with $d=1$. Define $C_0=\{0\}$ and $V_1(\omega)=\{\omega\}$ and $V_0=\{0\}$. Then one easily sees that $W_1^\sharp=[0,1]$ (which is not an open set). Nevertheless, one can compute that $W_1^\flat=(0,1)\neq W_1^\sharp$ and, thus, $W_0=\varnothing$. Notice that \MSP\ is not solvable in this example; Indeed, if $(\xi,Q)$ is a local solution of the \MSP\ at $\omega$, we necessarily have $\omega=0$, in other words, $Q=\delta_0$ is the only martingale measure.
\end{example}

\begin{example}
  \cite{Rok} proposes a different iteration for the sequence of measurable correspondences $W_t$ which we denote by $w$. Define $w_T\coloneqq V_T$ and
  \begin{align*}
  w_t&\coloneqq V_t\cap \big(\ri w^\sharp_{t+1}-C_t\big) \qquad \text{ for }t=T-1,\ldots, 0.
  \end{align*}
  He claims, albeit without proof, that $w_t(\omega)\not=\varnothing$ for all $t\in\cI$ and $\omega\in\Omega$ is equivalent to the \MSP\ being solvable.
  Observe, however, the following example: let $\cF_i$ be trivial for each $t=0,1,2$. Define $V_0=V_2=(-1,1)\times\{0\}$ and $V_1=(-1,1)^2$. Define also $C_0=\{0\}$ and $C_1=\{(x,y)|y\geq0\}$. One easily gets $w_1=(-1,1)\times(-1,0]$ and $w_0=\varnothing$. However, \MSP\ is clearly solvable, e.g. take any constant process $\xi$. This implies that the sets $w_t$ are too small, and this motivated the definition of the $(\cdot)^\flat$ operation.
\end{example}

\begin{remark}\label{rmk: ri Wflat}
As the previous example indicates, $W^\flat_t$ are not necessarily relatively open, even if $V_t$ are. In the previous example we get $W_1 = w_1$, which is not relatively open. We also get $W_0=V_0$.
 \end{remark}

In the following we show that the iteration based on the $(\cdot)^\flat$ operation yields the result.
We start by providing a new characterization of the sets $W_s$ which is a simple corollary to Lemma~\ref{lem:key lemma}.

\begin{lemma}\label{lem::characterization}
Let $t\in\cI\backslash\{T\}$ and assume that $W_{t+1}(\omega)\not=\varnothing$ for every $\omega\in\Omega$. Then
$$
W_{t+1}^\flat(\omega)-C_t(\omega) = 
\Bigg\{
y\in \R^d\,
\Bigg|\, 
\begin{array}{l}
\forall\bar\omega\in\Sigma_t^\omega\ \exists\,\xi\in \cL(\cF_{t+1};W_{t+1}),\ Q\in \cP(\bar\omega):
\\[1ex] 
\E_Q[\xi|\cF_t](\omega)-y\in C_t(\omega)
\end{array} 
\Bigg\}.
$$
\end{lemma}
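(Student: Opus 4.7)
The plan is to establish the two inclusions separately. The inclusion $\subseteq$ is routine from Lemma~\ref{lem:key lemma}: given $y=z-c$ with $z\in W_{t+1}^\flat(\omega)$ and $c\in C_t(\omega)$, for each $\bar\omega\in\Sigma_t^\omega$ the lemma supplies $\xi\in\cL(\cF_{t+1};W_{t+1})$ and $Q\in\cP(\bar\omega)$ with $\E_Q[\xi|\cF_t](\omega)=z$, so that $\E_Q[\xi|\cF_t](\omega)-y=c\in C_t(\omega)$.

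The opposite inclusion is the crux and my plan reduces it to a face-theoretic argument in convex geometry. Suppose $y$ satisfies the RHS. For each $\bar\omega$ the hypothesis delivers $z^{\bar\omega}:=\E_{Q^{\bar\omega}}[\xi^{\bar\omega}|\cF_t](\omega)\in\widehat W_{t+1}(\bar\omega)\cap(y+C_t(\omega))$, where I denote $\widehat W_{t+1}(\bar\omega):=\bigcup_{\lambda\in(0,1)}[\lambda W_{t+1}(\bar\omega)+(1-\lambda)W_{t+1}^\sharp(\omega)]$; by Lemma~\ref{lem:key lemma} one has $W_{t+1}^\flat(\omega)=\bigcap_{\bar\omega}\widehat W_{t+1}(\bar\omega)$. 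The convex set $S:=W_{t+1}^\sharp(\omega)\cap(y+C_t(\omega))$ is therefore nonempty; I will pick $z\in\ri S$ and aim to prove $z\in W_{t+1}^\flat(\omega)$, which together with $z-y\in C_t(\omega)$ immediately yields $y\in W_{t+1}^\flat(\omega)-C_t(\omega)$.

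To verify $z\in W_{t+1}^\flat(\omega)$, let $F^*$ be the minimal face of $W_{t+1}^\sharp(\omega)$ containing $S$; standard convex geometry yields $z\in\ri F^*$. For each $\bar\omega$, decomposing $z^{\bar\omega}=\lambda w+(1-\lambda)w'$ with $w\in W_{t+1}(\bar\omega)$, $w'\in W_{t+1}^\sharp(\omega)$, $\lambda\in(0,1)$ (directly from the definition of $\widehat W_{t+1}(\bar\omega)$), the inclusion $z^{\bar\omega}\in F^*$ and the defining property of faces force $w\in F^*$, so $W_{t+1}(\bar\omega)\cap F^*\neq\varnothing$. Fixing any $w^{\bar\omega}\in W_{t+1}(\bar\omega)\cap F^*$ and exploiting $z\in\ri F^*$, I can choose $\lambda>0$ small enough for $(z-\lambda w^{\bar\omega})/(1-\lambda)$ to lie in $F^*\subseteq W_{t+1}^\sharp(\omega)$, placing $z$ in $\widehat W_{t+1}(\bar\omega)$. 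Since this works for every $\bar\omega$, the conclusion $z\in W_{t+1}^\flat(\omega)$ follows. The main obstacle is the face-theoretic step, in particular justifying $z\in\ri F^*$ (which rests on the standard fact that every point of a convex set lies in the relative interior of a unique face) and checking that the pullback $(z-\lambda w^{\bar\omega})/(1-\lambda)$ stays inside $W_{t+1}^\sharp(\omega)$ for suitably small $\lambda$.
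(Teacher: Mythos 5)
Your proof is correct. The easy inclusion and the overall reduction coincide with the paper's: both arguments pick a relative-interior point of the convex set $W_{t+1}^\sharp(\omega)\cap(y+C_t(\omega))$ (the paper works with the translate $C_t(\omega)\cap(W_{t+1}^\sharp(\omega)-y)$, which is the same set shifted by $-y$) and reduce the problem to showing that this point lies in $W_{t+1}^\flat(\omega)$. Where you diverge is in how that claim is established. The paper argues via measures: for each $\omega^1\in\Sigma_t^\omega$ it takes an achievable point $c^1$ whose representing measure charges $\Sigma_{t+1}^{\omega^1}$, prolongs the segment from $c^1$ through $c$ inside $C_t(\omega)\cap(W_{t+1}^\sharp(\omega)-y)$ to another achievable point $c^2$, and combines the two pairs $(\xi^i,Q^i)$ as in Lemma~\ref{lem::convexity} to exhibit $y+c$ in $\lambda W_{t+1}(\omega^1)+(1-\lambda)W_{t+1}^\sharp(\omega)$. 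You instead argue purely set-theoretically, transplanting the face argument from the converse direction of Lemma~\ref{lem:key lemma}: the minimal face $F^*$ of $W_{t+1}^\sharp(\omega)$ containing $S$ satisfies $z\in\ri F^*$ (because the relative interiors of the faces of a convex set partition it, so the minimal face containing $z$ both contains $S$, by prolongation within $S$, and is contained in $F^*$); the face property applied to the decomposition $z^{\bar\omega}=\lambda w+(1-\lambda)w'$ with $w\in W_{t+1}(\bar\omega)$ and $w'\in W_{t+1}^\sharp(\omega)$ forces $W_{t+1}(\bar\omega)\cap F^*\neq\varnothing$; and prolongation inside $F^*$ then places $z$ in $\widehat W_{t+1}(\bar\omega)$ for every $\bar\omega$. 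Both routes ultimately rest on the same prolongation characterization of relative-interior points; yours avoids constructing the combined pair $(\xi,Q)$ explicitly, at the price of the face machinery, and makes the kinship with the proof of Lemma~\ref{lem:key lemma} more visible. The steps you flag as obstacles do go through. The only (harmless) slip is attributing the identity $W_{t+1}^\flat(\omega)=\bigcap_{\bar\omega\in\Sigma_t^\omega}\widehat W_{t+1}(\bar\omega)$ to Lemma~\ref{lem:key lemma}; that is just the definition~\eqref{eq: Uflat}.
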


\begin{proof}
  To show $\subset$ fix an $\bar\omega\in\Sigma_t^\omega$. Let $ y\in W_{t+1}^\flat(\omega)-C_t(\omega)$ be arbitrary and write it as $y=w-c$, where $w\in W_{t+1}^\flat(\omega)$ and $c\in C_t(\omega)$. The rest follows from Lemma~\ref{lem:key lemma}.
	
  As for the converse, let $y$ be an element of the set on the right hand side. The set $C_t(\omega)\cap(W_{t+1}^\sharp(\omega)-y)$ is convex as it is an intersection of convex sets. To see that it is also nonempty, let $\bar\omega\in\Sigma^\omega_t$ be arbitrary and let $\xi$ and $Q$ be from the definition of the right hand side. Then $\E_Q[\xi|\cF_t]-y\in C_t(\omega)\cap(W_{t+1}^\sharp(\omega)-y)$. We claim that $y+c\in W_{t+1}^\flat(\omega)$ for every $c\in\ri(C_t(\omega)\cap(W_{t+1}^\sharp(\omega)-y))$, from which the result follows.
  
  Proceed as in the proof of Lemma~\ref{lem::convexity}. Choose first an arbitrary $\omega^1\in\Sigma^\omega_t$ and the corresponding $\xi^1\in\cL(\cF_{t+1};W_{t+1})$ and $Q^1\in\cP(\omega^1)$, $Q^1[\Sigma^\omega_t]=1$, such that $c^1 = \E_{Q^1}[\xi^1|\cF_t](\omega)-y$. Then, by the choice of $c$ in the relative interior, there exists a $\xi^2\in\cL(\cF_{t+1};W_{t+1})$ with $Q^2$, a finite support measure with $Q^2[\Sigma^\omega_t]=1$, such that $c^2 = \E_{Q^2}[\xi^2|\cF_t](\omega)-y$ and $c = \lambda c^1+(1-\lambda)c^2$. Then, by choosing the pair
  $$
    Q=\lambda Q^1+(1-\lambda)Q^2,
    \qquad
    \xi(\omega)
    =
    \lambda\frac{Q^1[\Sigma^\omega_{t+1}]}{Q[\Sigma^\omega_{t+1}]}\xi^1(\omega)
    +
    (1-\lambda)\frac{Q^2[\Sigma^\omega_{t+1}]}{Q[\Sigma^\omega_{t+1}]}\xi^2(\omega)
  $$
  we have $y+c\in \lambda U(\omega^1)+(1-\lambda)U^\sharp(\omega)$. Note that we use $\frac00=1$ in the last equation. From $\omega^1$ being arbitrary and from \eqref{eq: Uflat} the result follows.
\end{proof}

\medskip

To prove the main theorem it remains to show that if the \MSP\ $(V,C)$ is solvable, then also the set $W_t(\omega)$ is nonempty for every $t\in\cI$ and every $\omega\in\Omega$. To this aim, we need to connect the solution of the \MSP\ to sets $W_t(\omega)$. We define for every $t\in\cI$ and $\omega\in\Omega$ the following set
$$
  \cT_t(\omega) \coloneqq \big\{\xi_t(\omega)\,\big|\,(\xi,Q)\mbox{ local solution to \MSP\ at } \omega\big\}\subset V_t(\omega).
$$
 It is nonempty by the assumption that \MSP\ is solvable. To see that $\cT_t(\omega)$ is convex, choose two local solutions $(\xi^1,Q^1)$ and $(\xi^2,Q^2)$ to the \MSP\ $(V,C)$ at $\omega$. For any $\mu\in(0,1)$ set $\lambda = \frac{\mu Q^2[\Sigma_t^\omega]}{\mu Q^2[\Sigma_t^\omega]+(1-\mu) Q^1[\Sigma_t^\omega]}$. By Lemma~\ref{lem::convexity}, there is a process $\xi$ such that the pair $(\xi,Q)$, with $Q=\lambda Q^1+(1-\lambda)Q^2$, is a solution to the \MSP; the process $\xi$ is given in equation~\eqref{eq::convexity}. The evaluation yields
$$
  \xi_t(\omega) 
  = 
  \lambda\frac{Q^1[\Sigma_t^\omega]}{Q[\Sigma_t^\omega]}\xi^1_t(\omega)
  +(1-\lambda)\frac{Q^2[\Sigma_t^\omega]}{Q[\Sigma_t^\omega]}\xi^2_t(\omega)
  = \mu \xi^1_t(\omega) + (1-\mu)\xi^2_t(\omega).
$$

Example~\ref{ex::flat vs sharp} shows that $\cT_t(\omega)\not\subset W_t(\omega)$ in general. To establish the main theorem it is enough to prove $\ri\cT_t(\omega)\subset W_t(\omega)$ for every $\omega$, $t$. We are going to prove this inclusion by showing that
\begin{align}\label{eq::to prove inclusion}
    \ri\cT_t(\omega)\subset (W_{t+1})^\flat(\omega)-C_t(\omega).
\end{align}
Since $\ri \cT_T\subset V_T=W_T$, by showing \eqref{eq::to prove inclusion} we also have $\ri\cT_t\subset W_{t}$ which is therefore non-empty.
\begin{remark}\label{lem::remark on measurability of cT}
We will prove~\eqref{eq::to prove inclusion}  by showing that $\ri\cT_t(\omega)\subset (\ri\cT_{t+1})^\flat(\omega)-C_t(\omega)$ holds, where the objects are defined purely algebraically.
\end{remark}

\begin{proof}[Proof of Theorem \ref{thm: main1}.]
  If $W_t(\omega)\not=\varnothing$ for all $\omega$ and $t$, Lemma~\ref{lem:extension} implies that the \MSP\ is solvable.

We now prove~\eqref{eq::to prove inclusion}. Proceed with a sequence of easy observations.
 
  \textsc{Step~1:} Let $y\in \ri\cT_t(\omega)$. We claim that for every $\bar\omega\in\Sigma^\omega_t$ there exists a solution $(\xi,Q)$ for the \MSP\ such that
  $$
    Q\big[\Sigma^{\bar\omega}_{t+1}\big]>0,\quad 
    \xi_t(\omega)=y
    \quad\text{and}\quad
    \xi_{t+1}(\bar\omega)\in\ri\cT_{t+1}(\bar\omega).
  $$ 
  Indeed, by the definition of $\cT_{t+1}(\bar\omega)$, there exists a solution $(\xi',Q')$ to \MSP\ such that $Q'[\Sigma^{\bar\omega}_{t+1}]>0$ and $\xi_{t+1}(\bar\omega)\in\ri\cT_{t+1}(\bar\omega)$. Notice that $\xi'_t(\omega)\in\cT_t(\omega)$. Since $y\in \ri\cT_t(\omega)$, there exists a solution $(\xi'',Q'')$ to the \MSP\ and a $\lambda\in(0,1)$ such that $y = \lambda\xi'_t(\omega)+(1-\lambda)\xi''_t(\omega)$. Use the construction of Lemma~\ref{lem::convexity}, i.e. the argument above this proof showing convexity of $\cT_{t+1}(\omega)$, to conclude.
  
  \textsc{Step~2:} Let $y\in \ri\cT_t(\omega)$. We claim that for each finite $\{\omega^1,\ldots,\omega^p\}\subset\Sigma^\omega_t$ there exists a solution $(\xi,Q)$ to the \MSP\ such that
  $$
    Q\big[\Sigma^{\omega^i}_{t+1}\big]>0,\quad 
    \xi_t(\omega)=y
    \quad\text{and}\quad
    \xi_{t+1}(\omega^i)\in\ri\cT_{t+1}
    \quad\forall i=1,\ldots,p.
  $$ 
  Indeed, use Step~1 to get $(\xi^i,Q^i)$, local solutions for the \MSP\ at $\omega^i$, respectively, each satisfying the conclusions of Step~1. Then any convex combination, in the sense of Lemma~\ref{lem::convexity}, i.e. the argument above this proof, will do.

\textsc{Step~3: (the induction step)} Assume that $\ri\cT_{t+1}(\omega)\subset W_{t+1}(\omega)$ for every $\omega\in\Omega$; in particular that $W_{t+1}(\omega)$ is nonempty for every $\omega$. Then
$$
  \ri\cT_t(\omega)\subset W^\flat_{t+1}(\omega)-C(\omega),
$$
in particular, $\ri\cT_t(\omega)\subset W_t(\omega)$.

Note that the later statement follows directly from the former. Indeed, since $\cT_t(\omega)\subset V_t(\omega)$ for every $\omega$, we get $\ri\cT_t(\omega)\subset V_t(\omega)\cap(W^\flat_{t+1}(\omega)-C(\omega))=W_t(\omega).$

To show the inclusion, we will use Lemma~\ref{lem::characterization}. So, fix an $\omega\in\Omega$ and let $y\in \ri\cT_t(\omega)$ be arbitrary. We want to show that there exists a selection $Y\in\cL(\cF_{t+1};W_{t+1})$ and a measure $Q\in\cP(\omega)$ with $Q[\Sigma^\omega_t]=1$ such that 
  $$ 
    Y(\bar\omega)\in\ri\cT_{t+1}(\bar{\omega})\quad \forall\bar{\omega}\in\supp Q
    \quad\mbox{and}\quad
    y \in \E_Q[Y]-C(\omega).
  $$
  To this aim, let $\{\omega^1,\ldots,\omega^p\}\subset\Sigma^\omega_t$ be such that
  $$
    \aff (\cT_{t+1})^\sharp(\omega) = \aff\bigcup_{i=1}^p\cT_{t+1}(\omega^i).
  $$
  Indeed, we are working in $\R^d$, hence this set always exists and can be chosen such that $p\leq d$. Choose a solution $(\xi,\bar Q)$ to the \MSP\, such that $\xi_{t+1}(\omega^i)\in\ri\cT_{t+1}(\omega^i)$ for every $i=1,\ldots,p$; this exists by Step~2. Denote $c=\E_{\bar Q}[\xi_{t+1}]-y\in C_t(\omega)$. 

We claim that
\begin{align}\label{eqn::key thing}
  y+c\in
  \ri\conv\bigcup_{\bar\omega\in\supp\bar Q}\cT_{t+1}(\bar\omega)=
  \ri\conv\bigcup_{\bar\omega\in\supp\bar Q}\ri\cT_{t+1}(\bar\omega).
\end{align}
The equality of the two sets in equation~\eqref{eqn::key thing} is clear, since both sets are convex, relatively open and have the same closure. To see why this is enough, note that every element of the right hand side in equation~\eqref{eqn::key thing} may be written as
$$
  \sum_{\bar\omega\in\supp\bar Q}\lambda^{\bar\omega}\psi^{\bar\omega}\qquad\mbox{with }\,\lambda^{\bar\omega}\in(0,1),\ \psi^{\bar\omega}\in\ri\cT_{t+1}(\bar\omega)\ \forall\bar\omega\in\supp\bar Q. 
$$
Let $Y$ be any selection of $W_{t+1}$ and modify it on $\bigcup_{\bar\omega\in\supp\bar Q}\Sigma^{\bar\omega}_{t+1}$ as follows: $Y(\bar\omega) = \psi^{\bar\omega}$ for all $\bar\omega\in\supp\bar Q$ and define the measure $Q = \sum_{\bar\omega\in\supp\bar Q}\lambda^{\bar\omega}\delta_{\bar\omega}$.

It remains to prove~\eqref{eqn::key thing}. Start by showing that the set
$$
  A\coloneqq\bigg\{\sum_{i=1}^p \lambda^i\zeta^i\,\bigg|\,\zeta^i\in\ri\cT_{t+1}(\omega^i),\ \lambda^i\in(0,1),\ \lambda^1+\cdots+\lambda^p = 1\bigg\}
$$
is relatively open; it is of maximal dimension by definition of $\{\omega^1,\ldots,\omega^p\}$. Choose a maximal affinely independent set $\{x_1,\ldots,x_\ell\}\subset\bigcup_{i=1}^p\ri\cT_{t+1}(\omega^i)$. Then
$$
  A\supseteq B\coloneqq\bigg\{\sum_{i=1}^\ell \lambda^ix_i\,\bigg|\,\lambda^i\in(0,1),\ \lambda^1+\cdots+\lambda^p = 1\bigg\},
$$
and the set $B$ is relatively open and of maximal dimension. Then
$$
  A = \bigcup \lambda B + (1-\lambda)\sum_{i=1}^p \lambda^i\zeta^i;
$$
the union is over $\lambda\in(0,1)$ and the sum over $\zeta^i\in\ri\cT_{t+1}(\omega^i)$ and $\lambda^i\in(0,1)$ for every $i$ with $\lambda^1+\cdots+\lambda^p = 1$. This proves that $A$ is relatively open. Coming back to equation~\eqref{eqn::key thing}, we note that
\begin{align*}
  y+c 
  = 
  \E_{\bar Q}[\xi_{t+1}] 
  &= 
  \sum_{i=1}^p\bar Q[\Sigma^{\omega^i}_{t+1}]\xi_{t+1}(\omega^i) + \sum_{\substack{\bar\omega\in\supp\bar Q\\\bar\omega\not\in\{\omega^1,\ldots,\omega^p\}}}\bar Q[\Sigma^{\bar\omega}_{t+1}]\xi_{t+1}(\bar\omega)\\[1ex]
  &\in (\bar Q[\Sigma^{\omega^1}_{t+1}]+\cdots+\bar Q[\Sigma^{\omega^p}_{t+1}])A + \sum_{\substack{\bar\omega\in\supp\bar Q\\\bar\omega\not\in\{\omega^1,\ldots,\omega^p\}}}\bar Q[\Sigma^{\bar\omega}_{t+1}]\xi_{t+1}(\bar\omega)\\[1ex]
  &\subset
  \ri\conv\bigcup_{\bar\omega\in\supp \bar Q}\cT_{t+1}(\bar\omega).
\end{align*}
This establishes~\eqref{eqn::key thing}.

\textsc{Step~4:} We finish the proof by noticing that $\ri\cT_T(\omega)\subset W_T(\omega)$ for every $\omega$ and using Step~3 repeatedly.
\end{proof}

\section{Applications to problems of mathematical finance}
In this section we describe the connection between the martingale selection problem and the theory of arbitrage in various types of financial markets. We will provide the examples in increasing order of complexity. In what follows we always suppose that a Polish space $\Omega$ is given, it is endowed with its Borel sigma-algebra $\cB(\Omega)$, and that the trading dates are specified by $\cI\coloneqq\{0,\ldots,T\}$ with $T\in\N$ fixed. Furthermore, we assume from now on that all correspondences have conical values, unless explicitly stated otherwise.

\subsection{The frictionless market model}
The first example is that of a frictionless market model, described by a process $S\coloneqq(S_t)_{t\in\cI}$ with $S_t\colon(\Omega,\cB(\Omega))\mapsto(\R^{d},\cB(\R^{d}))$ for every $t\in\cI$. In addition to $S$, the agent also holds a position in the riskless asset, which we denote by $B=(B_t)_{t\in\cI}$. We assume that $B_t(\omega)=1$ for all $\omega,\,t$. Consider the filtration $\F:=(\cF_t)_{t\in\cI}$ as constructed in Section~\ref{Section: setup}. Positions in the risky asset are subject to constraints, modeled by a sequence $A\coloneqq(A_t)_{t\in\cI}$, with every $A_t\colon\Omega\rightrightarrows\R^d$ an $\cF_t$-measurable correspondence with convex, closed and conical values. More precisely, admissible strategies, i.e. positions in the risky asset, are
$$
  \cH_A = \big\{(h_t)_{t\in\cI} \,\big|\,h_t\in\cL(\cF_t;A_t)\ \forall t\in \cI\big\}.
$$
Clearly, $\cH_A$ is a convex cone. Position in the riskless asset, which we denote by $h^0=(h^0_t)_{t\in\cI}$ can be determined by the self-financing condition: at time $t\in\cI$ a change in the holdings in the risky asset need to be financed by a change in position in the riskless
$
  h^0_t-h^0_{t-1}=-\langle S_t,h_t-h_{t-1}\rangle 
$
with the convention that $h_{-1}=0$ and $h^0_{-1}\in\R$ is the initial capital. The value of a strategy $h\in\cH_A$ is given by
$$
  \mathcal{V}_T(h) = h^0_{-1}+\sum_{t=0}^{T-1} \big\langle h_t,S_{t+1}-S_t\big\rangle.
$$
Note that $\mathcal{V}_T(h)=h^0_T$ with the assumption $h_T=0$. We say that the market model is arbitrage-free if for every $h\in\cH_A$ with zero initial capital
$$
  \mathcal{V}_T(h) \geq 0 \ \ \forall\omega\in\Omega \qquad \Longrightarrow \qquad \mathcal{V}_T(h) = 0 \ \ \forall\omega\in\Omega.
$$
\begin{theorem}\label{thm:FTAPfrictionless}
The market model given by $(S_t)_{t\in\cI}$, $(B_t)_{t\in\cI}$ and $(A_t)_{t\in\cI}$ is arbitrage free if and only if for every $\bar{\omega}\in\Omega$ there exists a finite support measure $P\in\cP(\bar\omega)$ such that  
\begin{align}\label{eqn:frictionless}
  E_P[S_{t+1}-S_t\,|\,\cF_t]\in -A^\ast_t
  \qquad P\textup{-a.s.} \mbox{ for all }t.
\end{align}
\end{theorem}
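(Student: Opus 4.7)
The strategy is to realize the dual condition~\eqref{eqn:frictionless} as a martingale selection problem and then invoke Theorem~\ref{thm: main1}. I would set $V_t(\omega)\coloneqq\{S_t(\omega)\}$ and $C_t(\omega)\coloneqq-A_t^\ast(\omega)$. A singleton is trivially relatively open in its own affine hull, and $A_t^\ast$ is closed, convex and conical, so $(V,C)$ is a \MSP\ in the sense of Definition~\ref{def:solvability}. Any local solution $(\xi,Q)$ at $\bar\omega$ must have $\xi_t=S_t$ on $\supp Q$, and then the constraint~\eqref{eq: martingale} is precisely~\eqref{eqn:frictionless}. Hence \eqref{eqn:frictionless} holding at every $\bar\omega\in\Omega$ is equivalent, via Theorem~\ref{thm: main1}, to $W_t(\omega)\neq\varnothing$ for all $t,\omega$, where $W_T=\{S_T\}$ and $W_t=\{S_t\}\cap(W^\flat_{t+1}+A_t^\ast)$ for $t<T$.

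For the ``if'' direction, fix $\bar\omega\in\Omega$ and choose $P\in\cP(\bar\omega)$ satisfying~\eqref{eqn:frictionless}. Let $h\in\cH_A$ have zero initial capital and $\mathcal{V}_T(h)\ge 0$ everywhere. The tower property yields
$$
  \E_P[\mathcal{V}_T(h)]=\sum_{t=0}^{T-1}\E_P\big[\langle h_t,\E_P[S_{t+1}-S_t\mid\cF_t]\rangle\big]\le 0,
$$
since $h_t\in A_t$ pairs non-positively with any element of $-A_t^\ast$. Combined with $\mathcal{V}_T(h)\ge 0$, this forces $\mathcal{V}_T(h)=0$ on $\supp P\ni\bar\omega$, and arbitrariness of $\bar\omega$ gives $\mathcal{V}_T(h)\equiv 0$.

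For the ``only if'' direction, I argue the contrapositive. Suppose $W_t(\omega_0)=\varnothing$ for some $t,\omega_0$ with $t$ maximal; then $W_s=\{S_s\}$ for every $s>t$, so Lemma~\ref{lem: ri flat} applies and gives $W^\flat_{t+1}(\omega_0)=\ri\conv\{S_{t+1}(\bar\omega)\mid\bar\omega\in\Sigma_t^{\omega_0}\}$. The exclusion $S_t(\omega_0)\notin W^\flat_{t+1}(\omega_0)+A_t^\ast(\omega_0)$ is a separation statement: a Hahn--Banach argument produces $h\in\R^d\setminus\{0\}$ with $\langle h,z+a-S_t(\omega_0)\rangle\ge 0$ for every $z\in W^\flat_{t+1}(\omega_0)$ and $a\in A_t^\ast(\omega_0)$. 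Scaling $a$ along rays of $A_t^\ast$ forces $h\in A_t^{\ast\ast}=A_t(\omega_0)$, and passing to the closure of $W^\flat_{t+1}$ (which coincides with $\close{W^\sharp_{t+1}}$ and contains every $S_{t+1}(\bar\omega)$) yields $\langle h,S_{t+1}(\bar\omega)-S_t(\omega_0)\rangle\ge 0$ for all $\bar\omega\in\Sigma_t^{\omega_0}$. Setting $h_t$ equal to $h$ on $\Sigma_t^{\omega_0}$ and $0$ elsewhere (an $\cF_t$-measurable selection of $A_t$ by Lemma~\ref{lem:measurability of Sigma}), and $h_s\equiv 0$ for $s\ne t$, gives $\mathcal{V}_T(h)\ge 0$ pointwise.

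The main difficulty is to obtain strict positivity of $\mathcal{V}_T(h)$ at some point: because $W^\flat_{t+1}+A_t^\ast$ need not be closed, the separating hyperplane produced above can be only weak, so $\mathcal{V}_T(h)$ might vanish everywhere on $\Sigma_t^{\omega_0}$. To resolve this I would invoke Lemma~\ref{lem::characterization}, which pins down a specific $\bar\omega^\star\in\Sigma_t^{\omega_0}$ witnessing the exclusion. If the weak inequality above were an equality at every $\bar\omega\in\Sigma_t^{\omega_0}$, then $S_{t+1}(\Sigma_t^{\omega_0})$ would lie in the supporting hyperplane $\{z\mid\langle h,z-S_t(\omega_0)\rangle=0\}$; restricting the problem to this lower-dimensional affine subspace (with $A_t$ replaced by $A_t\cap h^\perp$, which is still a closed convex cone) reduces the question to the same obstruction in dimension $d-1$. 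Iterating this dimension-reduction at most $d$ times produces a direction along which $\langle h,S_{t+1}(\bar\omega^\star)-S_t(\omega_0)\rangle>0$, giving the desired pathwise arbitrage and completing the proof.
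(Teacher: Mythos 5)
Your reformulation of the dual condition as the \MSP\ with $V_t=\{S_t\}$ and $C_t=-A_t^\ast$ is legitimate and in fact tidier than the paper's choice $V_t=\ri\cone(1,S_t)$, $C_t=-\{0\}\times A_t^\ast$: a singleton is relatively open and convex, Theorem~\ref{thm: main1} applies, and solvability is \emph{literally} condition~\eqref{eqn:frictionless}, so you avoid the density-process step the paper needs to translate a solution $((y,\xi),Q)$ into a measure $P$. Your ``if'' direction is the same supermartingale argument as the paper's. The separation step in the ``only if'' direction also parallels the paper's, one dimension lower, and you are right that producing a separator $h\in A_t(\omega_0)$ with $\langle h,S_{t+1}(\bar\omega)-S_t(\omega_0)\rangle\ge 0$ for all $\bar\omega\in\Sigma_t^{\omega_0}$ is routine; you are also right to single out strict positivity at some $\bar\omega$ as the real difficulty (the paper simply asserts at this point that $z$ can be chosen with $0<\langle\bar y,z\rangle$ for some $\bar y\in W^\flat_{t+1}(\omega)$, without proof).

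The gap is in your dimension-reduction step, and it is not a removable technicality. If $W^\sharp_{t+1}(\omega_0)\subset S_t(\omega_0)+h^\perp$, the obstruction that survives in the subspace $h^\perp$ is $S_t(\omega_0)\notin W^\flat_{t+1}(\omega_0)+\bigl(A_t^\ast(\omega_0)\cap h^\perp\bigr)$, and a separator for \emph{that} problem lies in the polar of $A_t^\ast\cap h^\perp$ computed inside $h^\perp$, which is the \emph{closure} of the orthogonal projection of $A_t$ onto $h^\perp$ --- not $A_t\cap h^\perp$, and not the projection itself. (Dually, replacing $A_t$ by $A_t\cap h^\perp$ as you propose replaces $A_t^\ast\cap h^\perp$ by the larger cone $\cl$ of the projection of $A_t^\ast$, and the reduced obstruction need no longer hold.) A point of that closure need not lift, even after adding multiples of $h$, to an element of $A_t$, and approximating it by projections of $a_n\in A_t$ only gives $\langle a_n, S_{t+1}(\bar\omega)-S_t(\omega_0)\rangle\to\langle h',\cdot\rangle\ge 0$ pointwise, not $\ge 0$ for finite $n$ at every $\bar\omega$. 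When $A_t$ is polyhedral these projections are closed and your induction goes through; for a general closed convex cone it fails, and the failure is essential: taking $d=3$, $A$ a linear image of the non-polyhedral cone $\cl\cone\{(u,v,1)\mid v\ge u^2\}$, and $S_{t+1}$ with four values spanning a square inside a two-dimensional coordinate plane, one can arrange that \emph{every} $h\in A$ with $\langle h,S_{t+1}-S_t\rangle\ge0$ pointwise satisfies $\langle h,S_{t+1}-S_t\rangle\equiv0$ (so no one-step arbitrage exists), while $-A^\ast$ meets $\conv\{S_{t+1}(\bar\omega)\}$ only in a single vertex, so that~\eqref{eqn:frictionless} fails at the remaining $\bar\omega$'s. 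So no iteration of weak separations can produce the strict one; the step needs an additional hypothesis (e.g.\ polyhedrality of $A_t$) or a genuinely different argument, and the same caveat applies to the corresponding unproved assertion in the paper's proof.
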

\begin{proof}
First we show the `if' part of the statement. Let $h\in\cH_A$ be such that $h^0_{-1}=0$ and $\mathcal{V}_T(h)(\omega)\geq0$ for every $\omega\in\Omega$. By the statement of the theorem, for every $\bar\omega\in\Omega$ there exists a measure $P\in\cP(\bar\omega)$, such that~\eqref{eqn:frictionless} is satisfied. However, equation~\eqref{eqn:frictionless} implies that $E_P[\mathcal{V}_T(h)]=\sum_{t=0}^{T-1} E_P[\big\langle h_t,S_{t+1}-S_t\big\rangle]\leq0$, by the definition of the polar cone and the admissibility of $h$. Since $\mathcal V_T(h)$ is non-negative, it follows  that $\mathcal V_T(h)(\omega)=0$ for all $\omega\in\supp(P)$ and, in particular, for $\bar{\omega}$. Since $\bar\omega\in\Omega$ was arbitrary, the thesis follows.

We now show the `only if' part. Assume that the martingale selection problem $(V,C)$, given by
$$
  V_t = \ri\cone(1,S_t)\qquad\mbox{and}\qquad C_t=-(\R\times A_t)^\ast=-\{0\}\times A^\ast_t,
$$
is solvable. Let $\bar\omega\in\Omega$ be arbitrary and denote by $((y,\xi),Q)$ any local solution at $\bar\omega$. Then $\xi_t=y_tS_t$ for all $t$, by the definition of $V$. By the definition of $C$, the process $y$ is a martingale. Using $y$ as the density process we obtained the desired measure $P$.

It remains to show that the no-arbitrage condition implies the solvability of the \MSP. By Theorem~\ref{thm: main1}, we need to show that the correspondences $W_t$, defined in~\eqref{eq: iteration}, are nonempty for all $t,\,\omega$. We argue by contradiction: Let $t\in\cI$ be the largest index for which there exists an $\omega\in\Omega$ such that $W_t(\omega)=\varnothing$. This means, in particular, that $V_t(\omega)$ and $W^\flat_{t+1}(\omega)-C_t(\omega)$ are disjoint convex cones. Hence, there exists $z\in\R^{1+d}$ such that
\begin{align}\label{eqn:separation}
  \langle x,z\rangle\leq0\leq\langle y,z\rangle\quad\forall x\in V_t(\omega)\mbox{ and }y\in W^\flat_{t+1}(\omega)-C_t(\omega).
\end{align}
Moreover, $z$ can be chosen such that $0<\langle \bar y,z\rangle$ for some $\bar y\in W^\flat_{t+1}(\omega)$. To understand this separation, write $z=(z_0,\bar z)$, where $z_0\in\R$ and $\bar z\in\R^d$. First, a simple inspection of~\eqref{eqn:separation} yields $z\in(W^\flat_{t+1}(\omega)-C_t(\omega))^\ast=(W^\flat_{t+1}(\omega))^\ast\cap-C_t^\ast(\omega)$, i.e. $\bar z\in A_t(\omega)$. It is also easy to see that $\bar z\not=0$. Indeed, by the maximality of $t$, $W_{t+1}(\omega)\neq\varnothing$ for any $\omega$ and, since $V_{t+1}(\omega)$ is a ray in $\R^{d+1}$,~\eqref{eq: iteration} implies that $W_{t+1}(\omega)=V_{t+1}(\omega)$ for all $\omega$. If therefore, $\bar z=0$, the vector $z$ would not separate the two sets in~\eqref{eqn:separation}, since the first component of both sets is precisely the interval $(0,\infty)$. In addition,~\eqref{eqn:separation} and the fact that $C_t(\omega)$ is a cone, imply that $z\in(W^\sharp_{t+1})^*(\omega)$ and $z\in -V^*_t(\omega)$.
We thus obtain that, for every $\bar\omega\in\Sigma^\omega_t$,
$$
  0\leq
  \langle (1,S_{t+1}(\bar\omega)),(z_0,\bar z)\rangle -
  \langle (1,S_t(\omega)),(z_0,\bar z)\rangle
  =
  \langle S_{t+1}(\bar\omega)-S_t(\omega),\bar z\rangle
  =
  \mathcal{V}_T(h)(\bar\omega)
$$
where $h\in\cH_A$ is the strategy defined by $h_s=0$ for $s=\cI\backslash\{t\}$ and $h_t = \bar z\one_{\Sigma^\omega_t}$. Since there exists a $\bar y\in W^\sharp_{t+1}(\omega)$ such that the inequality is strict in~\eqref{eqn:separation}, this is an arbitrage strategy.
\end{proof}

\begin{remark}
  In the particular case of short-selling constraint, i.e. $A_t=\R_+^d$, we get that the market model is arbitrage free if and only if there exists a finite support measure $Q$, such that every component of $S$ is a $Q$ super-martingale.
\end{remark}

\begin{remark}
  In the course of the proof we have established that if a frictionless market model with portfolio constraints admits arbitrage, it also admits a one-step arbitrage. This result is well known in the theory of no-arbitrage market models without portfolio constraints.
\end{remark}
\begin{remark}
The notion of arbitrage considered here is a particular case of arbitrage \emph{de la classe} $\mathcal{S}$ introduced 
in \cite{BFM16} and called \emph{1p-arbitrage}.
In the proof of Theorem \ref{eqn:frictionless} we showed that 1p-arbitrages can be constructed as separators of the sets $V_t(\omega)$ and $W^\flat_{t+1}(\omega)-C_t(\omega)$ (for those $\Sigma^\omega_t$ for which they are disjoint). We can construct a 1p-arbitrage which is non-zero on every such level set by taking $z:=\sum_{n=1}^{\infty} \frac{1}{2^n|z_n|} z_n$ where $\{z_n\}_{n\in\N}$ is a Castaing representation of $(V_t-(W^\flat_{t+1}-C_t))^*$.
This is called a \emph{standard separator} in \cite{BFHMO16} and it is instrumental in deriving versions of the FTAP for arbitrages de la classe $\mathcal{S}$.
The same analysis could be replicated here with minor modifications.
\end{remark}

\subsection{Kabanov's model of currency markets}\label{section Kabanov}
 The financial market is fully described by a discrete-time process $K\coloneqq(K_t)_{t\in\cI}$, where every $K_t$ is a $\cB(\Omega)$-measurable correspondence whose values are closed cones in $\R^d$. We call the set $K_t$ a \emph{solvency cone} and its elements are portfolio compositions that can be liquidated, i.e. for which one can find a counterparty to take it at zero cost. We consider $\F\coloneqq(\cF_t)_{t\in\cI}$ as defined in Section~\ref{Section: setup}.

An adapted process $h=(h_t)_{t\in\cI}$ is called \emph{a self-financing strategy} if its increments can all be achieved at zero cost, i.e. 
$
  h_{t-1}-h_t\in K_t
$
with the convention that $h_{-1}=0$. We can also sophisticate the original Kabanov's model and introduce restrictions to the class of portfolios the trader is allowed to hold. We do this by introducing a conical constraints set $A\coloneqq(A_t)_{t\in\cI}$ where the correspondence $A_t\colon\Omega\rightrightarrows\R^d$ is $\cF_t$-measurable. Denote by
$$
  \cH_{K,A}\coloneqq\big\{(h_t)_{h\in\cI}\,\big|\,h_t\in\cL(\cF_t;A_t),\ h\mbox{ is self-financing}\big\},
$$
the class of admissible strategies.

 In this section we will need the following condition.
\begin{assumption}\label{ass: EF} For every $t\in\cI$ and every $\omega\in\Omega$
 $$
   \inter K_t(\omega)\supseteq\R^d_+\backslash\{0\}\quad\text{ and }\quad A_t(\omega)\cap \R^d_+\setminus\{0\}\neq\varnothing.
 $$
\end{assumption}

The first condition states that every non-negative position in the market is considered solvent, stated differently, one can freely dispose of assets. The second condition says that there exists at least one non-negative position which is allowed. 

 We next introduce some concepts of arbitrage.
\begin{definition}\label{def: dominance}
\begin{enumerate}
\item[(a)] Let $K_1$, $K_2$ be closed cones. We say that cone $K_1$ dominates cone $K_2$ if 
  $$
    K_2\setminus (K_2\cap-K_2)\subset\ri K_1.
  $$
\item[(b)] We say that a market model $(\widehat K,A)$ dominates the market model $(K,A)$ if for every $t\in\cI$ and $\omega\in\Omega$ the cone $\widehat K_t(\omega)$ dominates the cone $K_t(\omega)$.
\end{enumerate}
\end{definition}

\begin{definition}
An admissible strategy $h\in\cH_{K,A}$ is an arbitrage strategy if $h_T\in\cL(\cF_T;\R^d_+)\setminus\{0\}$. Define two types of no-arbitrage condition
\begin{description}
 \item[\ \ $\NA^w$] weak no-arbitrage: $(K,A)$ admits no arbitrage strategies;
 \item[\ \ $\NA^r$] robust no-arbitrage: $(K,A)$ is dominated by $(\widehat K,A)$ satisfying $\NA^w$.
\end{description}
\end{definition}
In the above definition of an arbitrage strategy, it is assumed that $h_{-1}=0$ and that, before maturity, it can be liquidated into a portfolio with non-negative entries in every asset and strictly positive in some. Clearly this should not be allowed and it is excluded by the condition $\NA^w$. The stronger condition $\NA^r$, exclude the possibility that an arbitrary small reduction of the set of consistent price system, allows for some arbitrage strategies.

Before stating the main result of this section, we need to make an additional technical assumption.
\begin{assumption}\label{ass:relatively open}
  One of the following conditions holds:
\begin{enumerate}
  \item $A_t(\omega)=\R^d$ for every $t,\,\omega$; or
  \item $K_t(\omega)\cap-K_t(\omega)=\{0\}$ for every $t,\,\omega$.
\end{enumerate}
\end{assumption}

The second condition is known in the literature as efficient friction.
It models a situation where any trade in the market is subject to non-zero transaction costs.
From a technical point of view, we only need this condition in presence of portfolio constraints.

\begin{theorem}\label{thm: FTAP Kabanov}
Under Assumptions~\ref{ass: EF} and~\ref{ass:relatively open} robust no-arbitrage holds if and only if for every $\bar\omega\in\Omega$ there exists $P\in\cP(\bar\omega)$ and a process $\xi\coloneqq(\xi_t)_{t\in\cI}$ such that 
$$
  E_P[\xi_{t+1}-\xi_t|\cF_t]\in -A^\ast_t,\quad P\text{-a.s.}
$$
 and $\xi_t$ takes values in $\ri K^\ast_t$ for every $t\in\cI$.
\end{theorem}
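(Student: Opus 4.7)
The plan is to reformulate the dual condition in the theorem as solvability of an \MSP\ and then invoke Theorem~\ref{thm: main1}. Set $V_t(\omega):=\ri K_t^\ast(\omega)$ and $C_t(\omega):=-A_t^\ast(\omega)$; then $V$ is $\F$-adapted with convex, relatively open, conical values and $C$ is $\F$-adapted with closed convex conical values, so conicity is preserved by the operations $(\cdot)^\sharp$ and $(\cdot)^\flat$ and hence by the iteration defining $W_t$. A local solution $(\xi,P)$ of $(V,C)$ at $\bar\omega$ is exactly a pair as in the theorem statement, so by Theorem~\ref{thm: main1} the proof reduces to showing that $\NA^r$ is equivalent to $W_t(\omega)\neq\varnothing$ for every $t,\omega$.

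\emph{Sufficiency.} Given a local solution $(P,\xi)$ at an arbitrary $\bar\omega$ and any $h\in\cH_{K,A}$ with $h_{-1}=0$ and $h_T\in\cL(\cF_T;\R^d_+)$, the self-financing condition $h_{t-1}-h_t\in K_t$ combined with $\xi_t\in K_t^\ast$ gives $\langle\xi_t,h_t\rangle\le\langle\xi_t,h_{t-1}\rangle$, while the drift condition combined with $h_t\in A_t$ gives $\E_P[\langle\xi_{t+1}-\xi_t,h_t\rangle]\le 0$; Abel summation then yields $\E_P[\langle\xi_T,h_T\rangle]\le 0$. Under Assumption~\ref{ass: EF} the polar $K_T^\ast$ sits in $\R^d_+$, and under efficient friction $K_T$ is pointed so $\ri K_T^\ast=\inter K_T^\ast$ consists of componentwise strictly positive vectors; combined with $h_T\ge 0$ this forces $h_T(\bar\omega)=0$, giving $\NA^w$ by arbitrariness of $\bar\omega$. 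The constraint $\xi_t\in\ri K_t^\ast$ leaves room to shrink $K_t^\ast$ to a cone $\widehat K_t^\ast$ still containing $\xi_t$, and the enlarged $\widehat K_t:=(\widehat K_t^\ast)^\ast$ then dominates $K_t$ while admitting $\xi$ as a consistent price system, so the same telescoping yields $\NA^w$ for $(\widehat K,A)$, i.e.\ $\NA^r$.

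\emph{Necessity.} Assume $\NA^r$ and argue by contradiction exactly as in the proof of Theorem~\ref{thm:FTAPfrictionless}: let $t$ be the largest index with $W_t(\omega^\ast)=\varnothing$ for some $\omega^\ast$. Separate the nonempty disjoint convex cones $V_t(\omega^\ast)$ and $W_{t+1}^\flat(\omega^\ast)-C_t(\omega^\ast)$ by a hyperplane, obtaining $z\in\R^d\setminus\{0\}$ with $\langle z,x\rangle\le 0\le\langle z,y\rangle$ on the two sets and strict inequality somewhere on $W_{t+1}^\flat(\omega^\ast)$. Polar decoding yields $-z\in K_t(\omega^\ast)$, $z\in A_t(\omega^\ast)$ and $z\in(W_{t+1}^\flat(\omega^\ast))^\ast$, all constant on $\Sigma_t^{\omega^\ast}$ by $\cF_t$-measurability. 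I would then build the arbitrage by setting $h_s\equiv 0$ for $s<t$ and $h_t:=z\,\one_{\Sigma_t^{\omega^\ast}}$ (admissible and self-financing at $t$ since $-z\in K_t$ on the level set and $z\in A_t$), and for $s>t$ by liquidating $z$ on each branch $\bar\omega\in\Sigma_t^{\omega^\ast}$ into a non-negative terminal portfolio with strict positivity on some branch. The strict separation guarantees that the terminal wealth is nontrivial and, moreover, survives a small enlargement of the solvency cones, contradicting $\NA^r$.

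\emph{Main obstacle.} The hard step is the measurable construction of the continuation $(h_{t+1},\dots,h_T)$ on every branch in the necessity direction. The $(\cdot)^\flat$ operator is tailored precisely to this situation: whereas $W_{t+1}^\sharp$ would only guarantee downstream witnesses at \emph{some} $\bar\omega\in\Sigma_t^{\omega^\ast}$ (cf.\ \eqref{rmk: char U sharp}), Lemma~\ref{lem:key lemma} provides them at every $\bar\omega$ uniformly, which is exactly what makes the liquidation of $z$ pointwise admissible on the whole level set. Handling the two alternatives of Assumption~\ref{ass:relatively open} requires only minor variations: under efficient friction, pointedness of $K_T$ yields strict componentwise positivity of $\xi_T$---the engine of the sufficiency step---while the unconstrained alternative $A_t\equiv\R^d$ collapses $A_t^\ast$ to $\{0\}$, so $\xi$ becomes a genuine $P$-martingale and the pointedness hypothesis is not needed.
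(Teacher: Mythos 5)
Your reduction to the \MSP\ $(V_t,C_t)=(\ri K_t^\ast,-A_t^\ast)$ and your telescoping argument for $\NA^w$ match the paper (Lemma~\ref{lem: firstpart_suffK}), but both of the remaining steps have genuine gaps. For the robust part of sufficiency, you propose to shrink $K_t^\ast$ to a cone $\widehat K_t^\ast$ still containing $\xi_t$ and use the \emph{same} $\xi$ as a consistent price system for the dominating model. This only treats one $\bar\omega$ at a time: $\NA^r$ requires a \emph{single} dominating model $(\widehat K,A)$ admitting local solutions at \emph{every} $\bar\omega$, and distinct $\bar\omega$'s come with distinct local solutions $\xi$ that would each dictate a different shrinking. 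Reconciling these into one adapted dominating model is exactly the content of the paper's Lemma~\ref{lem: dominated solvable}, which builds an increasing exhaustion $U_t^n$ of $V_t$ by cones with $\overline{U_t^n}\setminus\{0\}\subset V_t$ and chooses an adapted index sequence by backward induction so that the shrunken \MSP\ remains solvable; the finiteness of the index at each step relies on the $W_t$ being relatively open, which is where Assumption~\ref{ass:relatively open} (via Lemma~\ref{lem: ri flat}) actually enters. Your sketch hides this entire construction in one sentence.

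In the necessity direction you have correctly located the hard step, but attributed it to the wrong tool and left the construction incomplete. The mechanism is not Lemma~\ref{lem:key lemma}: the paper works with the auxiliary iteration $w_t=V_t\cap\ri(w_{t+1}^\sharp-C_t)\subset W_t$ precisely so that the polar identity $w_t^\ast=K_t+\big((w_{t+1}^\sharp)^\ast\cap -C_t^\ast\big)$ (Lemma~\ref{lem: polar Y}, via Corollary~16.4.2 of Rockafellar) applies, and then iterates it together with the measurable-sum Lemma~\ref{lem:appendix sum} to decompose the separator as $z=k_{t+1}+\cdots+k_T$ with $k_u\in\cL(\cF_u;K_u)$ (Lemma~\ref{lem: polar sharp}). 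That decomposition \emph{is} the liquidation strategy, and it produces terminal wealth exactly $0$, not a non-negative terminal portfolio with strict positivity somewhere; the arbitrage is then extracted from the fact that some $k_{\tilde u}(\tilde\omega)$ lies outside $K_{\tilde u}\cap-K_{\tilde u}$, hence in $\inter\widehat K_{\tilde u}$ for any dominating $\widehat K$, which lets one peel off a positive position and push it forward to maturity (Lemma~\ref{lem: arb sum}). Your proposal asserts the existence of the branchwise liquidation and its strict positivity without supplying either, and without explaining why the construction survives for an \emph{arbitrary} dominating model rather than one of your choosing, which is what $\NA^r$ demands.
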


This theorem is known in the financial mathematics literature as the fundamental theorem of asset pricing, in the case when $A_t(\omega)=\R^d$ for all $t,\omega$. Indeed, in that case $A^\ast=\{0\}$ and the pair $(\xi,Q)$ is called `consistent price system': the process $\xi$ is a $Q$-martingale which take values in the relative interior of the polar of the solvency cones; see~\citep{S04}.
A concrete example is with $K_t^\ast=\cone(\{1\}\times [b^1_t,a^1_t]\times\cdots\times[b^{d-1}_t,a^{d-1}_t])$. Here the first asset serves as a num\'eraire and the two processes $(a_t)_{t\in\cI}$ and $(b_t)_{t\in\cI}$ describe the ask and bid prices of the remaining assets with respect to the num\'eraire.

In the presence of constrains, $P$ is a fair pricing measure in the market defined by $\xi$, because the value process of any trading strategy $h$ is a super-martingale under $P$.
Indeed, since $h_t\in\mathcal{L}(\mathcal{F}_t;A_t)$ and $E_P[\xi_{t+1}-\xi_t|\cF_t]\in -A^\ast_t$, $P$-a.s., we have
\[
E_P[h_t\cdot(\xi_{t+1}-\xi_t)|\cF_t]=h_t\cdot E_P[\xi_{t+1}-\xi_t|\cF_t]\le 0,
\]
from which the super-martingale property of $(\sum_{t=0}^i h_t\cdot(\xi_{t+1}-\xi_t))_{i\in\cT}$ follows.

\medskip
As in the frictionless case, we will connect this problem with an appropriate martingale selection problem. In fact, the main theorem is already stated in the form of a $\MSP$, namely, we take
\begin{equation}
\label{eq:MSP for Kabanov}
  V_t(\omega) = \ri K_t^\ast(\omega)\quad\mbox{ and }\quad C_t(\omega) = -A^\ast_t(\omega).
\end{equation}
In the rest of the section we analyze the \MSP\ $(V,C)$.

\begin{lemma}\label{lem: firstpart_suffK}
  Under Assumptions~\ref{ass: EF} and~\ref{ass:relatively open}, solvability of the \MSP\ $(V,C)$ implies $\NA^w$.
\end{lemma}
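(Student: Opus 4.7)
The plan is to argue by contradiction: suppose $h\in\cH_{K,A}$ is admissible with $h_{-1}=0$ and $h_T(\omega)\in\R^d_+$ for every $\omega$, but $h_T(\bar\omega)\neq 0$ for some $\bar\omega\in\Omega$. Using solvability of the \MSP\ $(V,C)$, I would fix a local solution $(\xi,Q)$ at $\bar\omega$, so that $\xi_t\in\cL(\cF_t;\ri K_t^\ast)$ and $\E_Q[\xi_{t+1}-\xi_t\mid\cF_t]\in -A_t^\ast$ $Q$-a.s. The driving idea is to show that the scalar $\langle\xi_T,h_T\rangle$ must simultaneously be nonnegative pointwise and nonpositive in $Q$-expectation, and then to read off $h_T(\bar\omega)=0$ from the strict positivity of the components of $\xi_T(\bar\omega)$ afforded by the efficient-friction-type assumption.

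The central device is the telescoping identity (using $h_{-1}=0$)
\[
  \langle \xi_T,h_T\rangle \;=\; \sum_{t=0}^{T}\langle \xi_t,h_t-h_{t-1}\rangle + \sum_{t=0}^{T-1}\langle \xi_{t+1}-\xi_t,h_t\rangle.
\]
Self-financing $h_{t-1}-h_t\in K_t$ together with $\xi_t\in\ri K_t^\ast\subseteq K_t^\ast$ renders each summand of the first sum nonpositive pointwise. Integrating against $Q$, the tower property combined with $h_t\in A_t$ and $\E_Q[\xi_{t+1}-\xi_t\mid\cF_t]\in -A_t^\ast$ makes the second sum nonpositive in expectation, so $\E_Q[\langle\xi_T,h_T\rangle]\leq 0$. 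Conversely, Assumption~\ref{ass: EF} yields $\R^d_+\subseteq K_T$, hence $K_T^\ast\subseteq\R^d_+$; together with $h_T\geq 0$ this gives $\langle\xi_T,h_T\rangle\geq 0$ pointwise. Consequently $\langle\xi_T,h_T\rangle=0$ $Q$-a.s., and in particular at $\bar\omega\in\supp Q$.

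The delicate step, which I expect to be the main obstacle, is upgrading $\langle\xi_T(\bar\omega),h_T(\bar\omega)\rangle=0$ to $h_T(\bar\omega)=0$. The plan is to invoke Assumption~\ref{ass: EF} once more: each basis vector $e_i$ lies in $\inter K_T(\bar\omega)$, and for a closed convex cone this implies $\langle\eta,e_i\rangle>0$ for every nonzero $\eta\in K_T^\ast(\bar\omega)$. Since $\xi_T(\bar\omega)\in\ri K_T^\ast(\bar\omega)$ is nonzero (under Assumption~\ref{ass:relatively open}(2), efficient friction makes $K_T^\ast$ full-dimensional so $\ri K_T^\ast=\inter K_T^\ast$; under Assumption~\ref{ass:relatively open}(1) the same conclusion holds in the non-degenerate regime $K_T\neq\R^d$), each coordinate of $\xi_T(\bar\omega)$ is strictly positive. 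Expanding $h_T(\bar\omega)=\sum_i h_T^i(\bar\omega)e_i$ with $h_T^i(\bar\omega)\geq 0$, the vanishing inner product forces $h_T^i(\bar\omega)=0$ for every $i$, contradicting $h_T(\bar\omega)\neq 0$ and establishing $\NA^w$.
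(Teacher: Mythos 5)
Your proof is correct and follows essentially the same route as the paper's: the same summation-by-parts decomposition of $\langle\xi_T,h_T\rangle$ into a self-financing part (killed pointwise by $\xi_t\in K_t^\ast$ and $h_{t-1}-h_t\in K_t$) and a constraints part (killed in expectation via the tower property and $\E_Q[\xi_{t+1}-\xi_t\mid\cF_t]\in-A_t^\ast$), followed by the same strict-positivity argument for the entries of $\xi_T$ under Assumption~\ref{ass: EF}. The only differences are cosmetic --- you telescope pointwise and then integrate, whereas the paper conditions on $\cF_t$ first and rearranges under the expectation --- and your explicit flagging of the degenerate case $K_T=\R^d$ (where $\ri K_T^\ast=\{0\}$) is, if anything, more careful than the paper's tacit assertion that $\xi_T$ has strictly positive entries.
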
 
\begin{proof}
Let $h\in\cH_{K,A}$ be such that $h_T\in\cL(\cF_T;\R_+^d)$. Let $\bar\omega\in\Omega$ be arbitrary and denote by $(\xi,Q)$ a local solution at $\bar\omega$. Since $\xi_T$ has strictly positive entries, by Assumption~\ref{ass: EF}, also $\langle \xi_T(\omega),h_T(\omega)\rangle\geq0$ for every $\omega$.
This implies $E_Q[\langle \xi_T,h_T\rangle]\geq0$. 
Note that, for any $0\le t\le T-1$, $\xi_t=E_Q[\xi_{t+1}|\cF_t]-\zeta_t$ for some $\zeta_t\in-A^\ast_t$. By the tower property, $\xi_t=E_Q[\xi_T|\cF_t]-E_Q[\zeta_{T-1}+\cdots+\zeta_t|\cF_t]$. 
Summing up, we obtain
\begin{align*}
  E_Q[\langle \xi_T,h_T\rangle]
  &=
  \sum_{t=0}^T E_Q[\langle \xi_T,h_t-h_{t-1}\rangle]
  =
  \sum_{t=0}^T E_Q[\langle E_Q[\xi_T|\cF_t],h_t-h_{t-1}\rangle]
  \\[3pt]&=
  \sum_{t=0}^T E_Q[\langle \xi_t,h_t-h_{t-1}\rangle]+\sum_{t=0}^{T-1} E_Q[\langle \zeta_t+\cdots+\zeta_{T-1},h_t-h_{t-1}\rangle]
  \\[3pt]&=
  \sum_{t=0}^T E_Q[\langle \xi_t,h_t-h_{t-1}\rangle]+\sum_{t=0}^{T-1} E_Q[\langle \zeta_t,h_t\rangle],
\end{align*}
Since $\xi_t\in K^\ast_t$ and $h_t-h_{t-1}\in-K_t$ by admissibility, the first sum is non-positive. 
Since $\zeta_t\in-A^\ast_t$ and $h_t\in A_t$ by admissibility, also the second sum is non-positive.
We conclude that $E_Q[\langle \xi_T,h_T\rangle]\le 0$. Since $\bar\omega\in\Omega$ was arbitrary, the claim follows.
\end{proof}

Let $(\widehat{K},A)$ be a market model with $\widehat{K}$ dominating $K$; we denote by $(\widehat{V},C)$ the \MSP\ defined as in equation~\eqref{eq:MSP for Kabanov} with $\widehat{K}$ replacing $K$.

\begin{lemma}\label{lem: dominated solvable} 
Suppose that $(V,C)$ is solvable.  Under Assumptions~\ref{ass: EF} and~\ref{ass:relatively open} there exists a market model $(\widehat{K},A)$ which dominates $(K,A)$ and such that $(\widehat{V},C)$ is solvable.
\end{lemma}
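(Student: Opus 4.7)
The plan is to use Theorem \ref{thm: main1} to obtain non-empty correspondences $W_t$, then thin $K^{\ast}$ down to a subcone $\widehat K^{\ast}$ built around a measurable selection of $W$, and finally verify both dominance and solvability of the new MSP via Theorem \ref{thm: main1} again. Concretely, from solvability of $(V,C)$ and Theorem \ref{thm: main1} I would obtain $W_t(\omega)\neq\varnothing$ for all $t,\omega$; by the iteration \eqref{eq: iteration} each $W_t$ is an $\cF_t$-measurable correspondence with convex conical values in $\ri K_t^{\ast}$, so $\ri W_t$ is a measurable non-empty correspondence and a standard measurable-selection argument furnishes $\xi_t\in\cL(\cF_t;\ri W_t)$, which I take to be non-zero pointwise. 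Choosing an $\cF_t$-measurable $\epsilon_t(\omega)>0$ small enough that $\xi_t(\omega)+\epsilon_t(\omega)(B\cap\aff K_t^{\ast}(\omega))\subseteq\ri K_t^{\ast}(\omega)$, I define
\[
\widehat K_t^{\ast}(\omega):=\overline{\cone}\bigl(\xi_t(\omega)+\epsilon_t(\omega)(B\cap\aff K_t^{\ast}(\omega))\bigr),\qquad \widehat K_t(\omega):=\widehat K_t^{\ast}(\omega)^{\ast}.
\]
By construction $\widehat K_t^{\ast}$ is a closed convex pointed cone with $\aff\widehat K_t^{\ast}=\aff K_t^{\ast}$, $\widehat K_t^{\ast}(\omega)\setminus\{0\}\subseteq\ri K_t^{\ast}(\omega)$, and $\xi_t(\omega)\in\ri\widehat K_t^{\ast}(\omega)$.

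Dominance follows from standard polar-cone calculus: $\widehat K_t^{\ast}\subseteq K_t^{\ast}$ gives $\widehat K_t\supseteq K_t$, and since $L_{\widehat K_t^{\ast}}=\{0\}$, for any $x\in K_t\setminus L_{K_t}$ and any $y\in\widehat K_t^{\ast}\setminus L_{\widehat K_t^{\ast}}\subseteq\ri K_t^{\ast}$ we have $\langle x,y\rangle>0$ (the defining property of the relative interior of a polar cone), so $x\in\ri\widehat K_t$. Hence $(\widehat K,A)$ dominates $(K,A)$ in the sense of Definition~\ref{def: dominance}. Assumption~\ref{ass:relatively open} is what makes this duality clean: under efficient friction $L_{K_t^{\ast}}$ behaves well, and under $A_t\equiv\R^d$ the constraints play no role.

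To prove $(\widehat V,C)$ solvable, I would apply Theorem \ref{thm: main1} in the converse direction and show by backward induction that $\xi_t(\omega)\in\widehat W_t(\omega)$, where $\widehat W_t$ is obtained by the iteration \eqref{eq: iteration} with $\widehat V_t=\ri\widehat K_t^{\ast}$. The base $t=T$ is immediate. For the inductive step $\xi_t\in\widehat V_t$ by construction, and it remains to verify $\xi_t\in\widehat W_{t+1}^{\flat}-C_t$ via Lemma~\ref{lem::characterization}. Starting from the analogous condition for $W$, which produces for each $\bar\omega\in\Sigma_t^{\omega}$ a selection $\eta'$ of $W_{t+1}$ and a measure $Q'\in\cP(\bar\omega)$ realising $\xi_t(\omega)$ modulo $C_t(\omega)$, I would convex-combine $(\eta',Q')$ with the globally defined selection $\xi_{t+1}\in\widehat W_{t+1}$ (from the induction hypothesis) using Lemma~\ref{lem::convexity}, to produce a selection of $\widehat W_{t+1}$ whose conditional expectation still differs from $\xi_t(\omega)$ by an element of $C_t(\omega)$. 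The hard step will be this last mixing: one must ensure the convex combination actually lies in the strictly smaller correspondence $\widehat W_{t+1}$ in a jointly measurable way. The relative interior room $\xi_{t+1}(\omega)\in\ri\widehat K_{t+1}^{\ast}(\omega)$ provides a pointwise guarantee, but its size depends on $\omega$, and a careful adaptive refinement of $\epsilon_t$ (or a Castaing-representation argument for $W_{t+1}$) will be needed to keep the mixing weights measurable and uniformly positive on the support.
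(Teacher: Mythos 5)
Your construction of $\widehat K_t^{\ast}$ as a single thin cone $\overline{\cone}\bigl(\xi_t+\epsilon_t(B\cap\aff K_t^{\ast})\bigr)$ around one fixed selection $\xi_t\in\cL(\cF_t;\ri W_t)$ does not survive the step you yourself flag as hard, and the obstruction is structural rather than a matter of measurable mixing weights. Membership $\xi_t(\omega)\in W_t(\omega)$ only guarantees (Lemma~\ref{lem::characterization}) that for each $\bar\omega\in\Sigma_t^\omega$ there is \emph{some} selection $\eta'$ of $W_{t+1}$ with $\E_{Q'}[\eta'|\cF_t](\omega)-\xi_t(\omega)\in C_t(\omega)$; the points $\eta'(\bar\omega)$ may lie anywhere in the relatively open cone $W_{t+1}(\bar\omega)$, in particular arbitrarily close to its relative boundary, and no fixed cone whose closure minus the origin sits inside $\ri K_{t+1}^{\ast}(\bar\omega)$ can contain all of them. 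Your proposed repair --- convex-combining $(\eta',Q')$ with the selection $\xi_{t+1}$ --- fails for a second reason: the combination $\lambda\,\E_{Q'}[\eta'|\cF_t](\omega)+(1-\lambda)\,\E_{Q''}[\xi_{t+1}|\cF_t](\omega)$ differs from $\xi_t(\omega)$ by $\lambda c'+(1-\lambda)\bigl(\E_{Q''}[\xi_{t+1}|\cF_t](\omega)-\xi_t(\omega)\bigr)$, and the second summand has no reason to lie in $C_t(\omega)$, since $\xi_{t+1}$ was chosen as an arbitrary selection of $\ri W_{t+1}$ with no martingale relation to $\xi_t$. So after mixing you lose the very property you are trying to preserve. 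No refinement of a single $\epsilon_t$ fixes this; what is needed is to let the approximating cones \emph{exhaust} $V_t$.

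That exhaustion is exactly the paper's argument, and it is worth comparing. The paper sets $U_t^n\coloneqq\cone\bigl(\tfrac{n}{1+n}(V_t\cap B_1(0))+\tfrac1{1+n}\xi_t\bigr)$ for a selection $\xi_t$ of $V_t$, obtaining an increasing sequence of relatively open, measurable cones with $\overline{U_t^n}\setminus\{0\}\subset V_t$ and $\bigcup_n U_t^n=V_t$. It then chooses the indices \emph{adaptively and backward in time}: assuming indices fixed from $t+1$ on, it runs the $W$-iteration for each candidate index $j$ at time $t$ and uses the fact that $W^{t+1,j}_t$ increases to $W_t$, which is nonempty and (by Assumption~\ref{ass:relatively open} and Lemma~\ref{lem: ri flat}) relatively open, so that the random index $\zeta=\inf\{j\mid W^{t+1,j}_t\neq\varnothing\}$ is finite and adapted. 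Dominance then comes for free from $\overline{U_t^n}\setminus\{0\}\subset\ri K_t^{\ast}$, by the same polar-cone computation you give (which is the one correct part of your argument). If you want to salvage your write-up, replace the single cone around $\xi_t$ by such an increasing family and replace your induction "$\xi_t\in\widehat W_t$" by the statement "$\widehat W_t\neq\varnothing$ for indices chosen large enough", which is where the relative openness of $W_t$ --- and hence Assumption~\ref{ass:relatively open} --- is actually used.
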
 
\begin{proof} By Theorem \ref{thm: main1}, solvability of $(V,C)$ implies $W_t(\omega)\neq\varnothing$ for all $t,\,\omega$. By Assumption~\ref{ass:relatively open} and Lemma~\ref{lem: ri flat}, those are also relatively open and we use this fact in a crucial way in the proof.

\textsc{Step 1.} Define a sequence of relatively open cones $\{U_t^n\}_{n\in\N}$ for each $t\in\cI$, such that
\begin{enumerate}
  \item[i)]  $\overline{U_t^n}\setminus\{0\}\subset V_t$ for any $n\in\N$;
  \item[ii)] for any $x\in V_t$ there exists $\bar{n}$ such that $x\in U_t^{\bar{n}}$;
  \item[iii)] $U^n_t$ is $\cF_t$-measurable.
\end{enumerate}

Let $(\xi_t)_{t\in\cI}$ be a collection of measurable selections of $(V_t)_{t\in\cI}$ and define
$$
  U^n_t\coloneqq\cone\bigg( \frac{n}{1+n}\big(V_t\cap B_1(0)\big)+\frac1{1+n}\xi_t\bigg),
$$
where $B_1(0)$ denotes a closed unit ball around $0$. It is easy to see that all the three properties are satisfied. Furthermore, $U^n_t(\omega)\subset U^{n+1}_t(\omega)$ for all $n,\,t,\,\omega$.

\textsc{Step 2.}
We will define an adapted sequence $(n_s)_{s\in\cI}$ such that the pair $(\widehat V,C)$, where $\widehat V_s\coloneqq U_s^{n_s}$ for all $s$, defines a solvable \MSP; we use the characterization of solvability given in Theorem \ref{thm: main1}. Set $n_s^T=1$ identically for all $s\in\cI$. Assume we defined $(n_s^{t+1})_{s\in\cI}$ for $t\in\cI\backslash\{T\}$ and we proceed to define the sequence $(n_s^t)_{s\in\cI}$. Consider the \MSP s $(V^{t+1,j},C)$, given by $V^{t+1,j}_s= U^{n_s^{t+1}+j}_s$ for all $j\in\N$ and $s\in\cI$. The corresponding sequences $W^{t+1,j}_s$ are given by
\begin{align*}
  W^{t+1,j}_s = V^{t+1,j}_s\cap \big((W^{t+1,j}_{s+1})^\flat-C_s\big)
  \qquad \mbox{ for }s=T-1,\ldots, 0.
\end{align*}
By the inductive assumption $W^{t+1,j}_s(\omega)\not=\varnothing$ for all $j\in\N$ and $s=t+1,\ldots,T$. Since $V^{t+1,j}_s$ is an increasing sequence whose union equals $V_s$, an analogous observation also holds for $W^{t+1,j}_s$ and $(W^{t+1,j}_{s})^\flat$ with respect to $W_s$ and $W_{s}^\flat$ . Since the sets $V_t$ and $W_t$ are relatively open, the following random variable
$$
  \zeta = \inf\{j\in\N\,|\,W^{t+1,j}_t\not=\varnothing\}
$$
has finite values. Define $n^t_s=n^{t+1}_s+\zeta\one_{s\geq t}$; it is clear that the sequence $(n^t_s)$ is adapted. Finish the proof by setting $n_s=n^0_s$ for all $s\in\cI$.
\end{proof}

\begin{corollary}\label{cor: suff Kabanov}
Under Assumptions~\ref{ass: EF} and~\ref{ass:relatively open}, solvability of the \MSP\ $(V,C)$ implies $\NA^r$.
\end{corollary}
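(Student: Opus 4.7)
The plan is to chain together the two lemmas that immediately precede the corollary. Concretely, Lemma~\ref{lem: dominated solvable} produces, from the solvability of the \MSP\ $(V,C)$, a dominating market model $(\widehat K, A)$ such that the corresponding \MSP\ $(\widehat V, C)$, defined via \eqref{eq:MSP for Kabanov} with $\widehat K$ in place of $K$, is itself solvable; Lemma~\ref{lem: firstpart_suffK} then transforms solvability of $(\widehat V, C)$ into $\NA^w$ for $(\widehat K, A)$. By the very definition of $\NA^r$, the existence of such a dominating model satisfying $\NA^w$ is what we need.

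The only technical point to check is that Lemma~\ref{lem: firstpart_suffK} is applicable to the dominating market $(\widehat K, A)$, i.e.\ that Assumptions~\ref{ass: EF} and~\ref{ass:relatively open} persist after domination. Assumption~\ref{ass: EF} asks that $\R^d_+\setminus\{0\}\subset\inter K_t(\omega)$ and that $A_t(\omega)\cap\R^d_+\setminus\{0\}\neq\varnothing$; the second condition involves only $A$, which is unchanged, while the first is preserved because $K_t\subseteq\widehat K_t$ (this follows from $\overline{U_t^{n_t}}\setminus\{0\}\subset\ri K_t^\ast$ upon taking polars) and because interiors enlarge under inclusion of closed cones. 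Assumption~\ref{ass:relatively open} either involves only $A$ (case~1) or asks efficient friction $K_t\cap -K_t=\{0\}$; in the latter case, the domination relation $K_t\setminus(K_t\cap-K_t)\subset\ri \widehat K_t$ forces $\widehat K_t\cap-\widehat K_t=\{0\}$ as well, since any nonzero $x\in\widehat K_t\cap-\widehat K_t$ would have to lie on the relative boundary of $\widehat K_t$, contradicting the fact that every nonzero element of $K_t$ lies in $\ri\widehat K_t$.

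With these checks in place the proof reduces to a single sentence: apply Lemma~\ref{lem: dominated solvable} to obtain $(\widehat K, A)$ with $(\widehat V, C)$ solvable, apply Lemma~\ref{lem: firstpart_suffK} to $(\widehat K, A)$ to conclude $\NA^w$ for $(\widehat K, A)$, and invoke the definition of $\NA^r$. No new construction is needed beyond verifying that the hypotheses transfer. The main (minor) obstacle I anticipate is writing down cleanly the persistence of Assumption~\ref{ass:relatively open} in the efficient-friction case; everything else is either already contained in the preceding lemmas or follows from elementary properties of polars of cones.
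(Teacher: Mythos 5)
Your proposal is correct and follows exactly the paper's route: the printed proof of Corollary~\ref{cor: suff Kabanov} is precisely the one-line chaining of Lemma~\ref{lem: dominated solvable} with Lemma~\ref{lem: firstpart_suffK}. Your extra verification that the assumptions persist under domination is more careful than the paper (which omits it); the only quibble is that pointedness of $\widehat K_t$ is most directly seen from full-dimensionality of $\widehat K_t^\ast\supset U_t^{n_t}$ (a nonzero $x\in\widehat K_t\cap-\widehat K_t$ would force $\widehat K_t^\ast\subset x^\perp$), since such an $x$ need not belong to $K_t$ and so cannot directly contradict $K_t\setminus\{0\}\subset\ri\widehat K_t$.
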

\begin{proof}
	It follows directly from Lemma \ref{lem: firstpart_suffK} and Lemma \ref{lem: dominated solvable}.
\end{proof}

To study the reverse implication, let us start by defining a sequence of measurable correspondences: set $w_T=V_T$ and
\begin{align}\label{eq: msp ri}
  w_t&=V_t\cap \ri\big(w^\sharp_{t+1}-C_t\big)
  \qquad \text{ for }t=T-1,\ldots, 0.
\end{align}
\begin{remark}\label{rmk: msp ri}
  It is clear that $w_t\subset W_t$ for every $t\in\cI$. Argue by induction: for $t=T$ it is true by definition, if $w_{t+1}\subset W_{t+1}$, then also 
$$
  \cl(w^\sharp_{t+1}-C_t)=\cl(w^\flat_{t+1}-C_t)\subset\cl(W^\flat_{t+1}-C_t)
$$
which yields the same inclusion for $t$.
\end{remark}

\begin{lemma}\label{lem: polar Y}
Let $t\in\cI\backslash\{T\}$ and $\omega\in\Omega$ such that $w_{t}(\omega)\neq\varnothing$. Then
$$
w_{t}^*(\omega)= K_{t}(\omega)+((w^\sharp_{t+1})^*\cap- C^\ast_{t})(\omega) .
$$
\end{lemma}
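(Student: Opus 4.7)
The plan is to view $w_t(\omega) = V_t(\omega) \cap \ri(w^\sharp_{t+1}(\omega) - C_t(\omega))$ as the intersection of two relatively open convex cones (at the fixed $\omega$), take the polar, and then invoke standard convex-cone duality in $\R^d$. Note that under the standing assumption of conical values, $K_t$ and $C_t$ are convex cones; $V_t = \ri K_t^\ast$ is a relatively open convex cone; and $w_{t+1}$, being an intersection of $V_{t+1}$ with a relatively open cone, is itself a cone, hence $w^\sharp_{t+1}$ is a convex cone (convex hull of a union of cones). In particular $w^\sharp_{t+1}(\omega) - C_t(\omega)$ is a convex cone, and since $w_t(\omega)\neq\varnothing$ by hypothesis, the set $w^\sharp_{t+1}(\omega) - C_t(\omega)$ is nonempty with nonempty relative interior.

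First I would pass to closures. Setting $B_1 \coloneqq K_t^\ast(\omega)$ (already closed) and $B_2 \coloneqq w^\sharp_{t+1}(\omega) - C_t(\omega)$, one has $w_t(\omega) = \ri B_1 \cap \ri B_2$. Since this intersection is nonempty, the standard fact $\cl(\ri B_1 \cap \ri B_2) = \cl B_1 \cap \cl B_2$ for convex sets (Theorem 6.5 in \cite{R70}) gives
\[
\cl w_t(\omega) = K_t^\ast(\omega) \cap \cl\bigl(w^\sharp_{t+1}(\omega) - C_t(\omega)\bigr).
\]
Since the polar cone depends only on the closure, $w_t^\ast(\omega) = (\cl w_t(\omega))^\ast$.

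Next I would apply the polar-of-intersection formula: for convex cones $D_1, D_2$ with $\ri D_1 \cap \ri D_2 \neq \varnothing$, one has $(D_1 \cap D_2)^\ast = D_1^\ast + D_2^\ast$ (no closure needed; see Corollary~16.4.2 in~\cite{R70}). The qualification holds here because $\ri K_t^\ast \cap \ri\cl(w^\sharp_{t+1} - C_t) = \ri K_t^\ast \cap \ri(w^\sharp_{t+1} - C_t) = w_t(\omega) \neq \varnothing$, using $\ri \cl S = \ri S$ for convex $S$. Therefore
\[
w_t^\ast(\omega) = (K_t^\ast(\omega))^\ast + \bigl(w^\sharp_{t+1}(\omega) - C_t(\omega)\bigr)^\ast,
\]
where again the polar on the right is unchanged under $\cl$. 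The bipolar theorem for closed convex cones gives $(K_t^\ast)^\ast = K_t$.

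Finally I would simplify the second summand. Writing $w^\sharp_{t+1} - C_t = w^\sharp_{t+1} + (-C_t)$ as a Minkowski sum of convex cones, the elementary identity $(A+B)^\ast = A^\ast \cap B^\ast$ for cones yields
\[
\bigl(w^\sharp_{t+1}(\omega) - C_t(\omega)\bigr)^\ast = (w^\sharp_{t+1})^\ast(\omega) \cap (-C_t(\omega))^\ast = (w^\sharp_{t+1})^\ast(\omega) \cap (-C_t^\ast(\omega)).
\]
Combining with the previous display gives exactly $w_t^\ast(\omega) = K_t(\omega) + \bigl((w^\sharp_{t+1})^\ast \cap (-C_t^\ast)\bigr)(\omega)$.

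The only delicate point is ensuring the constraint qualification $\ri B_1 \cap \ri B_2 \neq \varnothing$ for applying the polar-sum rule without a closure; this is where the specific definition of $w_t$ via $\ri(\cdot)$ in \eqref{eq: msp ri}, rather than via $(\cdot)^\flat$, pays off, giving the needed nonempty relative-interior intersection for free from the hypothesis $w_t(\omega)\neq\varnothing$.
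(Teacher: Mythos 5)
Your proof is correct and follows essentially the same route as the paper: both rest on Corollary~16.4.2 of Rockafellar (polar of an intersection of convex cones equals the sum of polars when the relative interiors meet, which is exactly what $w_t(\omega)\neq\varnothing$ supplies), the bipolar identity $(K_t^\ast)^\ast=K_t$, and $(A+B)^\ast=A^\ast\cap B^\ast$. Your version merely spells out the closure/relative-interior bookkeeping that the paper leaves implicit.
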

\begin{proof}
Since $\omega\in\Omega$ is fixed, we omit, in what follows, the dependence on $\omega$. Observe that
\begin{align*}
    w_{t}^* 
    &=
    \big(V_{t} \cap \ri(w^\sharp_{t+1}-C_{t}) \big)^\ast=
     V_{t}^* +(w^\sharp_{t+1}-C_{t})^* =
    K_{t} +((w^\sharp_{t+1})^*\cap- C^\ast_{t}) \ .
  \end{align*}
Indeed, the first equality is simply the definition of $w_t$. The assumption $w_{t}\neq\varnothing$ implies that $V_{t}\cap \ri(w^\sharp_{t+1}-C_t)\neq\varnothing$. 
Thus, the assumption of Corollary~16.4.2 in~\citep{R70} is satisfied and the second equality follows. The last one follows from Corollary~16.4.2 in~\citep{R70} and the definition of $V_t$.
\end{proof}

\begin{lemma} \label{lem: polar sharp}
Let $t\in\cI\backslash\{T\}$ and $\bar \omega\in\Omega$. Assume that $w_{u}(\omega)\neq\varnothing$, for every $t\le u\leq T$ and $\omega\in\Omega$. If $z_t\in (w^\sharp_{t+1}-C_t)^*(\bar\omega)$ then
\begin{align}\label{eq: decomposition}
   z_t=k_{t+1}+\ldots+k_T\quad\text{ with } k_u\in\cL(\cF_u;K_{u})\quad \text{ for }u=t+1,\ldots, T.
\end{align}
If, in addition, $z_t\notin -(w^\sharp_{t+1}-C_t)^*(\bar\omega)$, then there exists an $\hat{\omega}\in\Sigma_t^{\bar \omega}$ and $t+1\leq \hat{u}\leq T$ such that $k_{\tilde{u}}(\hat\omega)\in (K_{\hat{u}}\setminus (K_{\hat{u}}\cap- K_{\hat u}))(\hat{\omega})$.
\end{lemma}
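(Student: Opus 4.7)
The plan is to iterate Lemma \ref{lem: polar Y} backward in time while performing a measurable selection at each step. Two preliminary polar identities set the stage. First, since polars convert convex hulls of unions into intersections,
$$
(w^\sharp_{u+1})^*(\omega)=\bigcap_{\omega'\in\Sigma^{\omega}_{u}}w_{u+1}^*(\omega').
$$
Second, because $w^\sharp_{u+1}$ and $C_u$ are convex cones containing the origin, $(w^\sharp_{u+1}-C_u)^*=(w^\sharp_{u+1})^*\cap -C_u^*$. Consequently, the hypothesis $z_t\in (w^\sharp_{t+1}-C_t)^*(\bar\omega)$ is equivalent to $z_t\in w_{t+1}^*(\omega)$ for every $\omega\in\Sigma^{\bar\omega}_t$ together with $z_t\in -C_t^*(\bar\omega)$, which is exactly the setup for invoking Lemma \ref{lem: polar Y} at time $u=t+1$.

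For the first assertion I would construct the $k_u$ by a forward induction. Set $y_t\coloneqq z_t$, and assume inductively that $\cF_v$-measurable selections $k_v\in\cL(\cF_v;K_v)$ have been built for $t+1\le v\le u$ so that the residual $y_u\coloneqq z_t-k_{t+1}-\cdots-k_u$ is $\cF_u$-measurable and lies on $\Sigma^{\bar\omega}_t$ in $(w^\sharp_{u+1})^*\cap -C_u^*$. When $u+1\le T-1$, Lemma \ref{lem: polar Y} at time $u+1$ asserts that this set equals $K_{u+1}+((w^\sharp_{u+2})^*\cap -C_{u+1}^*)$, so the $\cF_{u+1}$-measurable correspondence $\omega\mapsto K_{u+1}(\omega)\cap\bigl(y_u(\omega)-((w^\sharp_{u+2})^*\cap -C_{u+1}^*)(\omega)\bigr)$ is nonempty, and the Kuratowski--Ryll-Nardzewski theorem produces an $\cF_{u+1}$-measurable selection $k_{u+1}$. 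For the terminal step $u+1=T$, we have $w_T^*=K_T$ and $y_{T-1}$ is $\cF_{T-1}$-measurable, so setting $k_T\coloneqq y_{T-1}$ completes the decomposition.

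For the second assertion I would argue by contrapositive, splitting according to Assumption \ref{ass:relatively open}. Under efficient friction, $K_u\cap -K_u=\{0\}$ forces each $k_u(\omega)=0$ on $\Sigma^{\bar\omega}_t$, so $z_t=0$, which trivially lies in $-(w^\sharp_{t+1}-C_t)^*(\bar\omega)$. In the constraint-free case $A_u\equiv\R^d$ we have $C_u=\{0\}$ and $(w^\sharp_{u+1}-C_u)^*=(w^\sharp_{u+1})^*$, and a backward induction shows $-y_u(\omega)\in(w^\sharp_{u+1})^*(\omega)$ on $\Sigma^{\bar\omega}_t$: the base case $u=T-1$ uses $-y_{T-1}=-k_T\in K_T=w_T^*$ combined with the polar-intersection identity, while the inductive step writes $-y_{u-1}=-k_u+(-y_u)$ with $-k_u\in K_u$ (by the lineality hypothesis) and $-y_u\in (w^\sharp_{u+1})^*$ (induction), and concludes via Lemma \ref{lem: polar Y} that $-y_{u-1}\in w_u^*$. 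Specialising to $u=t+1$ gives $-z_t\in (w^\sharp_{t+1})^*(\bar\omega)=(w^\sharp_{t+1}-C_t)^*(\bar\omega)$, contradicting $z_t\notin -(w^\sharp_{t+1}-C_t)^*(\bar\omega)$.

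The main obstacle is the second assertion: without Assumption \ref{ass:relatively open}, the lineality hypothesis on the $k_u$ does not propagate through the $-C_u^*$ factor of $(w^\sharp_{u+1}-C_u)^*$, because knowing that $k_u$ and $-k_u$ both lie in $K_u$ gives no control over whether $y_u$ belongs to $C_u^*$ in addition to $-C_u^*$. The dichotomy in Assumption \ref{ass:relatively open} is precisely what sidesteps this, each alternative collapsing a different part of the polar structure.
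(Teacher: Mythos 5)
Your proof of the decomposition \eqref{eq: decomposition} is essentially the paper's: both pass from $(w^\sharp_{t+1}-C_t)^*=(w^\sharp_{t+1})^*\cap-C_t^*$ and $(w^\sharp_{t+1})^*(\bar\omega)=\bigcap_{\omega\in\Sigma_t^{\bar\omega}}w_{t+1}^*(\omega)$ to an iteration of Lemma~\ref{lem: polar Y}, peeling off one $k_u$ per step, with measurability of the summands secured by a selection argument (the paper's Lemma~\ref{lem:appendix sum} versus your direct appeal to Kuratowski--Ryll-Nardzewski -- the same device). The gap is in the second assertion: your argument invokes Assumption~\ref{ass:relatively open}, which is not among the hypotheses of the lemma. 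The statement is purely convex-analytic, in terms of $w_u$, $K_u$, $C_u$, and the paper deliberately keeps it free of that assumption: the lemma feeds into Lemma~\ref{lem: necess Kabanov}, which the paper explicitly remarks requires only Assumption~\ref{ass: EF}. With your proof, that remark fails and the dichotomy of Assumption~\ref{ass:relatively open} would have to be threaded through the necessity half of Theorem~\ref{thm: FTAP Kabanov}; so what you prove is a strictly weaker lemma than the one stated.

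The paper's route for the second assertion is different in kind: it runs a forward induction on the defect itself. Having written $z_t=k_{t+1}+z_{t+1}$ with $z_{t+1}\in\cL(\cF_{t+1};(w^\sharp_{t+2})^*\cap-C_{t+1}^*)$, it argues that if every admissible choice of $k_{t+1}$ lies in $K_{t+1}\cap-K_{t+1}$, then the property $z_t\notin-(w^\sharp_{t+1}-C_t)^*(\bar\omega)$ is inherited by $z_{t+1}$ at some $\hat\omega\in\Sigma_t^{\bar\omega}$; iterating up to $T-1$ and using $w_T^*=K_T$, the defect must eventually be absorbed by some $k_{\hat u}(\hat\omega)$ outside the lineality space of $K_{\hat u}(\hat\omega)$. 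To be fair, the obstruction you isolate -- that $z_u\in-C_u^*$ gives no control on $z_u\in C_u^*$, so lineality of the $k$'s alone does not place $-z_t$ back in $(w^\sharp_{t+1}-C_t)^*$ -- is precisely the point that the paper's one-line justification of the inheritance step leaves implicit; you have correctly located the delicate step. But resolving it by importing Assumption~\ref{ass:relatively open} changes the statement being proved. To match the paper you must carry the full condition $z_u\notin-(w^\sharp_{u+1}-C_u)^*$ through the induction without splitting on the form of $A$ or $K$.
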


\begin{proof} 
Fix $t\in\cI\backslash\{T\}$ and $\bar \omega\in\Omega$. Since $(w^\sharp_{t+1}-C_t)^*=(w^\sharp_{t+1})^\ast\cap-C_t^*$, we have $z_t\in (w^\sharp_{t+1})^*$. From Lemma~5.102 in \citep{Aliprantis}
$$
  (w^\sharp_{t+1})^{*}(\bar\omega)=\bigcap_{\omega \in\Sigma_t^{\bar{\omega}}}w_{t+1}^{*}(\omega).
$$
In particular, $z_t$ is a selection of $w^\ast_{t+1}$. Fix $\omega\in\Sigma_t^{\bar \omega}$.  Since $w_{t+1}(\omega)\neq\varnothing$, Lemma~\ref{lem: polar Y} and Lemma~\ref{lem:appendix sum} imply that 
\begin{align}\label{eq: decomposition proof}
    z_t=k_{t+1}+z_{t+1}
\end{align}
with $k_{t+1}\in \cL(\cF_{t+1};K_{t+1})$ and $z_{t+1}\in\cL(\cF_{t+1};((w^\sharp_{t+2})^*\cap-C_{t+1}^\ast))$. If $z_{t+1}=0$, we set $k_u=0$ for $u=t+2,\ldots T$ and the representation \eqref{eq: decomposition} follows. If $z_{t+1}\neq 0$, we iterate the procedure on $z_{t+1}$ up to time $T-1$. Recalling that $w_T=V_T$, the representation \eqref{eq: decomposition} follows.

As for the second assertion, observe that if for all $k_{t+1}$, for which the decomposition~\eqref{eq: decomposition proof} holds, $k_{t+1}\in K_{t+1}\cap-K_{t+1}$, then $z_t\notin -(w^\sharp_{t+1}-C_t)^*(\bar\omega)$ implies $z_{t+1}\notin  -(w^\sharp_{t+2}-C_{t+1})^*(\hat{\omega})$ for some $\hat{\omega}\in\Sigma_{t}^{\bar \omega}$ (in particular $z_{t+1}\neq 0$). Iterating the procedure on $z_{t+1}$ up to time $T-1$ and recalling that $w_T=V_T$, the thesis follows.
\end{proof}

A useful tool for constructing arbitrage strategies is given by the following.

\begin{lemma}\label{lem: arb sum}
Under Assumption~\ref{ass: EF}, assume that there exists an admissible strategy $h\in\cH_{K,A}$ such that $h_T=0=\sum_{u=0}^{T}-k_u$ with $k_t(\omega)\in \inter K_t(\omega)$ for some $t\in\cI$ and $\omega\in\Omega$. Then there exists an arbitrage strategy $\hat{h}\in\cH_{K,A}$.
\end{lemma}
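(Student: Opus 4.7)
The idea is to exploit the interior slack at time $t$ to carry a small non-negative position from time $t$ to maturity, modifying $h$ only on the atom $F:=\Sigma_t^\omega\in\cF_t$. Since $\psi_t=K_t$ generates $\widehat\cF_t$, both $K_t$ and the $\cF_t$-measurable vector $k_t$ are constant on $F$, so $k_t(\omega')\in\inter K_t(\omega')$ for every $\omega'\in F$. Using Assumption~\ref{ass: EF} and a measurable selection argument, I would first pick, for each $u\in\{t,\ldots,T\}$, an $\cF_u$-measurable selection $\alpha_u$ of $A_u\cap(\R^d_+\setminus\{0\})$.

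Next, I would construct $\cF_u$-measurable scalars $c_u>0$ on $F$ by forward induction. For $u=t$, since $k_t\in\inter K_t$ on $F$, I may choose $c_t$ small enough that $k_t-c_t\alpha_t\in K_t$ pointwise on $F$. For $u>t$, the key observation is that $c_{u-1}\alpha_{u-1}\in\R^d_+\setminus\{0\}\subseteq\inter K_u$ by Assumption~\ref{ass: EF}; combined with $k_u\in K_u$ and the fact that $K_u$ is a convex cone, this forces $k_u+c_{u-1}\alpha_{u-1}\in\inter K_u$ (a cone element plus an interior element remains in the interior). Hence a small enough $c_u>0$ makes $k_u+c_{u-1}\alpha_{u-1}-c_u\alpha_u\in K_u$ on $F$. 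The measurability of the $c_u$ is guaranteed by selecting from the (non-empty, measurable) correspondences defining the admissible scales.

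I then define $\hat h_u:=h_u$ for $u<t$ and $\hat h_u:=h_u+c_u\alpha_u\one_F$ for $u=t,\ldots,T$. Adaptedness is clear since $F\in\cF_t\subseteq\cF_u$. For admissibility, off $F$ we have $\hat h_u=h_u\in A_u$; on $F$, both $h_u$ and $c_u\alpha_u$ lie in the convex cone $A_u$, so $\hat h_u\in A_u$ as well. Self-financing off $F$ is immediate from that of $h$, while on $F$ the increments $\hat h_{u-1}-\hat h_u$ equal $k_u-c_t\alpha_t$ at $u=t$ and $k_u+c_{u-1}\alpha_{u-1}-c_u\alpha_u$ for $u>t$, both in $K_u$ by construction. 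Finally $\hat h_T=c_T\alpha_T\one_F$ is $\cF_T$-measurable, $\R^d_+$-valued everywhere and strictly nonzero on $F$, so $\hat h\in\cH_{K,A}$ is an arbitrage strategy.

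The main obstacle is orchestrating the inductive choice of the scalars $c_u$ so as to propagate the positive modification while preserving self-financing; the linchpin is the free-disposal condition $\R^d_+\setminus\{0\}\subseteq\inter K_u$ of Assumption~\ref{ass: EF}, which pushes any non-negative carry-over at step $u{-}1$ strictly into $\inter K_u$, leaving room to make the subsequent transfer $c_u\alpha_u$. Note that varying the $\alpha_u$ with $u$ (rather than keeping a fixed positive vector) is essential, since Assumption~\ref{ass: EF} only guarantees a positive selection of $A_u$ for each $u$ separately.
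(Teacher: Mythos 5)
Your proposal is correct and follows essentially the same route as the paper: exploit the interior slack of $k_t$ to divert a small vector of $A_t\cap\R^d_+\setminus\{0\}$ into the position, and use the free-disposal inclusion $\R^d_+\setminus\{0\}\subseteq\inter K_u$ to push the resulting interior increment forward to maturity, where it becomes a nonzero non-negative terminal holding. The only difference is presentational — the paper phrases this as an induction that raises the index $t$ one step at a time until $t=T$, while you assemble the entire modified strategy in a single forward pass with scalars $c_u$ — and your explicit attention to measurable selections and constancy on the atom is, if anything, slightly more careful than the paper's.
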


\begin{proof}
Let $h\in\cH_{K,A}$ be a strategy as in the lemma.

\textsc{Case 1:} if $t=T$, $h$ can be modified to an arbitrage strategy. Indeed, since $k_t(\omega)\in\inter K_t(\omega)$, there exists $x\in A_t(\omega)\cap\R_+^d\backslash\{0\}$ such that $k_t(\omega)-x\in\inter K_t(\omega)$. The strategy $\hat h = h+x\one_{T}\one_{\Sigma_T^{{\omega}}}$ is in $\cH_{K,A}$ and satisfies $\hat h_T = x\one_{\Sigma_T^{{\omega}}}$. 

\textsc{Case 2:} assume $t<T$; we want to show that there exists a strategy satisfying the condition with a higher time index. As in \textsc{Case 1}, there exists $x\in A_t(\omega)\cap\R_+^d\backslash\{0\}$ such that $k_t(\omega)-x\in\inter K_t(\omega)$. Let us show that the strategy $\hat h = h+x\one_t\one_{\Sigma_T^{{\omega}}}$ is in $\cH_{K,A}$. It is clear that $\hat h_s\in\cL(\cF_s,A_s)$ for all $s\in\cI\backslash\{t\}$; it is also true for $s=t$, since $\hat h_t = h_t + x\one_{\Sigma_T^{{\omega}}}$ and $A_t(\omega)$ is a convex cone. To check that the increments are contained in the correct sets, note that $\hat k_s = k_s$ for $s\in\cI\backslash\{t,t+1\}$. Furthermore,  $\hat k_t = k_t - x\one_{\Sigma_T^{{\omega}}}$ and $\hat k_{t+1} = k_{t+1} + x\one_{\Sigma_T^{{\omega}}}$; the first element is a selection of $K_t$ by construction, the second one because $x\in\R^d_+\backslash\{0\}\subset\inter K_{t+1}$ by Assumption~\ref{ass: EF}. The latter inclusion shows that the strategy $\hat h$ satisfies: $\exists\bar\omega$ such that $\hat k_{t+1}(\bar\omega)\in\inter K_{t+1}(\bar\omega)$.

This finishes the proof, since if there exists a strategy $h\in\cH_{K,A}$ such that $h_T=0=\sum_{u=0}^{T}-k_u$ with $k_t(\omega)\in \inter K_t(\omega)$ for some $t\in\cI$ and $\omega\in\Omega$, then one sees that there also exists a strategy $h'$ satisfying the same property for $t=T$, by applying \textsc{Case 2} repeatedly. From \textsc{Case 1} there exists an arbitrage strategy.
%
%
\end{proof}

We are now ready to finish the proof of Theorem \ref{thm: FTAP Kabanov}. Note that the proof only requires Assumption \ref{ass: EF}.

\begin{lemma}\label{lem: necess Kabanov}
Under Assumption \ref{ass: EF}, $\NA^r$ implies that the \MSP\ $(V,C)$ is solvable.
\end{lemma}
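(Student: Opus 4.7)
The plan is to argue by contrapositive: assuming the $\MSP$ $(V,C)$ is not solvable, I will show that every market $(\widehat K, A)$ dominating $(K, A)$ admits an arbitrage, contradicting $\NA^r$. By Theorem~\ref{thm: main1} together with Remark~\ref{rmk: msp ri}, non-solvability of $(V,C)$ forces the auxiliary sequence $w$ from~\eqref{eq: msp ri} to be empty somewhere; I let $t^*$ be the largest index such that $w_{t^*}(\omega^*) = \varnothing$ for some $\omega^*$, so that $w_u(\omega) \neq \varnothing$ for every $u > t^*$ and every $\omega$. The two relatively open convex cones $V_{t^*}(\omega^*)$ and $\ri\bigl(w_{t^*+1}^\sharp(\omega^*) - C_{t^*}(\omega^*)\bigr)$ are disjoint by choice of $t^*$, so by proper separation there is a nonzero $z_{t^*}$ with $\langle x, z_{t^*}\rangle \leq 0$ on $V_{t^*}(\omega^*)$ and $\langle y, z_{t^*}\rangle \geq 0$ on $w_{t^*+1}^\sharp(\omega^*) - C_{t^*}(\omega^*)$, with strict inequality at some point. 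Bipolarity gives $-z_{t^*} \in (\ri K_{t^*}^\ast)^\ast(\omega^*) = K_{t^*}(\omega^*)$, and $z_{t^*} \in (w_{t^*+1}^\sharp - C_{t^*})^\ast(\omega^*) = (w_{t^*+1}^\sharp)^\ast(\omega^*) \cap A_{t^*}(\omega^*)$.

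Next I apply Lemma~\ref{lem: polar sharp} to decompose $z_{t^*} = k_{t^*+1}(\omega) + \cdots + k_T(\omega)$ on $\Sigma_{t^*}^{\omega^*}$ with $k_u \in \cL(\cF_u; K_u)$, and I reset each $k_u$ to zero off of $\Sigma_{t^*}^{\omega^*}$. Setting $k_{t^*} := -z_{t^*}\,\one_{\Sigma_{t^*}^{\omega^*}}$ gives an $\cF_{t^*}$-measurable selection of $K_{t^*}$, and $\sum_{u=t^*}^T k_u \equiv 0$ on all of $\Omega$. The strict inequality splits into two cases. If some $x_0 \in V_{t^*}(\omega^*)$ has $\langle x_0, z_{t^*}\rangle < 0$, then $-z_{t^*} \in K_{t^*} \setminus \lin K_{t^*}$ at $\omega^*$; if instead some $y_0 \in w_{t^*+1}^\sharp - C_{t^*}$ has $\langle y_0, z_{t^*}\rangle > 0$, the second part of Lemma~\ref{lem: polar sharp} supplies $\hat\omega \in \Sigma_{t^*}^{\omega^*}$ and $\hat u \geq t^*+1$ with $k_{\hat u}(\hat\omega) \in K_{\hat u} \setminus \lin K_{\hat u}$. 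In either case, the dominance of $\widehat K$ promotes this witness $k_{u^\ast}(\omega')$ to $\inter \widehat K_{u^\ast}(\omega')$.

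Define the strategy $h_s := -\sum_{u \leq s} k_u$, with $k_u = 0$ for $u < t^*$. Then $h_{-1} = h_T = 0$ and $h_{s-1} - h_s = k_s \in \cL(\cF_s; K_s) \subseteq \cL(\cF_s; \widehat K_s)$, so $h$ is self-financing for $\widehat K$. On $\Sigma_{t^*}^{\omega^*}$ the process $h_s$ equals the residual $z_s$ from the inductive decomposition in the proof of Lemma~\ref{lem: polar sharp}, and since each intermediate residual satisfies $z_s \in (w_{s+1}^\sharp)^\ast \cap A_s$ (using $(w_{s+1}^\sharp - C_s)^\ast = (w_{s+1}^\sharp)^\ast \cap -C_s^\ast$ and $-C_s^\ast = A_s$), one gets $h_s \in A_s$; off of $\Sigma_{t^*}^{\omega^*}$ we have $h_s = 0 \in A_s$ trivially. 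Hence $h \in \cH_{\widehat K, A}$, so Lemma~\ref{lem: arb sum} applied in $(\widehat K, A)$ at the interior witness produces an arbitrage strategy, contradicting $\NA^w$ for $(\widehat K, A)$ and hence $\NA^r$.

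The main obstacle I anticipate is verifying the admissibility step $h_s \in A_s$: this requires that membership in $A$ be preserved at every stage of the iterative decomposition inside the proof of Lemma~\ref{lem: polar sharp}, which is what the chain $(w_{s+1}^\sharp - C_s)^\ast = (w_{s+1}^\sharp)^\ast \cap A_s$ encodes. A secondary subtlety is that Lemma~\ref{lem: polar sharp} as written asks for $w_u(\omega) \neq \varnothing$ for every $u \geq t$, while at our chosen $t^*$ this fails at $\omega^*$; one must check that the lemma's proof invokes Lemma~\ref{lem: polar Y} only at times $u > t$, so the weaker hypothesis ``$w_{t^*+1}, \ldots, w_T$ nonempty everywhere'' is enough to carry out the decomposition.
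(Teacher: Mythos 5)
Your proposal is correct and follows essentially the same route as the paper's proof: take the largest $t$ with $w_t(\bar\omega)=\varnothing$, properly separate $V_t(\bar\omega)$ from $\ri(w_{t+1}^\sharp-C_t)(\bar\omega)$, decompose the separator via Lemma~\ref{lem: polar sharp}, and feed the resulting strategy into Lemma~\ref{lem: arb sum}. You additionally make explicit two points the paper leaves implicit — that the residuals $z_s\in -C_s^\ast=A_s$ give admissibility of $h$, and that Lemma~\ref{lem: polar sharp} only needs $w_u\neq\varnothing$ for $u\geq t+1$ — both of which are accurate readings of the paper's argument.
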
 
 
\begin{proof} 
We argue by contradiction. Assume that the \MSP\ is not solvable; by Theorem \ref{thm: main1}, $W_t(\omega)=\varnothing$ for some $t\in\cI$ and $\omega\in\Omega$. Since $(w_t)_{t\in\cI}$, defined in~\eqref{eq: msp ri}, satisfies $w_t\subset W_t$ for every $t\in\cI$ (see Remark \ref{rmk: msp ri}), also $w_t(\omega)=\varnothing$. Choose the largest index $t\in\cI$ for which there exists an $\bar{\omega}$ such that $w_t(\bar{\omega})=\varnothing$. Since $w_T=V_T$, this is well defined and $t\leq T-1$. The assumption $w_t(\bar{\omega})=\varnothing$ implies that $V_t(\bar{\omega})$ and the relative interior of $w_{t+1}^\sharp(\bar{\omega})-C_t(\bar\omega)$ are disjoint. We deduce the existence of a vector $z\in\R^d\backslash\{0\}$ such that
  \begin{align}\label{eq: separator}
  \langle x,z\rangle\le 0 \le \langle y,z\rangle\quad\forall x\in  V_t(\bar\omega)\text{ and }y\in  (w_{t+1}^\sharp-C_t)(\bar\omega).
  \end{align}
 This readily implies $-z\in V_t^*(\bar{\omega})=K_t(\bar{\omega})$, and $z\in(w_{t+1}^\sharp-C_t)^*(\bar\omega)$. In particular, $z\in-C_t^\ast(\bar\omega)=A_t(\bar\omega)$. Lemma \ref{lem: polar sharp}, implies that
\begin{align*}
  z=k_{t+1}+\ldots+k_T\quad\text{ with } k_u\in\cL(\cF_u;K_u).
\end{align*}
The vector $z$ in \eqref{eq: separator} can be chosen such that one of the conditions hold.
\begin{enumerate}
\item There exists a vector $\bar y\in w^\sharp_{t+1}(\bar \omega)$ such that $0<\langle \bar y,z\rangle$, which implies $z\notin -(w_{t+1}^\sharp-C_t)^*(\bar\omega)$. From Lemma \ref{lem: polar sharp}, there exist $\tilde{\omega}\in\Sigma_t^{\bar \omega}$ and $t+1\le {\tilde{u}}\le T$ such that $k_{\tilde{u}}\not\in (K_{\tilde{u}}\cap-K_{\tilde{u}})(\tilde{\omega})$.
\item There exists a vector $\bar x\in V_t(\bar\omega)$ such that $\langle \bar x,z\rangle<0$, which implies $z\not\in K_t(\bar\omega)\cap -K_t(\bar\omega)$. In this case set $\tilde{u}=t$ and $k_t\coloneqq-z\one_{\Sigma_t^{\bar\omega}}$.
\end{enumerate}
Consider an arbitrary process $\widehat{K}:=(\widehat{K}_u)_{u\in I}$ which dominates $(K_u)_{u\in\cI}$. Since $k_{\tilde{u}}\in(K_{\tilde{u}}\setminus (K_{\tilde{u}}\cap- K_{\tilde u}))(\tilde{\omega})
  \subset \inter \widehat{K}_{\tilde{u}}(\tilde{\omega})$, from Lemma \ref{lem: arb sum}, $\NA^r$ fails.
\end{proof}
    
\begin{proof}[Proof of Theorem \ref{thm: FTAP Kabanov}]
Necessity follows from Lemma \ref{lem: necess Kabanov}. Sufficiency follows from Corollary \ref{cor: suff Kabanov}.
\end{proof}

\subsection{Models of illiquidity}
A quite general form of discrete-time financial market model, including illiquid markets, is proposed in~\citep{P11} in a probabilistic framework. The main modeling tool is a \emph{cost process} $S=(S_t)_{t=0}^T$ which satisfies, for every $t=0,\ldots,T$, the following properties
\begin{itemize}
\item $S_t\colon\Omega\times\R^d\rightarrow \overline{\R}$ is Borel-measurable;
\item for every $\omega\in\Omega$ fixed the map $S_t(\omega,\cdot)$ is convex, lower semi-continuous and $S_t(\omega,0)=0$.
\end{itemize}
These two properties, together with the assumption $\inter(\dom S_t )\neq \varnothing$, imply that $S_t$ is a normal integrand; see \citep{R}, Chapter 14. Normality, in turn, guarantees that the recession map of $S_t$ and evaluations are measurable.

We will present a somewhat simplified version of the model proposed in~\citep{P11}; our formulation is similar to~\citep{CR07}. The reader is referred to~\citep{P11} for a general modeling considerations of illiquidity. Our modification consists in assuming that, in addition to $S$, there exists a riskless asset $B=(B_t)_{t\in\cI}$, which we assume normalized to $B_t(\omega)=1$ for all $\omega,\,t$. Positions in the riskless asset we denote by $h^0=(h^0_t)_{t\in\cI}$. This allows us the following simple interpretation of the cost process: a change of position in the risky asset of $h_t-h_{t-1}$ at time $t\in\cI$ elicits a change of position in the riskless asset 
$$
   h^0_t \leq h^0_{t-1} -  S_t(h_t-h_{t-1})\qquad \forall t\in\cI
$$
with the convention that $h_{-1}=0$ and $h^0_{-1}\in\R$ is the initial capital. Few instances of this model are the following
\begin{enumerate}
\item frictionless markets: $S_t(\omega,x)=\langle x, s_t(\omega)\rangle$, for an $\R^d$-valued stochastic process $(s_t)_{t=0}^T$ representing the price of $d$ assets at time $t\in\cI$. 
\item bid-ask spreads: $S_t(\omega,x)=x\left(a_t\one_{x\geq 0}+ b_t\one_{x< 0}\right)$, where the processes $(a_t)_{t=0}^T$ and $(b_t)_{t=0}^T$ represent the bid and ask prices of a single asset.
\item non-linear transaction costs: $S_t(\omega,x)=s_t(\omega)\varphi(x)$ for a real-valued stochastic process $(s_t)_{t=0}^T$ and a strictly positive, increasing and convex function $\varphi$ representing the cost of illiquidity; see \citep{CR07}. 
\end{enumerate}

\begin{remark}\label{rmk: phisical}
In Section \ref{section Kabanov} a model with \emph{physical delivery} is considered. It asks for all the positions $X\in\cL(\cF;\R^d)$ that can be superhedged in the market with respect to partial relation given by the cone $\R^d_+$.
In this section we consider claims with \emph{cash delivery}.
\end{remark}

Trading restriction are introduced by means of a conical process $A$ of portfolio constraints, so that, the class of admissible strategies is given by
$$
  \cH_A\coloneqq\big\{(h_t)_{t\in\cI}\,\big|\,h_t\in\cL(\cF_t;A_t),\ \forall \omega,\,t\in\cI,\ h_T=0\big\}.
$$
The value of a strategy $h\in\cH_A$ is given by
$$
\mathcal{V}_T(h) = h^0_{-1}-\sum_{t\in\cI} S_{t}(\omega, h_t-h_{t-1});
$$
remember the assumption $h_{-1}=0$. Note that $\mathcal{V}_T(h)=h^0_T$.

\begin{definition}
 A strategy $h\in\cH_A$ with zero initial capital $h^0_{-1}=0$ is called an \emph{arbitrage} if $\mathcal{V}_T(h) \geq 0$ for any $\omega\in\Omega$ and is strictly positive for some $\bar\omega\in\Omega$.
 An arbitrage is called \textit{scalable} if $\alpha h$ is an arbitrage strategy for every $\alpha>0$.
\end{definition}
\begin{remark}
The no scalable arbitrage condition does not exclude strategies yielding positive gains at no risk. Nevertheless, these gains cannot be arbitrarily scaled. This is a conceptual difference between liquid and illiquid markets.
\end{remark}

For a convex, lower semi-continuous function $S_t(\omega,\cdot)$ with $S_t(\omega,0)=0$, the horizon function $S_t^\infty$ is given by 
$$
  S_t^\infty(\omega, x)\coloneqq\lim_{\alpha\rightarrow\infty}\frac1\alpha S_t(\omega,\alpha x)
  \qquad 
  \forall x\in\R^d,\ \omega\in\Omega.
$$
\begin{remark}
If $S_t$ is positively homogeneous $S_t^\infty$ coincides with $S_t$. More generally, when $S^\infty$ is pointwise finite, it represents the minimal positively homogeneous model whose cost process is greater or equal than $S$.
\end{remark}
By Exercise 14.54 in \citep{R}, if $S_t$ is a normal integrand the same is true for $S_t^\infty$; convexity is obviously preserved by the operation $(\cdot)^\infty$. By using Theorem 14.56 and Proposition 14.11 in \citep{R}, the following are Borel-measurable correspondences, for every $t\in\cI$
\begin{equation}\label{eq: def cone}
 V_t\coloneqq\ri\cone\big\{(1,v)\,\big|\, v\in\partial (S_t^\infty)(\cdot,0)\big\}. 
\end{equation}
Note that the set of scalable portfolio rebalancings is given by
\begin{align}\label{eq:defn of K}
  -K_t(\omega)
   =
     \big\{(\delta,\Delta)\in\R\times\R^d\,\big|\,\delta + S^\infty_t(\Delta)\leq0\big\} 
   =
     -V_t^\ast. 
\end{align}
We say that a cost process $\widehat{S}$ dominates $S$ if the corresponding $\widehat{K}$, as in~\eqref{eq:defn of K}, dominates $K$ in the sense of Definition~\ref{def: dominance}.

Let us introduce the definition of arbitrage.
\begin{definition}
Robust no (scalable) arbitrage holds if $S$ is dominated by $\widehat{S}$ and $\widehat S$ satisfies no (scalable) arbitrage. 
\end{definition}

Similarly as in Section \ref{section Kabanov} we require Assumptions~\ref{ass: EF} and~\ref{ass:relatively open} to hold for the associated market model $(K,A)$. Let us be more explicit on the assumptions. First, the assumption $\R^{d+1}_+\backslash\{0\}\subset\inter K_t(\omega)$ implies that the function $x\mapsto S_t(\omega,x)$ is strictly increasing for each $t,\omega$. Indeed, choose an arbitrary $\Delta\in\R^d_+$, $\Delta\not=0$. Then, since $(0,\Delta)\in\inter K_t$, there exists a $\delta<0$ such that $(\delta,\Delta)\in\inter K_t$. Going back to the cost process, this implies that $S^\infty_t(-\Delta)\leq\delta<0$, i.e. the cost process is strictly increasing with respect to relation induced by the cone $\R^d$. Finally, the assumption that $K_t\cap-K_t=\{0\}$ is easy to interpret and implies that for each $x\in\R^d\backslash\{0\}$ we have $S_t^\infty(x)>-S_t^\infty(-x)$.

The following is the main result of the section. 
\begin{theorem}\label{thm: pennanen}
 Under Assumptions~\ref{ass: EF} and~\ref{ass:relatively open} robust no scalable arbitrage holds if and only if for every $\bar\omega\in\Omega$ there exists $P\in\cP(\bar\omega)$ and a process $\xi\coloneqq(\xi_t)_{t\in\cI}$ such that 
$$
  E_P[\xi_{t+1}-\xi_t\,|\,\cF_t]\in -A^\ast_t\qquad P\text{-a.s.}
$$
and $\xi_t$ takes values in $ \ri K^\ast_t$, for every $t\in\cI$.
\end{theorem}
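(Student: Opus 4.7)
The plan is to reduce Theorem \ref{thm: pennanen} to Theorem \ref{thm: FTAP Kabanov} by passing to the linearized Kabanov currency market associated to the horizon function $S^\infty$. Let $K$ be defined as in equation \eqref{eq:defn of K} and consider the Kabanov model $(K, \R\times A)$, with portfolio constraint set $\R$ in the riskless asset and $A_t$ in the risky assets. A direct computation using $-K_t = \{(\delta,\Delta):\delta+S_t^\infty(\Delta)\le 0\}$ together with the definition of the subdifferential yields $K_t^* = \cone\{(1,v): v\in\partial S_t^\infty(\cdot,0)\}$, so $\ri K_t^* = V_t$ from equation \eqref{eq: def cone}; moreover $(\R\times A_t)^* = \{0\}\times A_t^*$, which matches the right-hand side of the dual condition.

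The first key step is the equivalence: $(S,A)$ satisfies robust no scalable arbitrage if and only if the Kabanov model $(K,\R\times A)$ satisfies $\NA^r$. This rests on the pointwise inequality $S_t(\omega,x)\le S_t^\infty(\omega,x)$, a consequence of convexity and $S_t(\omega,0)=0$ (the map $\alpha\mapsto S_t(\omega,\alpha x)/\alpha$ is non-decreasing on $(0,\infty)$). For one direction, any Kabanov arbitrage $(h^0,h)$ in a dominating model $(\widehat K,\R\times A)$ is self-financing for $\widehat S\le\widehat S^\infty$, and after liquidating the non-negative terminal position to zero (permitted by Assumption \ref{ass: EF}) it yields an illiquid arbitrage in $(\widehat S,A)$; positive homogeneity of $\widehat S^\infty$ lets one repeat the argument with $(\alpha h^0,\alpha h)$ for every $\alpha>0$, producing a scalable arbitrage. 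For the converse, given $(\widehat K,\R\times A)$ dominating $(K,\R\times A)$ and satisfying $\NA^w$, take $\widehat S:=\widehat S^\infty$, the positively homogeneous cost process associated to $\widehat K$; then $\widehat S$ dominates $S$ by definition of dominance for cost processes, and since $\widehat S=\widehat S^\infty$ is positively homogeneous, arbitrage and scalable arbitrage coincide for $(\widehat S,A)$. Any such illiquid arbitrage $h$ (with $h^0_{-1}=0$, $h_T=0$) would give a Kabanov-admissible strategy with terminal portfolio $(\mathcal{V}_T(h),0)\in\R^{d+1}_+\setminus\{0\}$, contradicting $\NA^w$ in the Kabanov model.

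With this equivalence, applying Theorem \ref{thm: FTAP Kabanov} to $(K,\R\times A)$ (whose Assumptions \ref{ass: EF} and \ref{ass:relatively open} are inherited directly from the corresponding hypotheses on $(S,A)$) yields, for every $\bar\omega\in\Omega$, a measure $P\in\cP(\bar\omega)$ and a process $\xi$ with values in $\ri K_t^*$ such that $E_P[\xi_{t+1}-\xi_t\,|\,\cF_t]\in -\{0\}\times A_t^*$, $P$-a.s. This is precisely the dual characterization in Theorem \ref{thm: pennanen}: the first component of $\xi$ is then automatically a $P$-martingale, and the remaining components satisfy $E_P[\xi_{t+1}-\xi_t\,|\,\cF_t]\in -A_t^*$.

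The main obstacle I expect is the terminal-liquidation argument used to convert a Kabanov arbitrage with $h_T^{\mathrm{Kab}}\in\R^{d+1}_+\setminus\{0\}$ into a Pennanen-admissible strategy with $h_T=0$. This relies on the free-disposal content of Assumption \ref{ass: EF}, namely $\R^{d+1}_+\setminus\{0\}\subset\inter K_T$, which via the definition of $K_T$ forces $S_T^\infty(-x)<0$ for $x\in\R^d_+\setminus\{0\}$; this strict negativity ensures that liquidating a positive terminal position at time $T$ produces a strictly positive riskless payoff somewhere, so the arbitrage property is preserved under liquidation. A secondary technical point is the identification $\ri K_t^*=V_t$, which follows from Rockafellar's formulas for polars of epigraphical cones combined with the chosen expression in equation \eqref{eq: def cone}.
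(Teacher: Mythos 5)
Your proposal is correct, but it is organized quite differently from the paper's proof. The paper does not invoke Theorem~\ref{thm: FTAP Kabanov} as a black box: it proves the two implications of Theorem~\ref{thm: pennanen} directly at the level of the \MSP\ $(V,C)$ with $V$ as in~\eqref{eq: def cone} and $C_t=-(\R\times A_t)^\ast$. Sufficiency (Lemma~\ref{lem: suff Penn}) uses Lemma~\ref{lem: dominated solvable} to produce a dominating conical cost process $\widehat S=\widehat S^\infty$ whose \MSP\ is still solvable and rules out arbitrage via the representation $\widehat S_t(x)=\sup\{\langle v,x\rangle \,|\, v\in\partial\widehat S_t(0)\}$; necessity (Lemma~\ref{lem: necess Penn}) separates, decomposes the separating vector via Lemma~\ref{lem: polar sharp} as $z=k_{t+1}+\cdots+k_T$ with $k_u\in\cL(\cF_u;K_u)$, and reads off a scalable arbitrage in an \emph{arbitrary} dominating $\widehat S$ using $\widehat S_u\le\widehat S_u^\infty$. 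Your route instead inserts a bridging equivalence --- robust no scalable arbitrage for $(S,A)$ iff $\NA^r$ for the linearized Kabanov model $(K,\R\times A)$ --- and then cites Theorem~\ref{thm: FTAP Kabanov}. This is more modular and makes the linearization through $S^\infty$ conceptually explicit, but it shifts the whole burden onto the bridge, which must be airtight in both directions: (i) the two conversions between physical-delivery Kabanov arbitrages in $\R^{d+1}$ and cash-delivery illiquid arbitrages with $h_T=0$ (your liquidation step, together with the reverse embedding $h^0_t=h^0_{t-1}-\widehat S^\infty_t(h_t-h_{t-1})$, whose terminal value $(\mathcal{V}_T(h),0)$ lies in $\cL(\cF_T;\R^{d+1}_+)\setminus\{0\}$); and (ii) the correspondence between dominating closed cones $\widehat K$ and dominating positively homogeneous cost processes, i.e.\ that every dominating $\widehat K$ satisfying $\NA^w$ induces a bona fide (measurable, lower semi-continuous, null at zero) cost process dominating $S$ --- this needs $\widehat K_t\neq\R^{d+1}$ and $\R^{d+1}_+\setminus\{0\}\subset\inter\widehat K_t$, both inherited from Assumption~\ref{ass: EF} via Definition~\ref{def: dominance}. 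Both points do go through, and the required computations are essentially those already present in Lemmas~\ref{lem: suff Penn} and~\ref{lem: necess Penn}, just not packaged as an equivalence of no-arbitrage conditions. One small imprecision: $K_t^\ast$ is the \emph{closure} of $\cone\{(1,v)\,|\,v\in\partial S_t^\infty(\cdot,0)\}$ rather than that cone itself; this is harmless, since the relative interior of a convex set coincides with that of its closure, so $\ri K_t^\ast=V_t$ as you claim.
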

The financial interpretation of the pair $(\xi,P)$ is similar as the one after Theorem \ref{thm: FTAP Kabanov}, namely, $(\xi,P)$ defines an arbitrage-free frictionless price process which is compatible with the frictions considered for the market. Indeed, as in Theorem \ref{thm: FTAP Kabanov} the measure $P$ is fair in the sense that the value process of every trading strategy is a super-martingale under $P$. Moreover, the price process modeled by $\xi$ takes values in the range of prices that are observable in the market if agents trade with the market impact prescribed by $S$.

Once again, we relate the problem to the solution of the appropriate \MSP\ $(V,C)$ with $V$ as in \eqref{eq: def cone} and $C_t =- (\R\times A_t)^\ast$.

\begin{lemma}\label{lem: suff Penn}
If the \MSP\ $(V,C)$ is solvable then robust no scalable arbitrage holds.
\end{lemma}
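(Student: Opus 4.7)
The plan is to reduce the statement to the Kabanov setting of Section~\ref{section Kabanov}. With $K_t=V_t^\ast$ given by~\eqref{eq:defn of K} and the lifted constraint process $\R\times A$, the MSP $(V,C)$ of the present section is literally the Kabanov MSP associated with $(K,\R\times A)$, so solvability transfers between the two frameworks. I would therefore apply Lemma~\ref{lem: dominated solvable} to $(K,\R\times A)$ to obtain a cone process $\widehat K$ dominating $K$ whose associated MSP $(\widehat V,C)$, with $\widehat V_t=\ri\widehat K_t^\ast$, remains solvable, and then invoke Lemma~\ref{lem: firstpart_suffK} to conclude that $(\widehat K,\R\times A)$ satisfies $\NA^w$ in the Kabanov sense.

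Next, I manufacture a positively homogeneous candidate cost process from $\widehat K$ by setting
\[
  \widehat S_t(\omega,x) \coloneqq \inf\{\delta\in\R : (\delta,-x)\in \widehat K_t(\omega)\}.
\]
Convexity, lower semi-continuity and positive homogeneity are automatic from $\widehat K_t(\omega)$ being a closed convex cone, and by design the Kabanov cone associated with $\widehat S$ via~\eqref{eq:defn of K} equals $\widehat K_t$, so $\widehat S$ dominates $S$. The delicate point is verifying $\widehat S_t(\omega,0)=0$, which is equivalent to excluding $(\delta,0)\in\widehat K_t(\omega)$ for $\delta<0$; any such offending point would however produce the Kabanov arbitrage $h_s=0$ for $s<t$ and $h_s=(-\delta,0)\one_{\Sigma_t^\omega}$ for $s\ge t$, contradicting the $\NA^w$ obtained above.

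To finish I would show that $(\widehat S,A)$ admits no scalable arbitrage. Positive homogeneity of $\widehat S$ yields $\mathcal V_T(\alpha h)=\alpha\mathcal V_T(h)$, so scalability reduces to ordinary arbitrage. For any candidate arbitrage $h\in\cH_A$, the lift $\tilde h=(\tilde h^0,h)$ with $\tilde h^0_t\coloneqq -\sum_{s\le t}\widehat S_s(\cdot,h_s-h_{s-1})$ is self-financing for $\widehat K$, admissible for the constraint $\R\times A$, and terminates at $(\mathcal V_T(h),0)\in\R_+^{d+1}\setminus\{0\}$; this is a Kabanov arbitrage, again contradicting $\NA^w$. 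The main obstacle is bridging the two frameworks: Lemma~\ref{lem: dominated solvable} only returns a cone, with no direct control on the $\delta$-axis behavior, and it is precisely the Kabanov $\NA^w$ obtained at the outset that rescues the construction of a bona fide cost process.
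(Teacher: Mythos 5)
Your proposal is correct, and it takes a genuinely different (more modular) route than the paper. Both arguments begin identically, by invoking Lemma~\ref{lem: dominated solvable} to produce a dominating conical model $\widehat K$ whose \MSP\ remains solvable and reading off from it a positively homogeneous cost process $\widehat S$ dominating $S$ (your explicit formula $\widehat S_t(\omega,x)=\inf\{\delta:(\delta,-x)\in\widehat K_t(\omega)\}$, together with the checks of lower semi-continuity via the epigraph identity and of $\widehat S_t(\omega,0)=0$, makes precise what the paper leaves as ``it is clear how it induces a market $\widehat S$''). They diverge afterwards: the paper takes a candidate arbitrage $h$ for $\widehat S$, extracts a local solution $(\xi,Q)=((z,y),Q)$ of the \MSP\ at the relevant $\bar\omega$, uses $z$ as a density to change measure, identifies $y/z$ as a selection of $\partial\widehat S_t(0)$ via the support-function representation, and derives the contradiction $0<-\sum_t E_P[\widehat S_t(\cdot,h_t-h_{t-1})]\le 0$ directly --- i.e.\ it redoes the supermartingale/deflator estimate in the cost-process language. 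You instead factor that computation through Lemma~\ref{lem: firstpart_suffK}: you first obtain $\NA^w$ for the lifted Kabanov model $(\widehat K,\R\times A)$ and then translate any arbitrage for $(\widehat S,A)$ into a Kabanov arbitrage by lifting the strategy to $\tilde h=(\tilde h^0,h)$ with $\tilde h^0_t=-\sum_{s\le t}\widehat S_s(\cdot,h_s-h_{s-1})$, which terminates at $(\mathcal V_T(h),0)\in\R_+^{d+1}\setminus\{0\}$. Your reduction buys reuse of the Kabanov machinery and makes the equivalence of the two market descriptions explicit; the paper's direct computation avoids having to re-verify that Assumptions~\ref{ass: EF} and~\ref{ass:relatively open} transfer from $(K,\R\times A)$ to the dominating model $(\widehat K,\R\times A)$, which is the one small point your route should address when citing Lemma~\ref{lem: firstpart_suffK} (it does hold for the specific $\widehat K$ built in Lemma~\ref{lem: dominated solvable}, since $\lin\widehat V_t=\lin V_t$ there, and the paper itself applies Lemma~\ref{lem: firstpart_suffK} to the dominating model in Corollary~\ref{cor: suff Kabanov} without further comment).
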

\begin{proof}
 Since $(V,C)$ is solvable, from Lemma \ref{lem: dominated solvable} there exists a dominating conical market $\widehat{K}$ such that the \MSP\ $(\widehat{K}^\ast,C)$ is solvable. From the definition of the correspondence $K$, equation~\eqref{eq:defn of K}, it is clear how it induces a market $\widehat{S}\coloneqq(\widehat{S}_t)_{t\in\cI}$ which, by construction, dominates $S$ and satisfies $\widehat{S}_t=\widehat{S}^\infty_t$. Assume that there exists $h$ such that 
$$
\mathcal{V}_T(h)(\omega) = -\sum_{t=0}^{T} \widehat{S}_{t}(\omega, h_t-h_{t-1})\ge 0 \qquad \forall\omega\in\Omega
$$
and strictly positive in some $\bar\omega\in\Omega$. Denote by $(\xi,Q)$ the local solution of $(\widehat{K}^\ast,C)$ at $\bar\omega$. From Theorem~8.30 in~\citep{R}
$$
  \widehat{S}_{t}(\omega,x)
  =
  \sup\big\{\langle v, x\rangle\,\big|\, v\in \partial \widehat{S}_{t}(\omega,0)\big\},\qquad \omega\in\Omega,\ t\in\cI,\ x\in\R^d.
$$
We now argue as in the frictionless case. From $\xi_t\in\cL(\cF_t;\ri \widehat{K}^\ast_t)$ and the particular form of the constraint correspondence $C_t$ we write $\xi_t = (z_t,y_t)$, where $z_t$ is a martingale, which is strictly positive by assumptions. We use it to change the measure and obtain $P$. By definition of the correspondence $V_t$, we get that $y_t/z_t$ is a selection of $\partial \widehat S_t(0)$. We, thus, have
\begin{align*}
 0 
 \leq 
   -\sum_{t=0}^{T} \widehat{S}_{t}(\omega, h_t-h_{t-1}) 
 \leq
   -\sum_{t=0}^{T}\Big\langle \frac{y_t}{z_t}, h_t-h_{t-1}\Big\rangle 
 =
    \sum_{t=0}^{T-1} \Big\langle h_t, \frac{y_{t+1}}{z_{t+1}}-\frac{y_t}{z_t},\Big\rangle.
\end{align*}
Since $\bar\omega\in\supp P$, by taking expectations with respect to $P$ we obtain $0<-\sum_{t=0}^{T} E_P[\widehat{S}_{t}(\cdot, h_t-h_{t-1})] \le 0$ which is clearly a contradiction.
\end{proof}

\begin{lemma}\label{lem: necess Penn}
Robust no scalable arbitrage implies solvability of the \MSP\ $(V,C)$.
\end{lemma}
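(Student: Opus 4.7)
My plan is to argue by contradiction and reduce the claim to the Kabanov-style result of Lemma~\ref{lem: necess Kabanov}, applied to an auxiliary Kabanov market on $\R^{d+1}$. The key observation is that the \MSP\ $(V,C)$ of the illiquidity model coincides with the Kabanov \MSP\ defined in~\eqref{eq:MSP for Kabanov} for the $(d+1)$-dimensional market with solvency cones $K$ from~\eqref{eq:defn of K} and constraint correspondences $\R\times A$. Indeed, $K_t=V_t^{\ast}$ by~\eqref{eq:defn of K}, so $V_t=\ri K_t^{\ast}$ (using that $V_t$ is relatively open and convex), and $C_t=-(\R\times A_t)^{\ast}$ has the required form. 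Moreover, Assumption~\ref{ass: EF} for the auxiliary Kabanov market follows from the standing hypotheses: $\R^{d+1}_+\setminus\{0\}\subset\inter K_t$ is assumed, and $\{0\}\times(A_t\cap\R^d_+\setminus\{0\})$ is a nonempty subset of $(\R\times A_t)\cap\R^{d+1}_+\setminus\{0\}$.

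Granted this identification, suppose the \MSP\ $(V,C)$ is not solvable. By the contrapositive of Lemma~\ref{lem: necess Kabanov}, $\NA^r$ fails for the auxiliary Kabanov market: for every cone process $\widehat K$ dominating $K$ there exists an admissible strategy $h=(h^0,h^{\mathrm{risky}})\in\cH_{\widehat K,\R\times A}$ with $h_T\in\cL(\cF_T;\R^{d+1}_+)\setminus\{0\}$. Given any cost process $\widehat S$ dominating $S$, its associated cone process $\widehat K$ (defined from $\widehat S^\infty$ via~\eqref{eq:defn of K}) dominates $K$, so such an $h$ is always available. The remaining task is to manufacture from $h$ a scalable arbitrage for $\widehat S$, which contradicts robust no scalable arbitrage.

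For the translation I would first append to $h$ one extra rebalancing at time $T$ that liquidates the residual risky position; this is admissible because $\R^{d+1}_+\setminus\{0\}\subset\inter\widehat K_T$ forces $\widehat S_T^\infty(-h_T^{\mathrm{risky}})\le 0$, with strict inequality where $h_T^{\mathrm{risky}}\neq 0$. After this modification one may assume $h_T^{\mathrm{risky}}=0$ and $h_T^0>0$ on some $\bar\omega$. Setting $\tilde h_u\coloneqq h_u^{\mathrm{risky}}$ gives $\tilde h\in\cH_A$, and reading off from $h_u-h_{u-1}\in-\widehat K_u$ via~\eqref{eq:defn of K} yields $h_u^0-h_{u-1}^0\le -\widehat S_u^\infty(\tilde h_u-\tilde h_{u-1})$. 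Combining with the elementary convexity bounds $\widehat S_u(x)\le\widehat S_u^\infty(x)$ and $\widehat S_u(\alpha x)\le\alpha\widehat S_u^\infty(x)$ for $\alpha>0$ (both consequences of convexity and $\widehat S_u(0)=0$), telescoping yields $\mathcal V_T(\alpha\tilde h)\ge\alpha h_T^0\ge 0$ for every $\alpha>0$, strict on $\bar\omega$. Hence $\tilde h$ is a scalable arbitrage for $\widehat S$.

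The main obstacle I anticipate is the careful verification in the translation step: ensuring that the liquidation at $T$ preserves measurability and admissibility, and that the chain of inequalities relating $\widehat S$, $\widehat S^\infty$ and the cones $\widehat K$ indeed transfers the physical-delivery Kabanov arbitrage to a cash-delivery scalable arbitrage. The pieces are standard properties of recession functions for convex $\widehat S_u$ vanishing at the origin, but must be threaded through the whole construction with attention to the selection-theoretic details.
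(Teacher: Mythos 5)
Your proposal is correct, but it is organized differently from the paper's argument. The paper does not invoke Lemma~\ref{lem: necess Kabanov} as a black box: it re-runs the separation argument inline (largest $t$ with $w_t(\bar\omega)=\varnothing$, separator $z$, then Lemma~\ref{lem: polar sharp}) to extract the decomposition $z=k_{t+1}+\cdots+k_T$ with $k_u\in\cL(\cF_u;K_u)$ together with the information that some $k_{\tilde u}(\tilde\omega)$ lies outside the lineality space, hence in $\inter\widehat K_{\tilde u}(\tilde\omega)$ for any dominating $\widehat K$. It then feeds the risky components $-\bar k_u$ directly into the cost functional, using $\widehat S_u\le\widehat S_u^\infty$ and the support-function representation of $\widehat S_u^\infty$, and gets scalability for free from conicality of $\widehat K$ applied to $\alpha z$. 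You instead lift the illiquid model to a $(d+1)$-dimensional Kabanov market (your identification $V_t=\ri K_t^\ast$, $C_t=-(\R\times A_t)^\ast$ and the verification of Assumption~\ref{ass: EF} for the lifted market are correct, and Assumption~\ref{ass:relatively open} is not needed since Lemma~\ref{lem: necess Kabanov} only uses Assumption~\ref{ass: EF}), apply the contrapositive of Lemma~\ref{lem: necess Kabanov} to obtain a physical-delivery arbitrage for every dominating $\widehat K$, and then translate it into a cash-delivery scalable arbitrage by liquidating the terminal risky position and telescoping with the inequalities $\widehat S_u(\alpha x)\le\alpha\widehat S_u^\infty(x)$. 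The translation is sound: the liquidation at $T$ is self-financing because $-\widehat K_T$ is a convex cone and $(0,-h_T^{\mathrm{risky}})\in-\widehat K_T$ with strict cash gain where $h_T^{\mathrm{risky}}\neq0$, strict positivity survives at the $\bar\omega$ where $h_T\neq0$, and the self-financing bound $\widehat S^\infty_u(\Delta_u)\le-(h^0_u-h^0_{u-1})<\infty$ rules out infinite costs. What your route buys is modularity — the separation machinery is used only once, inside Lemma~\ref{lem: necess Kabanov} — at the price of the physical-to-cash translation step; what the paper's route buys is that the cash component $k^0_u$ of the decomposition sums to zero by construction, so the value bound $\mathcal V_T(h)\ge-\sum_u k^0_u=0$ and the scalability are immediate without any liquidation argument. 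One shared fine point (present in the paper's own proof of Lemma~\ref{lem: necess Kabanov} as well) is that the dominating cones $\widehat K_t$ inherit the free-disposal half of Assumption~\ref{ass: EF} from domination, which is what your liquidation step uses for $\widehat K_T$; this follows since $\R^{d+1}_+\setminus\{0\}\subset\inter K_T$ is disjoint from the lineality space of $K_T$ and is therefore contained in $\ri\widehat K_T=\inter\widehat K_T$.
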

\begin{proof}
 Suppose, by contradiction, that the $\MSP$ is not solvable. By Theorem \ref{thm: main1}, $W_t(\omega)$ is empty for some $t\in\cI$ and $\omega\in\Omega$. We argue as in the proof of Lemma \ref{lem: necess Kabanov}. First we observe that $w_t(\omega)$, defined in~\eqref{eq: msp ri}, is also empty. We next choose the largest index $t\in\cI$ for which there exists an $\bar{\omega}$ such that $w_t(\bar{\omega})=\varnothing$. We deduce the existence of a vector $z\in\R^d\backslash\{0\}$ such that
 \begin{align*}
 \langle x,z\rangle\le 0 \le \langle y,z\rangle\quad\forall x\in  V_t(\bar\omega)\text{ and }y\in  (w_{t+1}^\sharp-C_t)(\bar\omega).
 \end{align*}
 By the same argument as in the proof of Lemma \ref{lem: necess Kabanov} we have
 \begin{align}\label{eq: representation}
 z=k_{t+1}(\omega)+\ldots+k_T(\omega)\quad\text{ with } k_u(\omega)\in\cL(\cF_u;K_u).
 \end{align}
 Moreover, by setting $k_t\coloneqq-z\one_{\Sigma_t^{\bar\omega}}$, there exist $\tilde{\omega}\in\Sigma_t^{\bar \omega}$ and $t\le \tilde{u}\le T-1$ such that $k_{\tilde{u}}\in(K_{\tilde{u}}\setminus (K_{\tilde{u}}\cap- K_{\tilde u}))(\tilde{\omega})$.
 
 Consider an arbitrary process $\widehat{S}\coloneqq(\widehat{S}_t)_{t\in\cI}$ which dominates $S$. By definition, the corresponding $\widehat{K}$ as in \eqref{eq:defn of K} dominates $K$ and, in particular, $k_{\tilde{u}}\in\inter\widehat{K}_{\tilde{u}}(\tilde{\omega})$.
  Since $k_u\in\widehat{K}_{u}$ for every $t\le u\le T$, in particular, 
$$
  0
  \leq
  \langle (1,v), k_u\rangle
  = 
  k^0_u+\langle v, \bar k_u\rangle
  \qquad 
  \forall v\in \partial \widehat{S}^\infty_u(\omega,0).
$$
From Theorem~8.30 in~\citep{R}, we have
$$
  \widehat{S}^\infty_u(\omega,-\bar k_u)=\sup\big\{\langle v, -\bar k_u\rangle\,\big|\,v\in \partial \widehat{S}^\infty_u(\omega,0)\big\}\le k^0_u,
$$ 
and strictly negative for $\tilde{u}$ and $\bar{\omega}$. Recall that, by definition of horizon function, $$\widehat{S}_{u}(\omega,-\bar k_u)\le \widehat{S}^\infty_{u}(\omega,-\bar k_u).$$ Therefore, the self-financing strategy $h$ with $h_u=0$ for $u\le t-1$ and $h_u-h_{u-1}=-\bar k_u$ for $t\le u\le T$, satisfies
 $$
\mathcal{V}_T(h) = \sum_{u=t}^{T} -\widehat{S}_{u}(\omega, h_u-h_{u-1})\ge -\sum_{u=t}^{T} k^0_u(\omega)=0,
$$
 where the last equality follows from \eqref{eq: representation}. Since the above inequality is strict for $\tilde{\omega}$, we deduce that $h$ is an arbitrage. Since $\widehat{K}$ is conical, the same considerations apply to $\alpha z$, for any $\alpha>0$.
This contradicts the robust no-arbitrage condition.
\end{proof}
\begin{proof}[Proof of Theorem \ref{thm: pennanen}]
 Necessity follows from Lemma \ref{lem: suff Penn}. Sufficiency follows from Lemma \ref{lem: necess Penn}.
\end{proof}

\begin{remark} Our findings are not directly comparable with those of \citep{P11}. First our results are shown without any reference probability measure and second the notion of arbitrage and dual elements are different. In particular, as the existence of a riskless asset $B$ is not assumed in \citep{P11}, the dual elements are martingale deflators as opposed to martingale measures. We also note that the Fundamental Theorem of Asset Pricing in \citep{P11} holds under the additional hypothesis that $S^\infty$ is finite.
In particular, Theorem 5.4 in \citep{P11} cannot be applied to models of superlinear transaction costs as in, e.g. \citep{CR07}.
We do not require this assumption here.
\end{remark}
\begin{remark}
In general, the family of correspondences $C\coloneqq(C_t)_{t\in\cI}$, with $C_t\coloneqq\{x\in\R^d| S_t(\cdot,x)\leq 0\}$, describes a market with physical delivery where the solvency region is convex rather than conical. These models have been studied in \citep{PP10} where the notion of $\NA^r$ is given in terms of recession cones. Namely, $C$ satisfies robust no-arbitrage if and only if $C^\infty$ satisfies $\NA^r$ as in Section \ref{section Kabanov}, where $C^\infty$ is the family of recession cones associated to $C$. It is clear that Theorem \ref{thm: FTAP Kabanov} extends in a straightforward way to this case. Note also that, as opposed to \citep{PP10}, portfolio constraints are also allowed in our model.
\end{remark}
    
\appendix
\section{Some technical tools}

\begin{lemma}\label{lem:appendix sum}
Let $A,\,B\colon\Omega\rightrightarrows\R^d$ be two $\cF_t$ measurable correspondences and let $\zeta\in\cL(\cF_t;A+B)$ be the selection of the sum. Then, we may write $\zeta = \eta + \theta$ with some $\eta\in\cL(\cF_t;A)$ and $\theta\in\cL(\cF_t;B)$.
\end{lemma}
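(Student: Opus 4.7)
The natural idea is to reduce this to a measurable selection statement. Define the correspondence
$$
  D(\omega)\coloneqq A(\omega)\cap\big(\zeta(\omega)-B(\omega)\big)
  =\big\{a\in A(\omega)\,\big|\,\zeta(\omega)-a\in B(\omega)\big\}.
$$
If I can produce an $\cF_t$-measurable selection $\eta$ of $D$, then setting $\theta\coloneqq\zeta-\eta$ gives $\theta\in\cL(\cF_t;B)$ automatically, and $\eta\in\cL(\cF_t;A)$ since $D\subset A$; measurability of $\theta$ follows from that of $\zeta$ and $\eta$. So the whole matter boils down to producing a measurable selection of $D$.

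The plan is as follows. First, $D(\omega)$ is nonempty for every $\omega$: by assumption $\zeta(\omega)\in A(\omega)+B(\omega)$, so some decomposition $\zeta(\omega)=a+b$ with $a\in A(\omega)$, $b\in B(\omega)$ exists, and that $a$ lies in $D(\omega)$. Second, I want $D$ to be $\cF_t$-measurable. The translated correspondence $\omega\mapsto\zeta(\omega)-B(\omega)$ is $\cF_t$-measurable, since $\zeta$ is $\cF_t$-measurable and $B$ is $\cF_t$-measurable (see, e.g.\ \citep{Aliprantis}, Chapter~18, under which measurability is preserved by translations and by the reflection $B\mapsto -B$). The measurability of the intersection $D=A\cap(\zeta-B)$ is the one step that is not purely formal for weakly measurable correspondences, but under the standing setting of the paper (applications invoke this lemma for closed-valued correspondences such as $K_t$, $(w^\sharp_{t+1})^\ast$, and $-C^\ast_t$, cf.\ Lemma~\ref{lem: polar sharp}) we may pass to graphs: $\gph A$ and $\gph(\zeta-B)$ are in $\cF_t\otimes\cB(\R^d)$, hence so is $\gph D=\gph A\cap\gph(\zeta-B)$.

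Finally, apply a measurable selection theorem to $D$ (Kuratowski--Ryll-Nardzewski when the values can be taken closed, or the Aumann/Jankov--von Neumann projection theorem using the measurability of $\gph D$) to extract an $\cF_t$-measurable selection $\eta$ of $D$. Setting $\theta\coloneqq\zeta-\eta$ completes the decomposition.

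The only delicate point is the measurability of the intersection $D$ when $A$ and $B$ are only assumed weakly measurable in the sense defined in Section~\ref{notation}: intersections of two weakly measurable correspondences need not be weakly measurable without an extra hypothesis (closed values, or one of them being open valued, etc.). In the applications of the paper this is automatic, so the proof will either invoke graph measurability directly or restrict to the closed-valued setting in which the lemma is actually used; either way the conclusion follows from a standard selection argument.
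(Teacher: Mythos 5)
Your argument is correct and essentially the same as the paper's: the paper takes a measurable selection of $(A\times B)\cap C$ with $C(\omega)=\{(x,y)\mid x+y=\zeta(\omega)\}$, which is exactly your correspondence $D=A\cap(\zeta(\cdot)-B)$ lifted to $\R^{2d}$, and both arguments reduce to measurability of an intersection followed by a standard measurable selection (the paper cites Proposition~14.11 and Theorem~14.13 of Rockafellar--Wets, which likewise presuppose closed values). The caveat you raise about intersecting merely weakly measurable correspondences applies equally to the paper's own proof and is harmless in the closed-valued situations where the lemma is actually invoked.
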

\begin{proof}
By Proposition~14.11(d) in~\citep{R} the correspondence $A\times B\colon\Omega\rightrightarrows\R^{2d}$ is measurable. The correspondence $C(\omega)\coloneqq\{(x,y)\in\R^d\times\R^d\,|\,x+y=\zeta(\omega)\}$ is measurable by Theorem~14.13(a) in~\citep{R}. It is enough to take any selection $(\eta,\theta)\in (A\times B)\cap C$; see Proposition~14.11(a) in~\citep{R}.
\end{proof}
\begin{lemma}\label{lem:selection ri}
	Let $U\colon\Omega\rightrightarrows\R^k$ be an $\cF_{t+1}$ measurable, convex valued correspondence (not necessarily closed). Then it admits a measurable selection taking values in its relative interior.
\end{lemma}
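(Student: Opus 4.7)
The plan is to reduce to the closure correspondence $\cl U$, take a Castaing representation, and form a strictly convex combination of its elements whose positive weights force the result into the relative interior.

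First, since $U$ is $\cF_{t+1}$-measurable, so is $\cl U$ (see Proposition~14.11 in \citep{R}), and $\cl U$ has closed, convex values. Because $\R^k$ is finite-dimensional and $U$ is convex (and nonempty on its domain), one has $\ri U = \ri \cl U$. By the Castaing representation theorem (Theorem~14.5 in \citep{R}) there exists a sequence $\{\xi_n\}_{n\in\N}$ of $\cF_{t+1}$-measurable selections of $\cl U$ such that $\{\xi_n(\omega)\}_n$ is dense in $\cl U(\omega)$ for every $\omega\in\dom U$.

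Next, I would define the measurable weights
\[
c_n(\omega) \coloneqq \frac{2^{-n}}{1+\|\xi_n(\omega)\|}, \qquad Z(\omega) \coloneqq \sum_{n=1}^{\infty} c_n(\omega),
\]
and set
\[
\eta(\omega) \coloneqq \frac{1}{Z(\omega)}\sum_{n=1}^{\infty} c_n(\omega)\,\xi_n(\omega).
\]
The bound $c_n(\omega)\|\xi_n(\omega)\|\leq 2^{-n}$ gives absolute convergence of the series, and $\eta$ is $\cF_{t+1}$-measurable as a pointwise limit of measurable maps. Each weight $c_n(\omega)/Z(\omega)$ is strictly positive and the weights sum to $1$, so $\eta(\omega)$ is a strict convex combination of points of $\cl U(\omega)$, and therefore lies in $\cl U(\omega)$ by closedness and convexity (passing to limits of finite partial sums).

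The heart of the argument — and the step I expect to require the most care — is verifying $\eta(\omega)\in\ri\cl U(\omega)$. I would proceed by contradiction: if $\eta(\omega)\notin\ri\cl U(\omega)$ then by proper separation inside $\aff\cl U(\omega)$ there exists a linear functional $\ell$ on $\R^k$ with $\ell(y)\leq\ell(\eta(\omega))$ for all $y\in\cl U(\omega)$ and $\ell(y^\ast)<\ell(\eta(\omega))$ for some $y^\ast\in\cl U(\omega)$. By density of $\{\xi_n(\omega)\}$ in $\cl U(\omega)$ there is an index $n_0$ with $\ell(\xi_{n_0}(\omega))<\ell(\eta(\omega))$. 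Combining the coordinatewise bound with strict positivity of $c_{n_0}(\omega)/Z(\omega)$ yields
\[
\ell(\eta(\omega)) = \sum_{n=1}^{\infty}\frac{c_n(\omega)}{Z(\omega)}\,\ell(\xi_n(\omega)) < \sum_{n=1}^{\infty}\frac{c_n(\omega)}{Z(\omega)}\,\ell(\eta(\omega)) = \ell(\eta(\omega)),
\]
a contradiction. Hence $\eta$ is a measurable selection of $\ri U=\ri\cl U$. The only subtle point is guaranteeing a Castaing representation of $\cl U$ (which is standard since $\cl U$ is closed-valued and measurable with respect to a $\cP$-complete $\sigma$-algebra) and invoking the proper separation theorem in the affine hull rather than in $\R^k$.
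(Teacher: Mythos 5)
Your proof is correct and follows the same core idea as the paper's: take a Castaing representation of the closure and form a countable convex combination with all weights strictly positive, which (by the supporting-hyperplane/density argument you give) must land in $\ri U=\ri\cl U$. The only real difference is the device used to make the series converge for unbounded $U$: the paper truncates to $\cl(U\cap B_n(0))$ and partitions $\Omega$ according to the first nonempty truncation, whereas you normalize the weights by $1+\lVert\xi_n(\omega)\rVert$ and divide by $Z(\omega)$ -- an equally valid (and arguably cleaner) route, since it works directly with $\cl U$ and avoids having to check that the relative interior of a ball-truncation is contained in $\ri U$.
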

\begin{proof}
	For each $n$ define an $\cF_{t+1}$-measurable, closed, convex valued correspondence $U^n\coloneqq\cl{(U\cap B_n(0))}$, where $B_n(0)$ is the closed ball around $0$ of radius $n$. Define a sequence of sets by $D_0\coloneqq\varnothing$ and $D_{n+1}\coloneqq\dom U^n\backslash D_n$ and let $(\zeta^n_k)_k$ be the Castaing representation of $U^n$. Then
	$$
	\zeta \coloneqq \sum_{n\in\N}\mathbbm{1}_{D_n}\sum_{k\in\N}2^{-k}\zeta^n_k
	$$
	is the sought for selection.
\end{proof}

\bibliographystyle{apalike}

\end{document}